\newcommand*{\QEDA}{\hfill\ensuremath{\square}}%
\newcommand{\m}{m}
\newcommand{\M}{M}
\newcommand{\e}{e}
\newcommand{\E}{E}
\newcommand{\qe}{q_e}
\renewcommand{\r}{r}
\newcommand{\R}{R}
\renewcommand{\b}{b}
\newcommand{\B}{B}
\newcommand{\deleq}{\stackrel{\Delta}{=}}
\newcommand{\xrzbopt}{x_r^{z*}(b)}
\renewcommand{\(}{\left(}
\renewcommand{\)}{\right)}
\newcommand{\xbaropt}{\xbar^{*}}
\newcommand{\thetafun}{\theta_r}
\newcommand{\zp}{z^1}
\newcommand{\zpp}{z^2}
\newcommand{\distre}{d_{r,e}}
\newcommand{\xrbbar}{\xbar_r(\bbar)}
\newcommand{\changetrip}{\Delta\alpha}
\newcommand{\changevm}{\Delta\beta}
\newcommand{\start}{z}
\newcommand{\delaym}{\ell_m}
\newcommand{\arrm}{\theta_m}
\newcommand{\ri}{r^i}
\newcommand{\Ri}{R^i}
\newcommand{\tripmi}{\alpha_m^i}
\newcommand{\xopti}{\x^{i*}}
\newcommand{\lambopti}{\lambda^{i*}}
\newcommand{\probi}{P^i}
\newcommand{\Ep}{E^1}
\newcommand{\Epp}{E^2}
\newcommand{\up}{u^1}
\newcommand{\tollp}{\toll^1}
\newcommand{\tollpp}{\toll^2}
\newcommand{\lambp}{\lambda^1}
\newcommand{\lambpp}{\lambda^2}
\newcommand{\Roptp}{R^{1*}}
\newcommand{\Roptpp}{R^{2*}}
\newcommand{\Ei}{E^i}
\newcommand{\Ropti}{R^{i*}}
\newcommand{\kopti}{k^{i*}}
\newcommand{\upp}{u^2}
\newcommand{\xhatopt}{\xhat^{*}}
\newcommand{\vrmbbar}{v_{m,r}(\bbar)}
\newcommand{\Ball}{\bar{\B}}
\newcommand{\ball}{\bar{\b}}
\newcommand{\Vbar}{V}
\newcommand{\Vrbbar}{\Vbar_r(\bbar)}
\newcommand{\rep}{h_r}
\newcommand{\bbar}{b}
\newcommand{\etamrz}{\eta_{m,r}}
\renewcommand{\l}{l}
\renewcommand{\L}{L}
\newcommand{\xbarr}{\bar{x}}
\newcommand{\yopt}{\xbarr^*}
\newcommand{\Ropt}{\R^{*}}
\newcommand{\barV}{\bar{V}}
\newcommand{\barVrz}{\bar{V}_r}
\newcommand{\barb}{\bar{b}}
\newcommand{\Vrb}{V_\r(\b)}
\newcommand{\x}{x}
\newcommand{\xp}{x^{1}}
\newcommand{\xpp}{x^2}
\newcommand{\trpp}{d_{\rpp}}
\newcommand{\trp}{d_{\rp}}
\newcommand{\rptwo}{r^{1'}}
\newcommand{\foptp}{f^{t,1*}}
\newcommand{\foptpp}{f^{t,2*}}
\newcommand{\that}{\hat{t}}
\newcommand{\fhatp}{\hat{f}^1}
\newcommand{\fhatpp}{\hat{f}^{2}}
\newcommand{\xopthatp}{\xhat^{1*}}
\newcommand{\xopthatpp}{\xhat^{2*}}
\newcommand{\G}{G}
\newcommand{\lambopt}{\lambda^*}
\newcommand{\tollopti}{\toll^{i*}}
\newcommand{\xxi}{x^i}
\newcommand{\hlbar}{\bar{h}_{l}}
\newcommand{\etalowl}{\lambda_l}
\newcommand{\phil}{\phi_{l}}
\newcommand{\ui}{u^i}
\newcommand{\tolli}{\toll^i}
\newcommand{\uopti}{u^{i*}}
\newcommand{\lambi}{\lambda^i}
\newcommand{\uoptp}{u^{1*}}
\newcommand{\uoptpp}{u^{2*}}
\newcommand{\Gi}{G^i}
\newcommand{\tripmp}{\alpha_m^1}
\newcommand{\tripmpp}{\alpha_m^2}
\newcommand{\lamboptp}{\lambda^{1*}}
\newcommand{\lamboptpp}{\lambda^{2*}}
\newcommand{\tolloptp}{\toll^{1*}}
\newcommand{\tolloptpp}{\toll^{2*}}
\newcommand{\xbar}{x}
\newcommand{\mhat}{\hat{m}}
\newcommand{\bhat}{\hat{b}}
\newcommand{\rhat}{\hat{r}}
\newcommand{\xrb}{\x_r(\b)}
\newcommand{\xrbopt}{\x_r^{*}(\b)}
\newcommand{\xoptmm}{\x^{*}_{-m}}
\newcommand{\pmopt}{p^{*}_{m}}
\newcommand{\p}{p}
\newcommand{\popt}{p^*}
\newcommand{\tr}{d_r}
\renewcommand{\pm}{p_m}
\newcommand{\tripm}{\alpha_m}
\newcommand{\vm}{\beta_m}
\newcommand{\toll}{\tau}
\newcommand{\koptrz}{k^{z*}_r}
\newcommand{\tollet}{\toll_e^t}
\newcommand{\um}{u_m}
\newcommand{\Bbar}{B}
\newcommand{\Sx}{S(\x)}
\newcommand{\X}{X}
\newcommand{\pmdag}{p^{\dagger}_m}
\newcommand{\tildeb}{\rep(\ball)}
\newcommand{\xhat}{\hat{x}}
\newcommand{\xhatp}{\xhat^{1}}
\newcommand{\xhatpp}{\xhat^{2}}
\newcommand{\jhat}{\hat{j}}
\newcommand{\bj}{b_j}
\newcommand{\tmin}{d_{min}}
\newcommand{\N}{N}
\newcommand{\tildeE}{\tilde{E}}
\newcommand{\rmin}{r_{min}}
\newcommand{\rphat}{\hat{r}^1}
\newcommand{\tildeR}{\tilde{R}}
\newcommand{\rpphat}{\hat{\r}^2}
\newcommand{\bjhat}{b_{\jhat}}
\newcommand{\lambdaopt}{\lambda^*}
\newcommand{\umaxm}{\u^{\dagger}_m}
\newcommand{\xopt}{\x^{*}}
\newcommand{\tildeq}{\tilde{q}}
\newcommand{\rp}{r^1}
\newcommand{\rpp}{r^2}
\newcommand{\Gp}{G^1}
\newcommand{\Gpp}{G^2}
\newcommand{\xoptp}{\x^{1*}}
\newcommand{\xoptpp}{x^{2*}}
\newcommand{\Xp}{X^1}
\newcommand{\Xpp}{X^2}
\newcommand{\Rpp}{R^2}
\newcommand{\Rp}{R^1}
\renewcommand{\u}{u}
\newcommand{\Uopt}{U^*}
\newcommand{\umax}{u^{\dagger}}
\newcommand{\Bhat}{\hat{B}}
\newcommand{\pdag}{p^{\dagger}}
\newcommand{\udag}{u^{\dagger}}
\newcommand{\tolldag}{\toll^{\dagger}}
\newcommand{\bl}{\ball_l}
\newcommand{\kopt}{k^{*}}
\newcommand{\fhat}{\hat{f}}
\renewcommand{\xopt}{x^{*}}
\newcommand{\uopt}{u^{*}}
\newcommand{\umopt}{u^{*}_m}
\newcommand{\tollopt}{\toll^{*}}
\newtheorem{theorem}{Theorem}
\newtheorem{lemma}{Lemma}
\newtheorem{proposition}{Proposition}
\newtheorem{claim}{Claim}
\theoremstyle{definition}
\newtheorem{example}{Example}
\newtheorem{definition}{Definition}
\DeclareMathOperator*{\argmin}{arg\,min}
\DeclareMathOperator*{\argmax}{arg\,max}
\begin{document}
\title{Market Design for Capacity Sharing in Networks}

\author{Saurabh Amin\\
\footnotesize{Operations Research Center, MIT, amins@mit.edu} \and 
Patrick Jaillet\\
\footnotesize{Operations Research Center, MIT, jaillet@mit.edu}\and
Haripriya Pulyassary\\
\footnotesize{Operations Research and Information Engineering, Cornell University, hp297@cornell.edu} \and
Manxi Wu\\
\footnotesize{Civil and Environmental Engineering, UC Berkeley, manxiwu@berkeley.edu}
} 

\date{}
\maketitle

\begin{center}
    (Authors are alphabetically ordered)
\end{center}

\begin{abstract}
We study a market mechanism that sets edge prices to incentivize strategic agents to efficiently share limited network capacity. In this market, agents form coalitions, with each coalition sharing a unit capacity of a selected route and making payments to cover edge prices. Our focus is on the existence and computation of market equilibrium, where challenges arise from the interdependence between coalition formation among strategic agents with heterogeneous preferences and route selection that induces a network flow under integral capacity constraints. To address this interplay between coalition formation and network capacity utilization, we introduce a novel approach based on combinatorial auction theory and network flow theory. We establish sufficient conditions on the network topology and agents' preferences that guarantee both the existence and polynomial-time computation of a market equilibrium. Additionally, we identify a particular market equilibrium that maximizes utilities for all agents and the outcome is equivalent to the classical Vickrey-Clarke-Groves mechanism. Furthermore, we extend our results to multi-period settings and general networks, showing that when the sufficient conditions are not met, an equilibrium may still exist but requires more complex, path-based pricing mechanisms that set differentiated prices based on agents' preference parameters.
\end{abstract}

\section{Introduction}\label{sec:introduction}
Transportation and logistics networks often face losses from congestion caused by inefficient use of limited capacity. One effective approach to improve resource utilization and reduce costs is to incentivize users to form resource-sharing coalitions. This can be done through a market-based approach that sets appropriate prices for capacity usage at specific times and locations. For example, setting edge toll prices for using road networks can incentivize agents to organize carpooling trips, reduce the number of vehicles on the road, and minimize congestion. Similarly, flexible and on-demand shipping applications can optimize deliveries and routes, reducing the number of cargoes and the fleet size required. By leveraging the complementarity between resource sharing and pricing, we can achieve substantial improvement in cost, time, and reducing environmental impacts for transportation and logistics networks.

The goal of this paper is to analyze a market mechanism that incentivizes strategic agents to share network capacities by forming coalitions. We begin with a basic static model that considers a network composed of resources modeled as directed edges with finite capacity. Agents form coalitions, and each coalition selects a route (a sequence of edges) that connects a source to a sink. We refer to the combination of an agent coalition and a route as a trip. Each trip utilizes one unit of capacity on all edges along the route, with the edge capacity limiting the number of trips that can use that edge. We consider trip value functions that are affine in the time cost with private and heterogeneous preference parameters that represent agents' sensitivity to travel time and the disutility of sharing capacity with others in the coalition.

The basic model considers a simple edge-based homogeneous pricing scheme: Each coalition pays a price for using a unit of capacity of an edge. The price is the same for all coalitions, and acts as the ``invisible hand" that governs agents' coalition formation and edge capacity utilization. When the price of an edge increases, agents are incentivized to either form coalitions to share trips and split the cost or switch to using the capacity of a less expensive route. Agents in each coalition make payments to cover the total cost of the chosen path. 
A competitive market equilibrium is defined by the trip allocation, edge prices, and agent payments that satisfy the following conditions: \emph{(i)} individual rationality—ensuring all agents have non-negative equilibrium utility; \emph{(ii)} stability—no coalition of agents has an incentive to deviate from the equilibrium trip; \emph{(iii)} budget balance—agents' payments cover the edge prices; and \emph{(iv)} market clearing—prices are charged only on edges where the edge capacity is saturated. 

{A central feature of our model is the interplay between coalition formation and capacity utilization in terms of the induced flow in the network. Each coalition corresponds to a group of agents jointly selecting a route and sharing one unit of capacity along that route. These trip-level decisions naturally induce a flow over the network. However, unlike classical flow models where the optimal network flow is determined independently of agent groupings, here the coalition structure and the induced flow are tightly coupled: the value of a trip depends on both the route chosen and the identity and preferences of the agents in the coalition. Conversely, edge prices affect not only the agents' route choices but also their incentives to form coalitions, since sharing costs alters their utility. This interdependence poses a major challenge: capacity utilization in terms of the network flow cannot be determined without simultaneously considering how coalitions form in equilibrium, and vice versa. }

Our focus is on the existence and computation of market equilibrium. We identify sufficient conditions under which a market equilibrium with homogeneous edge-based pricing exists and can be computed in polynomial time. Specifically, these conditions include a \emph{series-parallel} network topology and \emph{homogeneous disutilities of capacity sharing} among agents. To illustrate the "tightness" of these conditions, we provide two counterexamples (Examples \ref{ex:wheatstone} and \ref{ex:carpool}), each violating one of the conditions and resulting in the failure of equilibrium existence. 
When these sufficient conditions do not hold, we explore extensions to the pricing schemes that restore equilibrium existence. In particular, we show that equilibrium can still exist in general networks when path-based pricing is applied. Furthermore, for agents with heterogeneous capacity-sharing disutilities, equilibrium can be achieved by segmenting agents into separate markets based on their disutilities and differentiating pricing accordingly. Table \ref{table:market_equilibrium} provides a summary of these findings.

\begin{table}[ht]
\centering
\begin{tabular}{|>{\centering\arraybackslash}m{3.7cm}|>{\centering\arraybackslash}m{3.7cm}|>{\centering\arraybackslash}m{3.9cm}|}
\hline
& \textbf{Series-parallel network} & \textbf{Non-series-parallel network} \\
\hline
\textbf{Homogeneous capacity sharing disutility} & Edge-based pricing; homogeneous across agents (Theorem \ref{theorem:sp}) & Path-based pricing; homogeneous across agents (Proposition \ref{prop:extension}) \\
\hline
\textbf{Heterogeneous capacity sharing disutility} & Edge-based pricing; population-specific  (Proposition \ref{prop:extension}) &  Path-based pricing; population-specific (Proposition \ref{prop:extension})\\
\hline
\end{tabular}
\caption{Conditions for market equilibrium existence}
\label{table:market_equilibrium}
\end{table}


{We next present our approach for addressing equilibrium existence and computation, highlighting the distinct role each condition plays in ensuring existence and tractability. Our starting point is to formulate the optimal trip allocation problem as an integer program, and show that the existence of a market equilibrium is equivalent to the zero integrality gap in its associated linear relaxation (Proposition \ref{prop:primal_dual}). This connection between a zero integrality gap and market equilibrium has been established in various market design settings involving both divisible and indivisible goods. Although standard, this initial step allows us to transform the equilibrium existence problem into the analysis of the integer program and its linear relaxation for optimal trip allocation. However, since the set of all possible coalitions is {exponentially large}, the linear relaxation of the optimal trip allocation problem has exponential size, which is computationally challenging even without integer constraints. 

The series-parallel network condition plays a critical role in decoupling the two intertwined layers of the optimal trip allocation problem: coalition formation and network flow allocation. We show that, in series-parallel networks, there always exists an integer route flow that is optimal for all agent preferences and coalition formations (Lemma \ref{lemma:FF}), and this flow can be computed using a polynomial-time greedy algorithm (Algorithm \ref{alg:flow}). This key structural property allows us to compute the optimal capacity utilization represented as a network flow—independently of coalition formation, which is computed separately in the next step. In contrast, in non-series-parallel networks, no single route flow is optimal for all feasible coalition formations. The optimal flow depends on the coalition structure and may be fractional, leading to a non-zero integrality gap in the linear relaxation of the trip allocation problem and failure of equilibrium existence, as illustrated in Example \ref{ex:wheatstone}.

The condition of homogeneous capacity-sharing disutility plays a critical role in the tractability of coalition formation. We show that coalition formation and allocation to the constructed route flow is mathematically equivalent to a Walrasian equilibrium in an auxiliary economy, where agents act as “indivisible goods” and route capacity units as “buyers” with an augmented trip value function. The homogeneity assumption ensures that this augmented value function satisfies monotonicity and the gross substitutes property, even though the original value function does not. As a result, a Walrasian equilibrium exists in the auxiliary economy (following \cite{gul1999walrasian}), which translates directly to a market equilibrium in our setting. In contrast, Example \ref{ex:carpool} shows that when the homogeneous capacity-sharing disutility condition is violated, even if the series-parallel network condition holds, an equilibrium may fail to exist.}

Furthermore, building on the connection between market equilibrium and the Walrasian equilibrium of the auxiliary economy, we identify a particular market equilibrium in which trip organization and payments align with the Vickrey-Clarke-Groves (VCG) mechanism. This structure enables a central planner to implement the equilibrium outcome without knowing agents’ private preferences. Notably, this equilibrium achieves the highest agent utilities among all possible market equilibria and collects only the minimum total edge prices (Theorem \ref{theorem:strategyproof}). This result leverages the lattice structure of Walrasian equilibria in the constructed auxiliary economy, as established in \cite{gul1999walrasian}.

The proof of Theorem \ref{theorem:sp} also enables a two-step polynomial-time algorithm for computing the market equilibrium, where we first compute the equilibrium route flow using a greedy algorithm (Algorithm \ref{alg:flow}), and then compute the equilibrium coalitions given the route flow capacity as the Walrasian equilibrium in the equivalent economy with augmented trip functions (Algorithm \ref{alg:allocation} in 
{Appendix \ref{ap:alg}}). {Several different approaches have been developed for computing Walrasian equilibrium in markets with indivisible goods and value functions that satisfy gross substitutes conditions. These include the original Kelso-Crawford algorithms \cite{kelso1982job}, sub-gradient based algorithms (\cite{ausubel2002ascending, parkes1999bundle}), primal-dual algorithms (\cite{de2007ascending}), ellipsoid method-based algorithm (\cite{nisan2006communication}) and combinatorial algorithms based on discrete convex analysis (\cite{murota2003application, murota1996valuated, paes2020computing})}. We build on the well-known Kelso-Crawford algorithm (\cite{kelso1982job}), but modify it to ensure an efficient way to iteratively compute the augmented trip value functions.

On general networks with multiple origin-destination pairs and heterogeneous capacity-sharing disutilities, the two sufficient conditions are violated, and market equilibrium may fail to exist under edge-based pricing that is homogeneous across agents. We show that equilibrium can still be achieved by adopting route-based pricing and serving agent populations in separate markets based on their capacity-sharing disutilities. However, computing the optimal capacity allocation across these populations is NP-hard. To address this, we develop a branch-and-price algorithm (Algorithm \ref{alg:branchAndPrice} in Appendix \ref{apx:multi}) to compute market equilibrium in such general settings.

Finally, in Section \ref{sec:multipop}, we extend our static model to a multiple-period model, where agents have heterogeneous preferred arrival times and late arrival penalties, and they form coalitions to choose both the route and departure time. We show that the same equilibrium existence and computation results from the static model extend to the multi-period model. In multi-period setting, the series-parallel network condition again plays a crucial role as we show that a temporally repeated flow computed by greedy algorithm is the optimal trip route flow on the series-parallel network regardless of agents' preferences and coalition formation (Lemma \ref{lemma:FF_dynamic}). Under the condition of homogeneous capacity sharing disutility condition, the coalition formation and allocation to routes and departure times given the computed temporally repeated flow follows from that in the static model. We test our algorithm in the numerical example of designing a carpool market using the data from the California Bay Area highway network (Appendix \ref{subsec:numerical}).

\medskip 
\noindent\textbf{Literature review.}
Cost sharing in network resource utilization has been studied in various contexts. In particular, the competitive market framework was adopted by  \cite{ostrovsky2019carpooling} to model network capacity sharing in the context of autonomous carpooling markets. The authors considered general trip value function in a static model, and defined the concept of market equilibrium. They demonstrated that a market equilibrium, when it exists, maximizes social welfare by applying the first welfare theorem. However, their article did not investigate the equilibrium existence and computation complexity. Our paper has a distinct focus on existence and polynomial time equilibrium computation in both static and multi-period settings. We also studied the extensions from edge-based homogeneous pricing to more complex pricing mechanisms that guarantee equilibrium existence in general network and preference parameter settings. 

One line of related work concerns cost sharing in cooperative traveling salesman games, where a group of agents shares a delivery tour and seeks a stable division of costs. The paper \cite{potters1992traveling} introduced the traveling salesman game and demonstrated that the core may be empty even with symmetric costs. Follow-up studies focused on characterizing stable cost sharing and bounding the budget-balance gap for both symmetric \cite{faigle1993some, faigle1998approximately, tamir1989core} and asymmetric traveling salesman games \cite{blaser2008approximately, toriello2013traveling}. Our model differs from cooperative traveling salesman games in  that the underlying optimization problem in our setting is a network flow problem with capacity sharing on edges, whereas the traveling salesman games literature centers on constructing traveling salesman tours. Additionally, while the traveling salesman game assumes fixed costs (e.g., the length of the optimal tour), our model features edge prices that are endogenously determined through market equilibrium. Like the traveling salesman games literature, our analysis leverages linear programming relaxations; however, our focus lies in identifying conditions under which market equilibrium exists (i.e., when the integrality gap of the linear programming relaxation is zero), rather than in approximating the core or analyzing the integrality gap.

Furthermore, a separate line of literature has examined cost sharing in network design, where agents construct edges to connect source-sink pairs and share the associated costs—either equally or according to some rule \cite{chekuri2006non, epstein2007strong, von2013optimal, anshelevich2008price, chen2010designing, albers2009value, hoefer2011competitive, harks2021efficient}. Cost sharing has also been explored in scheduling and congestion games, where the cost of each resource varies with aggregate agent utilization, and agents may split the costs among themselves. This involves the cooperative setting (\cite{fotakis2008atomic, kollias2011restoring, rozenfeld2006strong, christodoulou2004coordination, immorlica2009coordination, cole2011inner, gkatzelis2016optimal, holzman2015strong}) and non-cooperative routing with proportional cost splitting (\cite{harks2011worst, johari2004efficiency}). In cooperative settings, strong Nash equilibrium is proposed as a solution concept, describing a scenario where no coalition of agents can improve their outcomes (\cite{aumann1959acceptable, epstein2007strong, rozenfeld2006strong, holzman1997strong, andelman2009strong, ferguson2023collaborative}). Previous studies have shown that strong Nash equilibria may not always exist, and have also characterized the bounds of the price of anarchy and the price of stability for strong Nash equilibria when they exist, e.g. \cite{epstein2007strong, chen2010designing, von2013optimal}. The paper \cite{epstein2007strong} showed that strong Nash equilibrium exists in fair and general network connection games when the network is series-parallel. Additionally, another paper \cite{hao2024price}  showed that the bound of price of anarchy in routing games is also lower in a series-parallel network than that in a general network.

Our model and results differ from the existing lines of work in two key aspects. First, in our model, the network is fixed with edges having fixed and finite capacities, while the price of using a unit of capacity on each edge is endogenously determined by the market equilibrium outcome. This contrasts with the cost-sharing literature for network design, where the cost of building an edge is fixed and agents decide which edges to be constructed. Second, agents with heterogeneous preferences can form coalitions to share one unit of capacity, with the formation of such coalitions being influenced by the edge prices and the way these prices are divided among agents to align their incentives. Such coalition formation leads to reduced capacity usage, an aspect not explored in previous works on capacity sharing and network routing with cooperative agents. These two features necessitate developing the new approach in this work to study the market equilibrium that accounts for both the network flow and coalition formation. Our work establishes new methods and results on equilibrium existence and computation for market equilibrium in capacity-sharing networks.

\section{The static basic model}\label{sec:model}
For notation brevity, we present our main results in the static basic model. All results extend to the multi-period model, which is detailed in Section \ref{subsec:extension_1}. We consider a network with a single source and a single sink connected by a set of resources $E$ modeled as directed edges. The capacity of each edge $\e \in \E$ is a positive integer $\qe \in \mathbb{N}_{+}$, and the time cost of traversing the edge is a positive real number $d_e \in \mathbb{R}_{>0}$. A route $\r \in \R$ is a sequence of edges that connect the source and the sink with time cost $d_r = \sum_{e \in r} d_e$. The set of all routes is $\R$. A finite set of agents $M =\{1, \dots, |M|\}$ form coalitions with coalition size at most $A$. We denote agent coalition as $\b \in \Bbar \deleq \left\{b \in 2^{M}\left\vert~ |\b|\leq A \right.\right\}$, where $\Bbar$ is the set of all feasible coalitions. Each coalition $\b \in \B$ chooses a single route $\r$ and consumes one unit of  capacity for each edge $e \in \r$. We define a tuple $(\b, \r)$ as a \emph{trip}, and the set of all feasible trips is $\B \times \R$. We allow an agent to leave the market if that agent does not join any coalition and does not choose any route. 




For each agent $m \in M$, the value of a trip $\left(\bbar, \r\right)$ such that $\b \ni \m$ is an affine function of the route time cost, and depends on the coalition size:   \begin{align}\label{eq:m_valuation_trip}
\vrmbbar &= \tripm - \vm \tr - \left(\changetrip_m(|b|)+ \changevm_m(|b|)\tr\right). 
\end{align}
where $\tripm, \vm\geq 0$ are the single-agent trip preference parameters, and
$\changetrip_m(|\b|), \changevm_m(|b|) \geq 0$ are agent $m$'s \emph{capacity sharing disutility parameters} for all $|b|=1, \dots, A$. We set $\changetrip_m(1)= \changevm_m(1)=0$ since coalition size $|b|=1$ indicates that agent $m$ does not share the capacity with any other agent. Moreover, $\changetrip_m(|\b|+1)- \changetrip_m(|\b|), \changevm_m(|\b|+1)- \changevm_m(|\b|) \geq 0$, and are non-decreasing in the coalition size $|\b|$ for all $|\b|=1, \dots, A-1$ and all $m\in M$. 
This indicates that sharing capacity with other agents in the coalition decreases agent's valuation of the trip, and the marginal decrease of valuation increases in the size of the coalition. The total value of each trip $\(\bbar, \r\)$ is the summation of the trip values for all agents in $\bbar$:  
\begin{align}\label{eq:value_of_trip}
    \Vrbbar &= \sum_{\m \in \bbar} \vrmbbar.
\end{align}

We consider a market outcome represented by a tuple $\left(\xbar, \p, \toll\right)$, where $\xbar$ is the trip allocation vector, $\p = \left(\pm\right)_{\m \in \M}$ is the payment vector, where $\pm$ is the payment charged from agent $m$, and $\toll=\left(\toll_e\right)_{\e \in \E} \in \mathbb{R}_{\geq 0}^{|\E|}$ is the edge price vector, where $\toll_e$ is the price for using a unit capacity of edge $\e$. Given the edge price vector $\toll$, the price for a trip $(\bbar, \r)$ equals $\sum_{\e \in \r} \toll_e$. The trip allocation vector is $\xbar=\left(\xrbbar\right)_{(\b, \r)\in \B \times \R} \in \{0,1\}^{|\B| \times |\R|}$, where $\xrbbar=1$ if one unit capacity on route $r$ is allocated to trip $\left(\bbar, \r\right)$ and $\xrbbar=0$ otherwise. A feasible allocation vector $\x$ must satisfy the following constraints:
\begin{subequations}\label{eq:feasible_x}
\begin{align}
   \sum_{\r \in \R}\sum_{\b \ni \m} \xrbbar &\leq 1, \quad \forall \m \in \M,  \label{subeq:at_most_one}\\
\sum_{\b \in \B} \sum_{\r \ni e}\xrbbar & \leq \qe, \quad \forall \e \in \E, \label{subeq:edge_capacity}\\
    \xrbbar &\in \{0, 1\}, \quad \forall (\b, \r) \in \B \times \R, \label{subeq:int}
\end{align}
\end{subequations}
where \eqref{subeq:at_most_one} ensures that no agent is allocated to more than 1 trip, and \eqref{subeq:edge_capacity} ensures that the total number of trips allocated to use edge $\e \in \E$ does not exceed the edge capacity $\qe$. Given any $\(\xbar, \p, \toll\)$, the utility of each agent $\m \in \M$ equals to the value of the trip that $\m$ takes minus the payment:
\begin{align}\label{eq:u_p}
    \um = \sum_{r \in R} \sum_{\b \ni \m} \vrmbbar \xrbbar - \pm, \quad \forall \m \in \M. 
\end{align}

We define the market equilibrium as an outcome $\left(\xopt, \popt, \tollopt\right)$ that satisfies four properties -- \emph{individual rationality}, \emph{stability}, \emph{budget balance}, and \emph{market clearing}. 
\begin{definition}\label{def:market_equilibrium}
A market outcome $\left(\xopt, \popt, \tollopt\)$ is an equilibrium if it satisfies
\begin{enumerate}
\item \emph{Individually rationality:} All agents' utilities $u^*$ as in \eqref{eq:u_p} are non-negative, i.e.
\begin{align}\label{eq:ir}
    \uopt_m \geq 0, \quad \forall \m \in \M.
\end{align}
\item \emph{Stability:} No coalition in $\B$ can gain higher total utility by organizing a different trip:
\begin{align}\label{eq:stability}
    \sum_{\m \in \bbar} \uopt_m \geq \Vrbbar- \sum_{\e \in \r} \toll_e^*, \quad \forall (\b, \r) \in \B \times \R. 
\end{align}
\item \emph{Budget balance:} The total payments of agents in a coalition equals to the sum of the edge prices of the trip. An agent's payment is zero if they are not in any coalition:
 \begin{subequations}\label{eq:trip_bb}
\begin{align}
    &x^*_r(\bbar)=1 \quad \Rightarrow \quad \sum_{\m \in \bbar} \popt_m = \sum_{\e \in \r} \toll_e^{*}, \quad \forall (\b, \r) \in \B \times \R, \label{subeq:bb}\\
    &x^*_r(\b)=0, \quad \forall (\b, \r) \in \{\B \times \R | \b \ni \m\} \quad \Rightarrow \quad \popt_m = 0, \quad \forall \m \in \M. \label{subeq:p_not_assigned}
\end{align}
\end{subequations}
\item \emph{Market clearing:} For any edge $e \in E$, the edge price $\toll_e^{*}$ is zero when the number of trips allocated to edge $e$ is below the edge capacity:
\begin{align}\label{eq:mc}
    \sum_{(\b, \r) \in \{\B \times \R | \r \ni \e\}} x_{r}^{*}(b)  < \qe \quad \Rightarrow \quad \toll_{e}^{*}=0, \quad \forall \e \in \E. 
\end{align}
\end{enumerate}
\end{definition}






In Definition \ref{def:market_equilibrium}, individual rationality condition \eqref{eq:ir} prevents agents from opting out of the market. The stability condition in \eqref{eq:stability} ensures that the total utility $\sum_{m \in \bbar} u_m^{*}$ for any coalition $\bbar$ induced by the market equilibrium $(\xopt, \popt, \tollopt)$ is higher or equal to the maximum total utility that can be obtained by $\bbar$ taking any feasible trip $\left(\bbar, \r\right)$, which equals to the trip value $V_r(\bbar)$ minus the edge prices $\sum_{\e \in \r} \toll_e^{*}$. Thus, agents have no incentive to form another coalition not in the equilibrium. The individual rationality condition and the stability condition together guarantee that agents will follow the equilibrium trip allocation. The budget balance condition guarantees that the payments cover edge prices, while market clearing ensures existence of non-zero equilibrium edge prices only on fully utilized edges. The following integer program solves the socially optimal allocation vector that maximizes the total value $S(x)$ of all trip allocation: 
\begin{equation}\tag{$\mathrm{IP}$}\label{eq:IP}
\begin{split}
    \max_{\xbar} \quad &S(\xbar)=\sum_{(\b, \r) \in \B \times \R}\Vrbbar \xrbbar, \quad 
    s.t. \quad  \text{$\xbar$ satisfies \eqref{subeq:at_most_one} -- \eqref{subeq:int}.}
    \end{split}
\end{equation}

We introduce the linear relaxation of \eqref{eq:IP} as the primal linear program:  
\begin{subequations}
        \makeatletter
        \def\@currentlabel{$\mathrm{LP}$}
        \makeatother
        \label{eq:LP1bar}
        \renewcommand{\theequation}{$\mathrm{LP}$.\alph{equation}}
    \begin{align}
         \max_{\xbar} \quad &\Sx=\sum_{(\b, \r) \in \B \times \R}\Vrbbar \xrbbar, \notag\\
  s.t. \quad   &\sum_{\b \ni \m}\sum_{\r \in \R} \xrbbar \leq 1, \quad \forall \m \in \M,  \label{subeq:LP11}\\
&\sum_{\b \in B} \sum_{\r \in \R}\xrbbar  \leq \qe, \quad \forall \e \in \E, \label{subeq:LP12}\\
    &\xrbbar \geq 0, \quad \forall (\b, \r) \in \B \times \R. \label{subeq:LP13}
    \end{align}
\end{subequations}Note that the constraint $\xrbbar \leq 1$ is implicitly included in \eqref{subeq:LP11}, and thus is omitted. By introducing dual variables $\u=\(\um\)_{\m \in \M}$ for constraints \eqref{subeq:LP11} and $\toll = \(\toll_e\)_{\e \in \E}$ for constraints \eqref{subeq:LP12}, the dual of \eqref{eq:LP1bar} can be written as follows: 
\begin{subequations}
        \makeatletter
        \def\@currentlabel{$\mathrm{D}$}
        \makeatother
        \label{eq:D1bar}
        \renewcommand{\theequation}{$\mathrm{D}$.\alph{equation}}
\begin{align}
    \min_{\u, \toll} \quad &U(\u, \toll)= \sum_{\m \in \M} \um+\sum_{\e \in \E} \qe \toll_e, \notag \\
s.t. \quad & \sum_{\m \in \bbar} \um +\sum_{\e \in \r} \toll_e \geq \Vrbbar, \quad \forall (\b, \r) \in \B \times \R, \label{subeq:D11}\\
        & \um \geq 0, \quad \toll_e \geq 0, \quad \forall \m \in \M, \quad \forall \e \in \E. \label{subeq:D12}
\end{align}
\end{subequations}
The dual variables $\u = \left(\um\right)_{\m \in \M}$ and $\toll = \left(\toll_e\right)_{\e \in \E}$ can be viewed as agents' utilities and the edge prices, respectively. In \eqref{eq:D1bar}, the objective $U(\u, \tau)$ equals the sum of all agents' utilities and the total collected edge prices, and \eqref{subeq:D11} is the same as the stability condition in \eqref{eq:stability}. 
It is well-established in the literature that the existence of market equilibrium is equivalent to the primal linear program (LP) having an integral optimal solution, see for example in the book \citet{vohra2011mechanism}. For completeness, we state this result as the following proposition.
\medskip 

\begin{proposition}\label{prop:primal_dual}
A market equilibrium $\(\xbaropt, \popt, \tollopt\)$ exists if and only if \eqref{eq:LP1bar} has an integer optimal solution. Any integer optimal solution $\xbaropt$ of \eqref{eq:LP1bar} is an equilibrium trip  allocation vector, and any optimal solution $\(\uopt, \tollopt\)$ of \eqref{eq:D1bar} is an equilibrium utility vector and an equilibrium edge price vector. The equilibrium payment vector $\popt$ is given by:  
\begin{align}\label{eq:p}
    \pmopt=\sum_{\r \in \R} \sum_{\bbar \ni \m} \xrbopt \vrmbbar - \uopt_m, \quad \forall \m \in \M.
\end{align}
\end{proposition}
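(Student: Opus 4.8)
The plan is to prove both directions through linear programming duality and complementary slackness, exploiting the fact that the dual program \eqref{eq:D1bar} has been engineered so that its feasible region encodes exactly the individual rationality and stability conditions, while the two complementary slackness relations encode market clearing and (together with the payment rule \eqref{eq:p}) budget balance. Concretely, I would set up the dictionary: dual feasibility $\um \ge 0$ in \eqref{subeq:D12} is individual rationality \eqref{eq:ir}; dual feasibility \eqref{subeq:D11} is stability \eqref{eq:stability}; complementary slackness on the capacity dual variable $\tollet$ is market clearing \eqref{eq:mc}; and complementary slackness on the primal trip variables, combined with \eqref{eq:p}, is budget balance \eqref{eq:trip_bb}. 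Once this correspondence is in place, the proposition reduces to verifying it carefully in each direction.

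For the ``if'' direction I would start from an optimal integer solution $\xbaropt$ of \eqref{eq:LP1bar}. Since \eqref{eq:LP1bar} is feasible (take $\xbar=0$) and its feasible region is bounded (finitely many trips, each variable at most $1$ by \eqref{subeq:LP11}), strong duality applies and an optimal dual solution $(\uopt,\tollopt)$ of \eqref{eq:D1bar} exists with complementary slackness holding against $\xbaropt$. I define $\popt$ by \eqref{eq:p}. Individual rationality and stability are then immediate from dual feasibility, and market clearing is the contrapositive of complementary slackness on $\tollet$ (using that only trips with $\e\in\r$ occupy the capacity of edge $\e$). The one step requiring algebra is budget balance: for an organized trip with $\xrzbopt=1$, constraint \eqref{subeq:at_most_one} forces each $\m\in\b$ to take only this trip, so \eqref{eq:p} gives $\pmopt=\vrmbbar-\umopt$; summing over $\m\in\b$ and substituting the tight dual constraint \eqref{subeq:D11} (primal complementary slackness) together with the definition \eqref{eq:value_of_trip} of $\Vrbbar$ collapses this to $\sum_{\m\in\b}\pmopt=\sum_{\e\in\r}\toll_e^{z+\distre*}+\crbbar$, which is \eqref{subeq:bb}. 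For an unassigned agent, complementary slackness on $\um$ forces $\umopt=0$, so \eqref{eq:p} gives $\pmopt=0$, which is \eqref{subeq:p_not_assigned}.

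For the ``only if'' direction I would take an arbitrary market equilibrium $(\xbaropt,\popt,\tollopt)$ and show $\xbaropt$ is LP-optimal by exhibiting a matching dual objective. Setting $\uopt$ equal to the equilibrium utilities \eqref{eq:u_p}, individual rationality and stability say precisely that $(\uopt,\tollopt)$ is feasible for \eqref{eq:D1bar}; by weak duality it then suffices to show $S(\xbaropt)=U(\uopt,\tollopt)$. I would compute $\sum_\m\umopt$ by summing \eqref{eq:u_p}, replacing $\sum_\m\pmopt$ via budget balance (each agent lies in at most one organized trip, so the total payment equals the per-trip payments summed over organized trips), and rewriting with \eqref{eq:value_of_trip} to obtain $\sum_\m\umopt=S(\xbaropt)-\sum_{(\start,\r,\b)}\xrzbopt\sum_{\e\in\r}\toll_e^{z+\distre*}$. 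Re-indexing the toll term by the time $t=z+\distre$ at which a trip enters edge $\e$, and invoking market clearing to replace the saturated trip count by $\qe$ wherever $\toll_e^{t*}>0$, turns the last sum into $\sum_t\sum_\e\qe\toll_e^{t*}$; substituting and rearranging yields $S(\xbaropt)=\sum_\m\umopt+\sum_t\sum_\e\qe\toll_e^{t*}=U(\uopt,\tollopt)$, so both are optimal and $\xbaropt$ is an optimal integer solution.

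The main obstacle I anticipate is bookkeeping rather than conceptual: keeping the time shift $z+\distre=t$ consistent when converting between the ``per-trip'' toll sum $\sum_{\e\in\r}\toll_e^{z+\distre}$ appearing in stability and budget balance and the ``per-edge-per-time'' toll sum $\sum_t\sum_\e\qe\tollet$ appearing in the dual objective and market clearing. This reindexing is exactly where the capacity constraint \eqref{subeq:LP12} and the market-clearing condition \eqref{eq:mc} must be aligned, and it is the crux of both the budget-balance verification and the objective-matching computation. Everything else follows mechanically once the dual-feasibility and complementary-slackness dictionary is established.
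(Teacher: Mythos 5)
Your proposal is correct and follows essentially the same route as the paper: both proofs establish the dictionary between the four equilibrium conditions and primal/dual feasibility plus the complementary slackness conditions of \eqref{eq:LP1bar} and \eqref{eq:D1bar}, then invoke LP duality (your weak-duality objective-matching computation in the converse direction is just the standard explicit form of ``feasibility plus complementary slackness implies joint optimality''). Your write-up is somewhat more detailed than the paper's on the budget-balance algebra and the reindexing of the toll sum, but there is no substantive difference in approach.
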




Even if we ignore the integer constraints, the linear programs \eqref{eq:LP1bar} and \eqref{eq:D1bar} cannot be directly used to compute market equilibrium  because the primal (resp. dual) program has an exponential number of variables (resp. constraints). When \eqref{eq:LP1bar} does not have an integral solution, Proposition \ref{prop:primal_dual} indicates that even if \eqref{eq:IP} is solved, there does not exist an edge price vector and a payment vector to implement the optimal trip allocation so that agents are willing to optimally share the network capacity. In Sections \ref{sec:tractable_pooling}, we provide conditions that guarantee the existence and tractability of integer solution in \eqref{eq:LP1bar}, and algorithms to compute the market equilibrium. We extend the results of the tractable case to general networks and add the temporal dimension of trip allocation in Section \ref{sec:multipop}.  

\section{Existence and properties of market equilibrium}\label{sec:tractable_pooling}We provide sufficient conditions that guarantee market equilibrium existence in Section \ref{subsec:sufficient}, and present the equilibrium computation in Section \ref{sec:algorithm}. In Section \ref{sec:strategyproof}, we identify a market equilibrium that is equivalent to the outcome of the classical Vickrey–Clarke–Groves (VCG) mechanism. This equilibrium achieves the maximum utility of each player among all market equilibria. 
\subsection{Sufficient conditions for equilibrium existence}\label{subsec:sufficient}
Before we present the results, we first introduce the definition of a series-parallel network: 

\medskip 
\begin{definition}[Series-parallel network \cite{duffin1965topology, valdes1979recognition}]\label{def:series_parallel}  A series-parallel network is a graph that can be recursively constructed from a single-edge network using a finite number of series and parallel compositions. Formally, (i) a single-edge network is a series-parallel network; (ii) Series composition merges the sink of one series-parallel sub-network to the source of another series-parallel sub-network; (iii) Parallel composition combines two series-parallel sub-networks by merging their respective sources and sinks. 
\end{definition}\citet{milchtaich2006network} showed that a network is series-parallel if and only if a Wheatstone structure as in Figure \ref{fig:network} cannot be embedded in the network.
\begin{figure}[htp]
\centering
\includegraphics[width=0.3\textwidth]{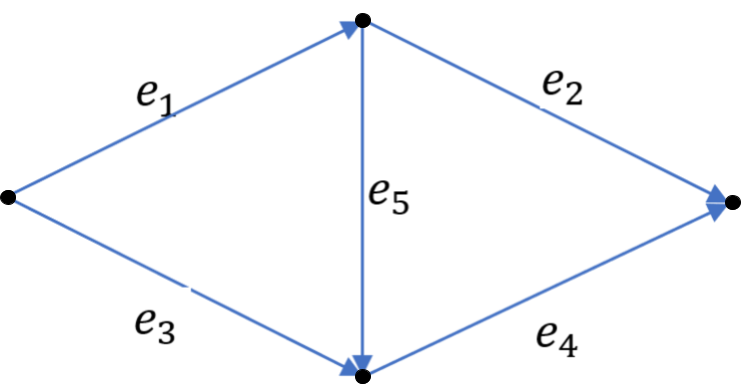}
    \caption{A Wheatstone network.}
\label{fig:network}
\end{figure}
\begin{theorem}\label{theorem:sp}
Market equilibrium $\left(\xopt, \popt, \tollopt\right)$ exists if (i) the network is series-parallel, and (ii) the capacity sharing disutility parameters are homogeneous among all agents:  
\begin{align}\label{eq:homogeneous}
     \changetrip_m(|b|) =\changetrip(|\b|), \quad \changevm_m(|\b|) = \changevm(|\b|), \quad  \forall |b|=2, \dots, A, \quad \forall \m \in \M. 
\end{align} 
\end{theorem}

Theorem \ref{theorem:sp} shows that the sufficient condition for equilibrium existence includes both the condition on network topology -- being series-parallel -- and the condition on capacity sharing disutility parameters -- being identical. We note that the homogeneous capacity sharing disutility condition still allows agents to have heterogeneous trip values -- agents can have different parameters $\alpha_m$ and $\beta_m$ in \eqref{eq:m_valuation_trip}.

 We next provide two examples illustrating that market equilibrium may fail to exist when either of the two sufficient conditions in Theorem \ref{theorem:sp} is violated. In particular, Example~\ref{ex:wheatstone} considers a Wheatstone network, which violates the series-parallel condition, while all agents have zero capacity-sharing disutility. Example~\ref{ex:carpool}, on the other hand, considers a simple network with two parallel routes and two subsets of agents that differ in their capacity-sharing disutilities, thereby violating the homogeneity condition. In both examples, market equilibrium does not exist. 

\medskip 
\begin{example}\label{ex:wheatstone}
{\normalfont Consider the Wheatstone network as in Figure \ref{fig:network}. The capacity of each edge in $\{e_1, e_2, e_3, e_4\}$ is 1, and the capacity of edge $e_5$ is 4. The time cost of each edge is given by $d_{e_1}=1$, $d_{e_2}=2$, $d_{e_3}=2$, $d_{e_4}=1$, and $d_{e_5}=0.2$.  The maximum coalition size is $A=2$. Three agents $\m=1, 2, 3$ have identical preference parameters: $\tripm=6$, $\vm=1$, $\changetrip_m(|b|)=0$ and $\changevm_m(|b|)=0$ for any $|b|=1,2$ and any $\m \in \M$. 
We define route $e_1$-$e_2$ as $r_1$, $e_1$-$e_5$-$e_4$ as $r_2$, and $e_3$-$e_4$ as $r_3$. Trip values are: $V_{r_1}(\{m\})=V_{r_3}(\{m\})=3$, and $V_{r_2}(\{m\})=3.8$ for all $m \in M$; $V_{r_1}(\{m, m'\}) = V_{r_3}(\{m, m'\})  =6$, and $V_{r_2}(\{m, m'\})=7.6$ for all $m, m' \in \M$.  The unique optimal solution of the linear program \eqref{eq:LP1bar} is $x^*_{r_1}(\{1, 2\})=x^*_{r_2}(\{2, 3\})=x^*_{r_3}(\{1, 3\})=0.5$, and $S(\xopt)=9.8$. That is, \eqref{eq:LP1bar} does not have an integer optimal solution, and market equilibrium does not exist (Proposition \ref{prop:primal_dual}).}\hfill $\square$
\end{example}

\medskip
\begin{example}\label{ex:carpool}
    {\normalfont 
Consider a network with two parallel edges $e_1, e_2$. Both edges have a capacity of 1 and a time cost of $d_{e_1} = d_{e_2} = 1$. The maximum coalition size is $A = 6$. Twelve agents participate in this market. Agents $1, 2, \ldots, 6$ have the following preference parameters: $\alpha_m = 50$, $\beta_m = 1/6$, $\changetrip_m(|b|) = 0.25(|b| - 1)$ for $ |b| \leq 5$, and $ \changetrip_m(|b|) = 0.5(|b|-1)$ for $|b| =6$. Agents $7, 8, \ldots, 12$ have parameters $\alpha_m = 100$, $\beta_m = 0.5$, $\changetrip_m(|b|) = 2(|b| - 1)$ if $|b| \leq 4$ and infinity otherwise. 
Furthermore $\changevm_m(|b|) = 0$ for any $|b| = 1, 2, 3, 4, 5, 6$, and any $m \in M$. The optimal solution to the LP-relaxation is $
    x_{e_1}(\{1, 2, 3, 4, 5, 6\}) = 0.5$, $x_{e_1}(\{9, 10, 11, 12\}) = 0.5$, $x_{e_2}(\{7, 8, 10, 12\}) = 0.5$, $x_{e_2}(\{7, 8, 9, 11\}) = 0.5$. This solution has a value of 662.5. The integer optimal solution schedules the trip $\{1, 2, 3, 4, 5, 6\}$ at time 1 on $e_1$, and $\{9, 10, 11, 12\}$ at time 1 on $e_2$; this solution has value of $621< 662.5$. This indicates that the LP relaxation does not have an integer optimal solution, and thus market equilibrium does not exist.}
    \hfill $\square$
\end{example}

\medskip 

Theorem \ref{theorem:sp} demonstrates that network topology is crucial for the stability and efficiency of capacity sharing in the market. In step 1 of the theorem proof (Lemma \ref{lemma:FF}), we show that when the network is series-parallel, an integer route flow computed by a greedy algorithm serves as the optimal trip flow, regardless of agents' preferences and how they form coalitions. 
This effectively decouples the optimal trip allocation into two independent parts: first, compute the optimal integer route flow, and then form the optimal agent coalitions that utilize the allocated route capacity according to the flow. However, the conclusion of Lemma \ref{lemma:FF} may not hold when the series-parallel network condition is violated. As shown in Example \ref{ex:wheatstone}, in a non-series-parallel network, the route flow induced by the optimal trip allocation might be fractional and cannot be obtained through a greedy algorithm.

Furthermore, in step 2 of the theorem proof, we show that with homogeneous capacity sharing disutilites, the optimal agent coalition formation that satisfies the flow capacity constraint in step 1 is also integer (Lemmas \ref{lemma:condition_gross} -- \ref{lemma:integer}). Again, this step may not hold when the condition of homogeneous capacinteger optimal solutionity sharing disutilites is violated. In Example \ref{ex:carpool}, the route flow induced by the optimal trip allocation is integer on the two-route parallel network, but the optimal coalition formation is fractional. The two steps conclude that \eqref{eq:LP1bar} has an integer optimal solution, and thus equilibrium exists following Proposition \ref{prop:primal_dual}. 

In Section \ref{subsec:extension_1}, we show that the same result holds in the multi-period model, where agents can choose their departure time. The proof of this extension follows similar ideas as Theorem \ref{theorem:sp} with step 1 being more complex for the need of computing a dynamic optimal route flow instead of a static one, which we will go into details later. Moreover, in Section \ref{sec:extension_2}, we show that when the two conditions are violated, market equilibrium can still exist if capacity pricing is route-based rather than edge-based, with differentiated capacity prices for agents with heterogeneous disutilities. In Section \ref{sec:algorithm}, we also demonstrate that the homogeneous disutility condition is necessary for the polynomial-time computation of market equilibrium.
For the rest of this section, we present the two step proof sketch of Theorem \ref{theorem:sp}. The complete proof is included in Appendix \ref{apx:proof_A}. 

\medskip 



\noindent\underline{Step 1}. We construct a flow capacity vector $\kopt=(\kopt_r)_{r \in R}$ using the greedy Algorithm \ref{alg:flow} that allocates edge capacity $(q_e)_{e \in E}$ to routes in increasing order of travel time. In this algorithm, we begin with computing a shortest route $\rmin$ with travel time $\tmin$, and set its capacity to be the maximum possible capacity $\kopt_{\rmin} = \min_{\e \in \rmin} \qe$. Then, we reduce the residual capacity of each edge on $\rmin$ by $\kopt_{\rmin}$, and repeat this process until there exists no route with positive residual capacity in the network. 
We denote $\Ropt= \{r \in \R|k_r^{*}>0\}$ as the set of routes with positive flow capacity.

\begin{algorithm}[htp]
\SetAlgoLined
\textbf{Initialize:} Set $\tildeE \leftarrow \E$; $\tildeq_e \leftarrow \qe, ~ \forall \e \in \tildeE$; $\kopt_r \leftarrow0, ~\forall \r \in \R$; $\(\tmin, \rmin\) \leftarrow ShortestRoute(\tildeE)$\;
\While{$\tmin < \infty$}{
$\kopt_{\rmin} \leftarrow \min_{\e \in \rmin} \tildeq_e$\;
\For{$\e \in \rmin$}{
$\tildeq_e \leftarrow \tildeq_e - \kopt_{\rmin}$\;
\If{$\tildeq_e = 0$}{
$\tildeE \leftarrow \tildeE \setminus \{\e\}$\;
}
}
$\(\tmin, \rmin\) \leftarrow ShortestRoute(\tildeE)$\;
}
\textbf{Return $\kopt$}
\caption{Greedy algorithm for computing static route capacity $\kopt$}
\label{alg:flow}
\end{algorithm}

We consider another socially optimal trip organization problem \eqref{eq:LP2k}, where trips satisfy the capacity constraints according to $\kopt$. Problem \eqref{eq:LP2k} is more restrictive than the original problem \eqref{eq:LP1bar}, as trip vectors satisfying capacity constraints in \eqref{subeq:LP2k2} must also meet the original network capacity constraint \eqref{subeq:LP12}, but not necessarily vice versa. Lemma \ref{lemma:FF} demonstrates that for series-parallel networks, an optimal solution of \eqref{eq:LP2k} also optimizes the original problem \eqref{eq:LP1bar}.

\begin{lemma}\label{lemma:FF}
If the network is series-parallel, then any optimal solution of \eqref{eq:LP2k} is an optimal solution of \eqref{eq:LP1bar}:  \begin{subequations}
        \makeatletter
        \def\@currentlabel{LP$k^{*}$}
        \makeatother
        \label{eq:LP2k}
        \renewcommand{\theequation}{LP$k^{*}$.\alph{equation}}
        \begin{align}
    \max_{\xbar} \quad &\Sx=\sum_{\b \in \B} \sum_{\r \in \R}\Vrbbar \xrbbar, \notag\\
  s.t. \quad   &\sum_{\b \ni \m}\sum_{\r \in \R} \xrbbar \leq 1, \quad \forall \m \in \M,  \label{subeq:LP2k1}\\
&\sum_{\bbar \in \Bbar} \xrbbar  \leq k^{*}_r, \quad \forall \r \in \R, \label{subeq:LP2k2}\\
 &\xrbbar \geq 0, \quad \forall \bbar \in \Bbar, \quad \forall \r \in \R. \label{subeq:LP2k3}
    \end{align}
    \end{subequations}

\end{lemma}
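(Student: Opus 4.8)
The plan is to prove that \eqref{eq:LP2k} and \eqref{eq:LP1bar} have the same optimal value; the lemma then follows immediately, because any optimal solution of \eqref{eq:LP2k} is feasible for \eqref{eq:LP1bar} (shown below) and attains the common optimum, hence is optimal for \eqref{eq:LP1bar}. First I would record the easy inclusion of feasible regions. By construction Algorithm \ref{alg:flow} never drives a residual edge capacity negative, so the static route capacities satisfy $\sum_{r \ni e} w_r^* \le \qe$ for every edge $e$. Consequently, if $x$ satisfies the route-capacity constraints \eqref{subeq:LP2k2}, then for every edge $e$ and time $t$ the number of trips entering $e$ at $t$ is $\sum_{r \ni e}\sum_{b} x_r^{t-\distre}(b) \le \sum_{r \ni e} w_r^* \le \qe$, i.e. $x$ also satisfies the edge-capacity constraint \eqref{subeq:LP12}. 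Thus the feasible region of \eqref{eq:LP2k} is contained in that of \eqref{eq:LP1bar}, and since the objectives coincide, the optimal value of \eqref{eq:LP2k} is at most that of \eqref{eq:LP1bar}. It remains to establish the reverse inequality.

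For the reverse direction I would take an arbitrary optimal solution $x^*$ of \eqref{eq:LP1bar} and exhibit a solution feasible for \eqref{eq:LP2k} of value at least $S(x^*)$. Two ingredients are needed: a monotonicity property of trip values and an earliest-arrival property of $\kopt$. Rewriting \eqref{eq:m_valuation_trip}--\eqref{eq:value_of_trip} for a fixed group $b$ gives $\Vrbbar = C_b - D_b \tr - L_b(z + \tr)$, where $C_b$ is a group-dependent constant, $D_b = \sum_{m \in b}(\beta_m + \gamma_m(|b|)) + \delta|b| \ge 0$, and $L_b(a) = \sum_{m \in b}\ell_m((a-\theta_m)_+)$ is non-decreasing in $a$. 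Hence $\Vrbbar$ is non-increasing in both the travel time $\tr$ and the arrival time $z+\tr$: reassigning any intact group to a route of weakly smaller travel time and a departure time giving a weakly earlier arrival can only increase its value, and therefore can only increase $S$.

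The crux is to show that the route flow $f_r^z := \sum_{b} x_r^{*z}(b)$ induced by $x^*$, which is edge-feasible, can be repacked into the capacities $\kopt$ so that no trip's travel time or arrival time increases. Here I would invoke that, because the network is series-parallel, the greedy temporally repeated flow produced by Algorithm \ref{alg:flow} is an \emph{earliest arrival flow}: the shortest-route-first decomposition coincides with successive shortest paths, and in a series-parallel network this computation requires no flow cancellation (backward arcs), so $\kopt$ simultaneously maximizes the number of trips arriving by every deadline. Processing the trip-units of $f$ in increasing order of arrival time and assigning them greedily to the earliest free slots of $\kopt$, the earliest-arrival property guarantees each unit lands at an arrival time no later than its original, while the nested length structure of the shortest-route decomposition forces the assigned route to be no longer than the original. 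Keeping each group intact but moving it to its assigned slot yields $\tilde x$ that respects \eqref{subeq:LP2k2} and satisfies $S(\tilde x) \ge S(x^*)$ by the monotonicity above, which gives the reverse inequality and hence equality of the optimal values.

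I expect the main obstacle to be precisely this last step: establishing \emph{simultaneous} (travel-time and arrival-time) domination of $f$ by $\kopt$. Earliest-arrival domination alone controls arrival times but not travel times, and in general the two cannot be matched coordinatewise — this is exactly what breaks in the Wheatstone network of Example \ref{ex:wheatstone}, where the fractional optimum pushes flow across the cross edge in a way that no temporally repeated shortest-route flow can reproduce. The series-parallel hypothesis of Theorem \ref{theorem:sp} is what rules this out, since it forces the routes carrying positive flow to have a nested (laminar) length structure under which the greedy flow is an earliest arrival flow and the coordinatewise repacking goes through. I would therefore devote most of the effort to proving this structural earliest-arrival statement for series-parallel networks, most naturally by induction on the series/parallel composition of the network.
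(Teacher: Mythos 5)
Your overall strategy -- show that \eqref{eq:LP2k} and \eqref{eq:LP1bar} have the same optimal value by repacking an optimal solution of \eqref{eq:LP1bar} into the route capacities $\kopt$, using the earliest-arrival property of the greedy temporally repeated flow on series-parallel networks -- is the same as the paper's, and your feasible-region inclusion and the monotonicity of $\Vrbbar$ in travel time and arrival time are both correct. However, the step you yourself flag as the crux is resolved incorrectly. You claim that in a series-parallel network the repacking can be done \emph{coordinatewise}: each group is moved to a route that is no longer \emph{and} an arrival time that is no later. This is false even for series-parallel networks. Take two parallel edges $a_1,a_2$ followed in series by two parallel edges $b_1,b_2$, with $d_{a_1}=d_{b_1}=1$, $d_{a_2}=d_{b_2}=2$, all capacities $1$, and $T=5$. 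Algorithm \ref{alg:flow} gives $w^*_{a_1b_1}=w^*_{a_2b_2}=1$ and no capacity on the two length-$3$ routes. An edge-feasible solution can send four groups on the length-$3$ routes $a_1b_2$ and $a_2b_1$ (two departures each), arriving at times $4,4,5,5$. Any repacking into $\kopt$ must place one of these groups on $a_2b_2$, which has length $4>3$; no assignment dominates coordinatewise in travel time. So the ``nested length structure'' you invoke does not deliver what you need, and your greedy earliest-slot assignment can strictly decrease $S$ when the group forced onto the long route happens to be the most travel-time-sensitive one.

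What the paper does instead is give up on per-group travel-time domination. It writes $\Vrbbar = w(b) - g(b)\tr - \sum_{m\in b}\delaym(\cdot)$, preserves per-group arrival times (which the earliest-arrival property of $\kopt$, Lemma \ref{lemma:earliest_arrival}, does support), and then shows only that the \emph{aggregate} weighted travel-time cost $\sum_{r,z,b} g(b)\,\tr\, x_r^z(b)$ does not increase. Achieving this requires processing groups in decreasing order of their sensitivity $g(b)$ and assigning the most sensitive groups to the shortest residual routes subject to the arrival-time constraints; the optimality of this priority rule is then established by an exchange argument carried out via induction on the series/parallel composition of the network (separate cases for series and parallel connection). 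This sorting-by-$g(b)$ ingredient and the accompanying aggregate exchange argument are entirely absent from your proposal, and without them the reverse inequality $S(\tilde x)\geq S(x^*)$ does not follow.
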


We prove Lemma \ref{lemma:FF} by construction. We show that for any feasible solution $\x$ of \eqref{eq:LP1bar} on a series-parallel network, we can construct another trip vector $\xhat$ satisfying $S(\xhat) \geq S(\x)$ and feasibility in \eqref{eq:LP2k}. Optimal values of \eqref{eq:LP2k} and \eqref{eq:LP1bar} are equal, making any optimal solution of \eqref{eq:LP2k} optimal in \eqref{eq:LP1bar}.

The key step of the proof is to construct such $\xhat$ by redistributing flow of agent coalitions in $x$, ensuring that agent coalitions with higher time sensitivity are prioritized for shorter routes. The series-parallel network condition is used to show that the flow $\kopt$ is a maximum flow in series-parallel networks (Lemma \ref{lemma:max_flow} in Appendix \ref{apx:proof_A}). Thus, $\xhat$ has the same total flow of each $b$ as in $x$. We also prove, using mathematical induction, that $\xhat$ has higher social welfare compared to $\x$ (i.e. $S(\xhat) \geq S(\x)$) when the network is series-parallel: If the inequality holds on any two series-parallel networks, then it also holds on the network constructed by connecting the two sub-networks in series or in parallel. 

\vspace{0.2cm}

\noindent\underline{Step 2.} In this part, we show that when agents have homogeneous capacity sharing disutilities, \eqref{eq:LP2k} has an integer optimal solution. Following from Lemma \ref{lemma:FF} in {Step} 1, we know that this solution is also an integer optimal solution of \eqref{eq:LP1bar}, and thus conclude Theorem \ref{theorem:sp}. In this step, we need to introduce the monotonicity and gross substitutes definitions. %

\begin{definition}[Monotonicity]\label{def:monotonicity}
A function $f: \B \to \mathbb{R}$ is monotone if $f(\b \cup \b') \geq f(\b), ~ \forall \b, \b' \in \B$.
\end{definition}
The value of a monotonic function $f$ increases as the set $\b$ grows.   
\begin{definition}[Gross Substitutes \cite{reijnierse2002verifying}]\label{def:gross_substitute}
A function $f: \B \to \mathbb{R}$ satisfies the gross substitutes condition if\\
(i) $\forall \b, \b'\in  \B$ such that $\b \subseteq \b'$ and any $i \in \M \setminus \b'$, $f(i|\b') \leq f(i|\b)$, where $f(i|\b) = f(\b \cup \{i\}) -f(\b)$.\\
(ii) $\forall \b \in \B$ and $i,j,k \in \M \setminus \b$, $
        f(\{i, j\}|\b) + f(k|\b) \leq \max \left\{f(i|\b) + f(\{j,k\}|\b), ~ f(j|\b) + f(\{i,k\}|\b)\right\}$, where $f(\{i, j\}|\b) = f(\b \cup \{i, j\}) -f(\b)$.
\end{definition}


\medskip 
We note that the trip value function $\Vrbbar$ defined on the feasible coalition set $\B$ in \eqref{eq:value_of_trip} does not satisfy the monotonicity condition because the size of the combined coalition $\bbar \cup \bbar'$ may exceed the capacity limit $A$ and the value $V_r(\bbar \cup \bbar')$ may be less than $V_r(\bbar)$ when the capacity sharing disutility is sufficiently high. We denote all coalitions (with sizes both within $A$ or larger than $A$) as $\Ball \deleq 2^M$. Then, we define the augmented value function $\barV_r: \Ball \to \mathbb{R}$, where $\barV_r(\barb)$ takes the maximum value of $V_r(\bbar)$ with coalition $\bbar \subseteq \ball$. We denote a feasible coalition in $\ball$ that achieves this maximum value as the representative coalition, $\rep(\ball)$, and denote the set of all such feasible coalitions as $H_r(\ball)$: 
\begin{align}\label{eq:Vtilde}
    \barV_r(\barb) \deleq \max_{\bbar \subseteq \ball, ~ \bbar \in \Bbar} \Vbar_r(\bbar),  \quad H_r(\ball) \deleq \argmax_{\bbar \subseteq \ball, ~ \bbar \in \Bbar}\Vbar_r(\bbar), \quad \forall  (\ball, \r) \in \Ball \times \R. 
\end{align}

The augmented value function $\barV$ satisfies the monotonicity condition. When all agents have homogeneous capacity sharing disutilities, $\barV$ also satisfies the gross substitutes condition. 

%

\begin{lemma}\label{lemma:condition_gross}
For any $r \in R$, the augmented value function $\bar{V}_r(\cdot)$ is monotone. Additionally, $\bar{V}_r(\cdot)$ satisfies the gross substitutes condition for all $r \in R$ if agents have homogeneous capacity sharing disutilities.
\end{lemma}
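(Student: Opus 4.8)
\emph{Monotonicity and a reduction.} Monotonicity is immediate from \eqref{eq:Vtilde}: enlarging $\ball$ only enlarges the collection of feasible subgroups $\{\bbar\subseteq\ball:\bbar\in\B\}$ over which the maximum is taken (and the empty group is always feasible, with value $0$), so the maximum cannot decrease; this gives $\barVrz(\ball\cup\ball')\geq\barVrz(\ball)$, i.e. Definition \ref{def:monotonicity}. For the substitutes part, the first step I would take is to exploit homogeneity \eqref{eq:homogeneous} to rewrite the trip value \eqref{eq:value_of_trip}, for fixed $r$ and $z$, as a modular term minus a penalty that depends on the group only through its \emph{size}:
\begin{align*}
\Vrbbar = \sum_{m\in\bbar} a_m - C(|\bbar|),
\end{align*}
where $a_m\deleq \tripm-\vm d_r-\delaym((\start+d_r-\arrm)_+)-(\fixedcost+\unitcost d_r)$ gathers the identity-dependent terms and $C(s)\deleq s\,\pi(s)+s\,\gamma(s)\,d_r$ is the total sharing penalty of a group of size $s$. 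The essential consequence of homogeneity is that $C$ is identity-independent, so for each fixed size $s$ the inner maximization in \eqref{eq:Vtilde} simply selects the $s$ agents of $\ball$ with the largest values $a_m$, and $\barVrz$ becomes the free-disposal (monotone) closure of the set function $h(\bbar)\deleq\sum_{m\in\bbar}a_m-C(|\bbar|)$ (with $h\equiv-\infty$ when $|\bbar|>A$).

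\emph{Convexity of the size penalty.} The crucial ingredient is that $C$ is discretely convex, i.e. $C(s{+}1)-C(s)$ is non-decreasing in $s$. I would derive this from the standing assumption that $\pi$ and $\gamma$ have non-decreasing marginals: setting $\Delta(s)=\pi(s{+}1)-\pi(s)\geq 0$ (non-decreasing), the second difference of $s\mapsto s\,\pi(s)$ equals $(s{+}1)\Delta(s)-(s{-}1)\Delta(s{-}1)\geq 0$, so $s\,\pi(s)$ is convex; the same holds for $s\,\gamma(s)$, and since $d_r>0$, $C$ is convex. The vehicle-capacity cap is folded in by letting $C(s)=+\infty$ for $s>A$, which preserves convexity.

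\emph{Gross substitutes.} It then remains to verify conditions (i) and (ii) of Definition \ref{def:gross_substitute} for the closure $\barVrz(\ball)=\max_{\bbar\subseteq\ball}h(\bbar)$. The clean route is through discrete convex analysis: $h$ is a modular function plus $-C(|\cdot|)$, a concave function of cardinality, hence $h$ is $M^{\natural}$-concave (equivalently, gross substitutes), and the free-disposal closure preserves this property. To argue directly from the combinatorial definition, I would instead use the sorted-prefix-sum representation $\barVrz(\ball)=\max_{0\le s\le\min(|\ball|,A)}\big[\sum_{j=1}^s a_{(j)}-C(s)\big]$ (with $a_{(1)}\ge a_{(2)}\ge\cdots$ the sorted values in $\ball$): condition (i) (submodularity) follows because a larger base group has a weakly larger optimal size, so any added agent faces a weakly larger marginal penalty $C(s{+}1)-C(s)$ by convexity, forcing the marginal value $\barVrz(i\mid\ball)$ to be non-increasing in $\ball$; condition (ii) is the three-element exchange, checked from the same representation.

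\emph{Main obstacle.} The principal difficulty is condition (ii) for the closure function: adding one or two agents can change both the optimal group size and the identity of the selected agents, so the proof must track how the optimal size and selection shift and combine this with the convexity of $C$, which is where the prefix-sum bookkeeping becomes delicate. The only other non-routine point is the convexity-of-$C$ lemma, where the extra factor $s$ in $s\,\pi(s)$ must be handled. Finally, I would emphasize that homogeneity is used solely in the reduction $\Vrbbar=\sum_{m\in\bbar}a_m-C(|\bbar|)$: with heterogeneous $\pi_m,\gamma_m$ the penalty is no longer a function of cardinality alone, the sorted-prefix-sum (free-disposal) structure collapses, and gross substitutes can fail, exactly as in the footnote counterexample.
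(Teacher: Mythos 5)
Your proposal is correct, and the monotonicity argument and the key reduction (homogeneity makes the sharing penalty a function of group size alone, so the representative group $\rep(\ball)$ is obtained by greedily taking agents in decreasing order of their identity-dependent value) are exactly the paper's starting point; the explicit second-difference computation showing $s\,\pi(s)$ and $s\,\gamma(s)$ are discretely convex is a useful step the paper only asserts implicitly. Where you genuinely diverge is in how gross substitutes is then established. The paper works directly with the two combinatorial conditions of Definition \ref{def:gross_substitute}: it proves the auxiliary facts that $|\rep(\ball')|\geq|\rep(\ball)|$ for $\ball\subseteq\ball'$ and that the last selected agent satisfies the threshold inequalities \eqref{eq:ell_one}--\eqref{eq:ell_two}, and then runs an exhaustive case analysis (four subcases for condition (i), four main cases with further subcases for condition (ii)) tracking exactly how the optimal size and the selected agents shift when one or two agents are added. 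Your primary route instead invokes discrete convex analysis: a modular function minus a convex function of cardinality is $M^{\natural}$-concave, $M^{\natural}$-concavity is equivalent to gross substitutes (Fujishige--Yang), and the free-disposal closure $\max_{\bbar\subseteq\ball}h(\bbar)$ is the convolution of $h$ with the zero valuation, which preserves $M^{\natural}$-concavity (the capacity cap $|\bbar|\leq A$ sits on a uniform-matroid domain, so it causes no trouble). Provided you are willing to cite these three standard facts, your argument is complete and far shorter; what the paper's self-contained approach buys is independence from the discrete-convexity machinery, at the cost of the delicate bookkeeping you correctly identify as the hard part. Be aware that your fallback ``direct'' verification is only a sketch -- condition (ii) is precisely where the paper spends most of its effort, and you have not carried it out -- so the proof stands or falls with the discrete-convex-analysis citations.
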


By replacing the original value function $V$ with the augmented value function $\barV$ in \eqref{eq:LP2k}, we show that the corresponding linear program has an integer optimal solution $\yopt$ when $\barV$ satisfies the monotonicity and gross substitutes conditions. Furthermore, we show that we can construct an integer optimal solution $\xopt$ of the original \eqref{eq:LP2k} by replacing the augmented coalition with the represented coalition in all trips in $\yopt$.

\begin{lemma}\label{lemma:integer}
The following linear program has an integer optimal solution $\yopt=\left(\xbarr^{*}_r(\ball)\)_{(\ball, \r) \in \Ball \times \R}$ if the augmented value function $\barV$ satisfies monotonicity and gross substitutes:
\begin{subequations}
        \makeatletter
        \def\@currentlabel{$\overline{\mathrm{LP}}\kopt$}
        \makeatother
        \label{eq:LPy}
        \renewcommand{\theequation}{$\overline{\mathrm{LP}}\kopt$.\alph{equation}}
        \begin{align}
    \max_{\xbarr} \quad &S(\xbarr)=\sum_{\ball \in \Ball}\sum_{\r \in \R} \barV_{\r}(\ball) \xbarr_{\r}(\ball), \notag\\
  s.t. \quad  & \sum_{\ball \ni \m} \sum_{\r \in \R}\xbarr_{\r}(\ball) \leq 1, \quad \forall \m \in \M,  \label{subeq:LPy1}\\
&\sum_{\ball \in \Ball}\xbarr_{\r}(\ball)  \leq \kopt_r, \quad \forall \r \in \R, \label{subeq:LPy2}\\
 &\xbarr_{r}(\ball) \geq 0, \quad \forall \ball \in \Ball, \quad \forall \r \in \R. \label{subeq:LPy3}
    \end{align}
    \end{subequations}

Furthermore, given an integer optimal solution $\yopt$, any $\xopt$ that satisfies the following constraints is an integer optimal solution of \eqref{eq:LP2k}: 
\begin{align}\label{eq:construct}
    {\sum_{\b \in H_r(\ball)} x^{*}_{r}(b)}&=\xbarr^{*}_{r}(\ball), \quad x^{*}_{r}(b) \in \{0, 1\}, \quad \forall (\b, r) \in \B \times \R, \quad \forall (\ball, r) \in \Ball \times \R.
\end{align}
\end{lemma}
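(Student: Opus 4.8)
The plan is to establish the two assertions of the lemma separately: first that the augmented program \eqref{eq:LPy} admits an integer optimum, and then that any $\xopt$ produced from such an integer optimum via \eqref{eq:construct} is an integer optimum of \eqref{eq:LP2k}.

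For the first (and main) assertion, I would recast \eqref{eq:LPy} as the social-welfare maximization linear program of an auxiliary economy $\econ$ in the language of combinatorial auctions. Here the \emph{indivisible goods} are the agents $\m \in \M$, and the \emph{buyers} are the route--time capacity units: for each route $\r \in \R$ and each feasible departure time $\start$ there are $\koptrz$ identical buyers, each endowed with the valuation $\ball \mapsto \barVrz(\ball)$ over bundles $\ball \in \Ball = 2^{\M}$. Under this identification, constraint \eqref{subeq:LPy1} says each good is sold at most once, constraint \eqref{subeq:LPy2} is exactly the aggregate of the ``each buyer receives at most one bundle'' constraints over the $\koptrz$ copies sharing $(\r, \start)$, and the objective is total realized value. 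By Lemma \ref{lemma:condition_gross} every valuation $\barVrz$ is monotone and gross substitutes, so the Kelso--Crawford / Gul--Stacchetti theory (\cite{kelso1982job, gul1999walrasian}) guarantees a Walrasian equilibrium of $\econ$. The equilibrium allocation is integral, and the equilibrium prices, together with the first welfare theorem, furnish a dual-feasible certificate whose complementary slackness with the allocation shows it is optimal for the welfare LP, i.e.\ for \eqref{eq:LPy}. This gives the desired integer optimum $\yopt$.

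For the second assertion I would use a value-preserving rounding. I first observe that \eqref{eq:LPy} dominates \eqref{eq:LP2k}: any feasible point of \eqref{eq:LP2k} embeds into \eqref{eq:LPy} by setting $\xbarr_r^z(\bbar) = \x_r^z(\bbar)$ for $\bbar \in \Bbar$ and $0$ on $\Ball \setminus \Bbar$, and since $\barVrz(\bbar) \ge V_r^z(\bbar)$ for every $\bbar \in \Bbar$ (the group $\bbar$ itself competes in the maximum \eqref{eq:Vtilde} defining $\barVrz$), the embedding does not lower the objective; hence the optimal value of \eqref{eq:LPy} is at least that of \eqref{eq:LP2k}. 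Next, given the integer optimum $\yopt$, I realize each selected augmented trip $(\ball, \start, \r)$ by one representative feasible group $\bbar \in \rep(\ball)$, setting $\x_r^z(\bbar) = 1$ as prescribed by \eqref{eq:construct}. Constraint \eqref{subeq:LPy1} forces the selected augmented groups to be pairwise agent-disjoint, so the chosen representatives $\bbar \subseteq \ball$ are pairwise disjoint and $\xopt$ satisfies \eqref{subeq:LP2k1}; and since at most $\koptrz$ augmented trips are selected on each $(\r, \start)$, $\xopt$ satisfies \eqref{subeq:LP2k2}. Finally $V_r^z(\bbar) = \barVrz(\ball)$ for $\bbar \in \rep(\ball)$ by \eqref{eq:Vtilde}, so $S(\xopt) = S(\yopt)$, which equals the optimal value of \eqref{eq:LPy} and hence is at least that of \eqref{eq:LP2k}. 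As $\xopt$ is feasible for \eqref{eq:LP2k}, this forces equality, so $\xopt$ is an integer optimum of \eqref{eq:LP2k}.

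The crux is the first assertion, specifically making the correspondence with the auxiliary economy fully rigorous. Two points demand care: (i) the multi-unit structure --- replicating a buyer $\koptrz$ times must preserve both gross substitutes and the integrality guarantee, and the aggregated capacity constraint \eqref{subeq:LPy2} must be verified to be the exact image of the per-copy constraints, with unused capacity absorbed by the empty bundle whose augmented value is $0$; and (ii) converting ``a Walrasian equilibrium exists'' into ``the welfare LP has an integral optimum,'' for which I would exhibit the equilibrium price vector as the dual solution certifying primal integral optimality. By comparison, the second assertion is essentially bookkeeping once the inequality $\barVrz(\bbar) \ge V_r^z(\bbar)$ and the agent-disjointness of the representatives are in hand.
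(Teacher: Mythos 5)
Your proposal is correct and follows essentially the same route as the paper: both recast \eqref{eq:LPy} as the welfare LP of an auxiliary economy with agents as indivisible goods and the $\koptrz$ capacity units as buyers (the paper's ``slots'' $L_r^z$), invoke Kelso--Crawford existence of a Walrasian equilibrium under monotonicity and gross substitutes, convert it into an integral LP optimum via the equilibrium prices as a dual certificate with complementary slackness, and then realize each selected augmented group by a representative $\bbar \in \rep(\ball)$ to obtain a value-preserving feasible, hence optimal, integer solution of \eqref{eq:LP2k}. Your explicit verification that the optimal value of \eqref{eq:LPy} dominates that of \eqref{eq:LP2k} is a detail the paper leaves implicit, but it is not a different argument.
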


To prove that \eqref{eq:LP2k} has an integer optimal solution, we view each unit capacity of route $r$ as a ``slot''. Thus, given $\kopt$, there are $|L_r| = k_{r}^{*} $ number of slots for each $r \in \Ropt$. The total number of slots is $|L| = \sum_{r \in \R} \kopt_r $. We demonstrate that the agent assignment problem is equivalent to the good allocation problem in an auxiliary economy, where agents are indivisible goods and slots are buyers. Following a similar primal and dual analysis as in Proposition \ref{prop:primal_dual}, we show that the existence of an integer solution in \eqref{eq:LPy} is equivalent to the existence of Walrasian equilibrium (\cite{kelso1982job}, see Definition \ref{def:we} in Appendix \ref{apx:proof_A}) of our constructed economy. With monotonicity and gross substitutes conditions satisfied, the Walrasian equilibrium exists, and \eqref{eq:LPy} has an integer optimal solution $\xbarr^*$. Thus, $\xopt$ in \eqref{eq:construct} is an integer optimal solution of \eqref{eq:LP2k} and, by Lemma \ref{lemma:FF}, also of \eqref{eq:LP1bar}, concluding the proof of Theorem \ref{theorem:sp}.

\subsection{Computing Market Equilibrium}\label{sec:algorithm}
\textbf{Computing optimal trip organization.} We compute the optimal trip vector $\xopt$ in two steps following the proof of Theorem \ref{theorem:sp}: \emph{(Step 1)} Compute the optimal route capacity vector $k^*$ from Algorithm \ref{alg:flow}. In each iteration of Algorithm \ref{alg:flow}, the shortest route of the network is computed by Dijkstra algorithm in time $O(|\N|^2)$, where $|\N|$ is the number of nodes in the network. Moreover, since the capacity of at least one edge is completely allocated to the shortest route of every iteration, the number of iterations in Algorithm \ref{alg:flow} is less than or equal to $|\E|$. Therefore, the time complexity of step 1 is $O(|\E||\N|^2)$. 

\emph{(Step 2)} Compute $\xopt$ as an integer optimal solution of \eqref{eq:LP2k} by allocating agents to the set of slots $\L$ given by $\kopt$. Following Lemmas \ref{lemma:condition_gross} and \ref{lemma:integer}, we know that the second step of computing the integer optimal solution of \eqref{eq:LP2k} cinteger optimal solutionquivalently turned into computing the allocation $\bar{x}^*$ in a Walrasian equilibrium of the auxiliary economy with the augmented value function $\bar{V}_r$ for each $r \in R$. The actual equilibrium allocation $\xopt$ can then be recovered from \eqref{eq:construct}. 

Several approaches have been proposed for computing Walrasian equilibrium in an economy with indivisible goods including the pseudo-polynomial time algorithms (\cite{kelso1982job, ausubel2002ascending, parkes1999bundle, de2007ascending}), and the polynomial time algorithms (\cite{nisan2006communication, murota1996valuated, paes2020computing}).
All of the above methods can be used for computing the solution in step 2 of our problem. These algorithms require the knowledge of the augmented value function and the oracle that can efficiently compute the set $ \argmax_{\ball \in \Ball} \{\bar{V}_r(\ball) - \sum_{m \in \ball} u_m\}$. In our setting, since the augmented value function $ \barV_r(\ball)$ satisfies monotonicity and gross substitutes conditions (Lemma \ref{lemma:condition_gross}), we compute the set $\argmax_{\ball \in \Ball} \{\bar{V}_r(\ball) - \sum_{m \in \ball} u_m\}$ by iteratively adding agents into the solution set greedily according to their marginal contribution to the value of the objective function $\bar{V}_r(\ball) - \sum_{m \in \ball} u_m$  (\cite{gul1999walrasian}, also included in Appendix \ref{sec:review}). We note that the greedy approach can be used to compute this set if and only if $\barV_r$ satisfies the monotonicity and gross substitutes conditions (\cite{gul1999walrasian}). Therefore, the homogeneous capacity sharing disutility condition is essential for the polynomial time complexity of the algorithm. To make the paper self-included, we provide the details of our algorithm that computes $x^*$ in step 2 in Appendix \ref{ap:alg}.
Our algorithm is based on the Kelso-Crawford algorithm (\cite{kelso1982job}) with the modification that enables us to efficiently compute $\bar{V}_r(\cdot)$ and $h_r(\cdot)$ in an iterative manner only for a subset of $(\ball, r)$ when needed. Algorithm \ref{alg:allocation} is pseudo-polynomial in the number of agents $|M|$ and the number of unit capacity flow slot $|L| = \sum_{r \in R}\kopt_r$. We remark that other Walrasian equilibrium computation algorithms can also be used to compute the trip allocation in this step.

\textbf{Computing equilibrium payments and edge prices.}
Given the optimal trip vector $\xopt$, we compute the set of agent payments $\popt$ and edge prices $\tollopt$ such that $\left(\xopt, \popt, \tollopt\right)$ is a market equilibrium. Recall from Proposition \ref{prop:primal_dual}, the equilibrium utilities and edge prices $\left(\uopt, \tollopt\right)$ are optimal solutions of the dual program \eqref{eq:D1bar}. 
We can use the Ellipsoid method to compute $(u^*, \toll^*)$ given that the separation problem -- identifying a violated constraint in \eqref{eq:D1bar} for any $(u, \toll)$ -- can be solved (\cite{nisan2006communication, grotschel1993ellipsoid}). To verify constraints \eqref{subeq:D11}, we need to check whether or not $ \max_{\b \in \B} \{V_r(\b) -\sum_{m \in \b} u_m \} \leq \sum_{e \in \r} \toll_e$ is satisfied for all $\r \in \R$. We note that
$\max_{\b \in \B} \{V_r(\b) -\sum_{m \in \b} u_m \}= \max_{\ball \in \Ball} \{\bar{V}_r(\ball) -\sum_{m \in \ball} u_m \}$. Under the monotonicity and gross substitutes conditions, $\max_{\ball \in \Ball} \{\bar{V}_r(\ball) -\sum_{m \in \ball} u_m \}$ can be computed by greedily adding agents to the set $\ball$ (Algorithm \ref{alg:allocation} Line 3-29 in Appendix \ref{ap:alg}). Thus, constraints \eqref{subeq:D11} can be verified in $O(|M||R|)$ time. Additionally, 
constraints \eqref{subeq:D12} are straightforward to verify. Thus, an equilibrium utility vector $\uopt$ and an edge price vector $\tollopt$ can be computed by the ellipsoid method in polynomial time in $|M|$ and $|R|$. Based on $\xopt$ and $(\uopt, \tollopt)$, we can compute the payment vector $\popt$ using \eqref{eq:p}.

\subsection{Equivalence to VCG mechanism}\label{sec:strategyproof}
In this section, we identify a particular market equilibrium $(\xopt, \udag, \tolldag)$ that induces the same outcome as the classical VCG mechanism. We also show that $\udag$ achieves the maximum utility for all agents and $\tolldag$ charges the minimum total edge price among the set of equilibrium $(\uopt, \tollopt)$. Throughout this section, we assume that the network is series-parallel and agents have homogeneous capacity sharing disutilities, and thus market equilibrium exists following Theorem \ref{theorem:sp}. 



A VCG mechanism is defined as $(x^*, \pdag)$, where $\xopt$ is a socially optimal trip organization vector, and payment $\pdag_m$ of each agent $m \in M$ is the difference of the total trip values for all other agents given the socially optimal trip organization with and without agent $\m$: 
\begin{align}\label{eq:pmdag}
    \pmdag  = S_{-\m}(\xoptmm) - S_{-\m}(\xopt), \quad \forall \m \in \M, 
\end{align}
where $\xoptmm$ is the optimal trip vector for the trip organization problem with agent set $M \setminus \{m\}$. The optimal social welfare with $\xoptmm$ is $S_{-\m}(\xoptmm)$ given by \eqref{eq:IP}, and $S_{-\m}(\xopt)= S(\xopt) - \sum_{r \in R}\sum_{\b \ni \m} v_{m,r}(b) x_r^{*}(b)$ is the social welfare for agents $\M \setminus \{\m\}$ with the original optimal trip vector $\xopt$. Given $\xopt$ and $\pdag$, the utility of each agent $\m \in \M$ is the difference of the optimal social welfare with and without $\m$: \begin{align}\label{eq:umaxm}
    \umaxm &\stackrel{\eqref{eq:u_p}}{=} \sum_{\b \ni \m}\sum_{r \in R} v_{m,r}(b) x_r^{*}(\b)- \pmdag \stackrel{\eqref{eq:pmdag}}{=}  
    S(\xopt) - S_{-\m} (\xopt_{-m}), \quad \forall \m \in \M. 
\end{align}From the classical theory of mechanism design \cite{ausubel2006lovely}, we know that a VCG mechanism is strategyproof. That means, if there exists a market platform that centrally organizes trips based on agents' reported preference parameters, then given the socially optimal trip organization $\xopt$, and the VCG payment $\pdag$, all agents will truthfully report their preferences to the platform. To show that there exists a strategyproof market equilibrium, it suffices to demonstrate that we can find a price vector $\tolldag$ such that $\left(\xopt, \pdag, \tolldag\right)$ is a market equilibrium. Next, we show that such $\tolldag$ exists. Moreover, all agents' equilibrium utilities given by $\udag$ are the highest of all market equilibria, and the total collected edge price is the minimum. 

\begin{theorem}\label{theorem:strategyproof}
If the network is series-parallel, and agents have homogeneous capacity sharing disutilities, then a strategyproof market equilibrium $\(\xopt, \pdag, \tolldag\)$ exists, and the equilibrium utility vector is $\udag$. Moreover, given any other market equilibrium $\(\xopt, \popt, \tollopt\)$, 
\begin{align*}
    \umaxm &\geq \umopt, \quad \forall \m \in \M, \quad \text{and} \quad 
    \sum_{\e \in \E} \qe \toll^{\dagger}_e \leq   \sum_{\e \in \E} \qe \toll^*_e. 
\end{align*}
\end{theorem}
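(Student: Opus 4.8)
The plan is to establish the theorem in three pieces: (i) that the VCG utility $\umaxm$ is an upper bound on the utility any agent can attain in \emph{any} market equilibrium; (ii) that the vector $\udag$ is itself \emph{achievable}, i.e.\ there is a nonnegative price vector $\tolldag$ for which $(\xopt,\pdag,\tolldag)$ is a market equilibrium with utility vector $\udag$; and (iii) that the minimum–edge–price claim then drops out of (i) and (ii) by strong duality. Strategyproofness comes for free once (ii) holds: by the computation preceding the theorem, the market payment \eqref{eq:p} induced by setting $\um=\umaxm$ equals the VCG payment $\pmdag$, and VCG is strategyproof (\cite{ausubel2006lovely}).

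For the upper bound (i) I would work entirely on the dual \eqref{eq:D1bar} via Proposition \ref{prop:primal_dual}. Fix any equilibrium $(\uopt,\tollopt)$ and any agent $\m$. The restriction $(\uopt_{-m},\tollopt)$ is dual–feasible for the economy on $\M\setminus\{\m\}$, since deleting $\m$ only removes the stability constraints \eqref{subeq:D11} whose group contains $\m$ (and drops $\um$ from the objective), and the surviving constraints are exactly the reduced dual's stability constraints. Weak duality for the reduced economy gives $\sum_{\m'\neq\m}\uopt_{\m'}+\sum_{t,\e}\qe\tolloptt_\e \geq S_{-\m}(\xoptmm)$, while strong duality for the full economy gives $\sum_{\m}\uopt_\m+\sum_{t,\e}\qe\tolloptt_\e = S(\xopt)$. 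Subtracting yields $\uopt_\m \leq S(\xopt)-S_{-\m}(\xoptmm)=\umaxm$, the claimed bound.

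Given (i) and (ii), the minimum–edge–price claim (iii) is immediate, since strong duality pins $\sum_\m\um+\sum_{t,\e}\qe\tollet$ to the constant $S(\xopt)$ on every equilibrium: maximizing each agent's utility is the same as minimizing the total collected edge price. Concretely, $\sum_{t,\e}\qe\tolldagt_\e = S(\xopt)-\sum_\m\umaxm \leq S(\xopt)-\sum_\m\umopt = \sum_{t,\e}\qe\tolloptt_\e$, which is the stated inequality.

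The hard part is the achievability claim (ii), and this is where the gross–substitutes structure is essential. I would pass to the auxiliary economy of Lemma \ref{lemma:integer}, in which the agents are indivisible goods priced at $\um$ and the route–time slots are buyers with augmented valuations $\barVrz$; market–equilibrium utility vectors then correspond exactly to Walrasian equilibrium price vectors on the goods. Since $\barVrz$ is gross substitutes (Lemma \ref{lemma:condition_gross}), the set of Walrasian equilibrium price vectors is a nonempty lattice (\cite{gul1999walrasian}), so its coordinatewise maximum $\u^{\vee}$ is again an equilibrium utility vector, with $\u^{\vee}_\m\leq\umaxm$ by (i). To finish I would invoke the equivalence, under gross substitutes, between the \emph{extremal} Walrasian equilibrium and the VCG payoffs: the top of the price lattice awards each good (agent) its marginal contribution $\umaxm$ to social welfare, forcing $\u^{\vee}=\udag$. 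It is worth stressing why no naive shortcut works: extending a reduced–economy equilibrium $(\hat{\u},\hat{\toll})$ by simply setting $\um=\umaxm$ need not be dual–feasible, because the single–trip marginal $\Vrbbar-\Vbar_r^z(\bbar\setminus\{\m\})$ of inserting $\m$ into one trip can exceed the \emph{global} marginal contribution $\umaxm$, so some stability constraint \eqref{subeq:D11} may be violated. Reconciling all stability constraints simultaneously is exactly what the lattice/gross–substitutes argument buys us, and identifying the top lattice element with $\udag$ is the main obstacle; once this is done, $(\xopt,\pdag,\tolldag)$ with $\tolldag$ the associated edge prices is the desired strategyproof equilibrium.
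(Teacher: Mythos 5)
Your overall architecture is close to the paper's: the paper also establishes the result by showing that the set of equilibrium utility vectors is a lattice whose maximum element is $\udag$, via the Walrasian-equilibrium correspondence and gross substitutes (Lemma \ref{lemma:utility_lattice}, built on \cite{gul1999walrasian}), and it reads off the minimum-toll claim from strong duality exactly as in your step (iii). Your step (i) --- the weak-duality bound $\umopt \le S(\xopt)-S_{-\m}(\xoptmm)$ obtained by observing that $(\uopt_{-\m},\tollopt)$ is dual-feasible for the reduced economy --- is a correct and more elementary route to the upper bound than the paper's, which obtains it as a byproduct of the lattice theorem.

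The genuine gap is in step (ii), at the words ``with $\tolldag$ the associated edge prices.'' The lattice/gross-substitutes argument lives in the auxiliary economy whose buyers are route--departure-time slots; its output is the utility vector $\udag$ together with slot (route-and-time) prices, i.e.\ an optimal solution of \eqref{eq:D2k}. Definition \ref{def:market_equilibrium}, however, requires a nonnegative \emph{edge} price vector $\tolldag=(\tolldagt_e)$ satisfying stability \eqref{eq:stability} and market clearing \eqref{eq:mc} edge by edge; equivalently you must exhibit $(\udag,\tolldag)$ optimal for \eqref{eq:D1bar}, not merely $(\udag,\lambda^{\dagger})$ optimal for \eqref{eq:D2k}. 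Passing from edge prices to route prices is trivial (sum along the route), but the converse --- decomposing optimal route prices into edge prices that remain dual-feasible and complementary-slack on every edge and time step --- is not, and is precisely where the series-parallel hypothesis re-enters on the pricing side. The paper devotes Lemma \ref{lemma:utility} to this, proving by induction on the series/parallel decomposition that every optimal solution of \eqref{eq:D2k} is supported by some edge price vector optimal for \eqref{eq:D1bar}. Your proposal never constructs $\tolldag$, so the achievability claim is incomplete as written. A smaller omission: the identification of the top lattice element with marginal contributions gives marginal contributions \emph{in the slot-constrained economy}; to conclude these equal $\umaxm = S(\xopt)-S_{-\m}(\xopt_{-m})$ computed in the original network you need the optimal welfare of \eqref{eq:LP2k} to agree with that of \eqref{eq:LP1bar} both with and without agent $\m$, which follows from Lemma \ref{lemma:FF} (the capacity vector $\kopt$ does not depend on the agent set) but should be stated.
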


Theorem \ref{theorem:strategyproof} shows that there exists a market equilibrium that can be implemented by platforms in a centralized manner -- agents report their private preference parameters to the platform, and the platform mediates the market on the agents' behalf. Our result shows that the platform has to implement the equilibrium that maximizes agents' utilities in order to ensure that agents will not lie about their preferences.

We prove Theorem \ref{theorem:strategyproof} in two steps that can be viewed as the dual of the two steps in the proof of Theorem \ref{theorem:sp}: Firstly, we show that a utility vector $\uopt$ is an equilibrium utility (i.e. there exists a  price vector $\tollopt$ such that $(\uopt, \tollopt)$ is an optimal solution of \eqref{eq:D1bar}) if and only if there exists a vector $\lambda^{*}= (\lambda^{*}_r)_{r \in R}$ such that $(\uopt, \lambda^{*})$ is an optimal solution of \eqref{eq:D2k} -- the dual of \eqref{eq:LP2k} (Lemma \ref{lemma:utility}). 
\begin{subequations}
        \makeatletter
        \def\@currentlabel{D$k^*$}
        \makeatother
        \label{eq:D2k}
        \renewcommand{\theequation}{D$k^*$.\alph{equation}}
    \begin{align}
         \min_{\u, \lambda}  \quad & \sum_{\m \in \M} \um + \sum_{\r \in \R}  \kopt_r \lambda_r, \notag\\
   s.t. \quad  &  \sum_{\m \in \b} \um + \lambda_r \geq \Vrbbar, \quad \forall (\b, \r) \in \B \times \R, \label{subeq:D2k1}\\
    &\um \geq 0, ~\lambda_r \geq 0, \quad \forall \m \in \M,  \quad \forall \r \in \R.\label{subeq:D2k2}
    \end{align}
\end{subequations}
Here, $\lambda$ is the dual variable of constraint \eqref{subeq:LP2k2}, which can be viewed as the price for routes (instead of for edges as in $\toll$). In particular, $\lambda_r$ is the price for using a unit capacity on route $\r$. Thus, step 1 indicates that the set of agents' equilibrium utilities with edge-based  pricing in the original network is the same as the set of equilibrium utilities achieved with route-based  pricing when trips are organized according to the flow capacity $\kopt$. Secondly, we demonstrate that $\udag$ is a part of an optimal solution of \eqref{eq:D2k}, and the set of all equilibrium utility vectors is a lattice with the maximum element being $\udag$ (Lemma \ref{lemma:utility_lattice}). This step leverages the connection between the equilibrium coalition formation given $\kopt$ and the Walrasian equilibrium of the auxiliary economy, and follows the results in \cite{gul1999walrasian}.

\section{Extensions of the basic model}\label{sec:multipop}
\subsection{Extension I: Capacity sharing over time}\label{subsec:extension_1}
In this section, we extend the static capacity sharing problem to multiple time steps $t \in [T]:= \{1, 2, \dots, T\}$. Agents, after forming a coalition $\b$, choose a route $r$ and a departure time $z \in [T]$. Thus, a trip is defined as $(\b, \r, \start)$. Agent $m$'s valuation of a trip $(\b, \r, \start)$ is given by: 
$$v_{m,r}^z(b) = \tripm - \vm \tr - \changetrip_m(|\b|)- \changevm_m (|\b|)\tr - \delaym((\start+\tr-\arrm)_{+}),$$
where $\arrm$ is agent $m$'s preferred latest arriving time, $(\start+\tr-\arrm)_{+} = \max\{\start+\tr-\arrm, 0\}$ is the amount of time agent $m$ being late, and $\delaym((\start+\tr-\arrm)_{+})$ is agent $m$'s cost of delay. The function $\delaym: \mathbb{R}_{\geq 0} \to \mathbb{R}_{\geq 0}$ can be any non-decreasing function specific to agent $m$, and $\delaym(0)=0$ for all $m \in M$. The value of a trip $(b, r, z)$ is $V_r^z(b) = \sum_{m \in b} v_{m, r}^z(b)$.

Similar to that in the static model, the outcome of the market is represented by the tuple $(x, p, \tau)$. In the multi-period model, a trip allocation vector $x= (x_r^z(b))_{r \in \R, b \in B, z \in [T]}$ is feasible if the flow induced by $x$ that enters each edge $e$ at time $t$ is less than or equal to the edge capacity:
 $$\sum_{r \in R}\sum_{\b \in \B}x_r^{t - \distre}(b) \leq \qe, \quad \forall \e \in \E, \quad \forall t\in [T],$$
where $\distre$ is the time cost from the origin to the beginning of edge $e$ along the route $r$.
Moreover, the edge price vector $\tau=(\tau_{e}^t)_{e \in E, t \in [T]}$ specifies the price of using a unit capacity when entering edge $e$ at time $t$. Thus, the price for a trip $(\start, \bbar, \r)$ equals $\sum_{\e \in \r} \toll_e^{z+\distre}$. Following Proposition \ref{prop:primal_dual}, market equilibrium exists if and only if the linear relaxation of the optimal trip allocation in multi-period model has an integer optimal solution: 
\begin{subequations}\label{eq:LP1new}
    \begin{align}
         \max_{\xbar} \quad &\Sx=\sum_{\start=1}^T \sum_{r \in R}\sum_{\b \in \B}V_r^z(b) x_r^z(b), \notag\\
  s.t. \quad   &\sum_{\start=1}^T \sum_{r \in R}\sum_{\b \ni m}x_r^z(b) \leq 1, \quad \forall \m \in \M,  \label{subeq:LP11new}\\
&\sum_{r \in R}\sum_{\b \in \B}x_r^{t - \distre}(b)  \leq \qe, \quad \forall \e \in \E, \quad \forall t=1, \dots, T, \label{subeq:LP12new}\\
    &x_r^z(b) \geq 0, \quad \forall \start =1, \dots, T, \quad \forall r\in R, \quad \forall b \in B. \label{subeq:LP13new}
    \end{align}
\end{subequations}
All results in Section \ref{sec:tractable_pooling} can be extended to the multi-period model. In particular, 
market equilibrium exists under the same conditions that the network is series-parallel, and all agents have homogeneous capacity sharing disutilities (Theorem \ref{theorem:sp}). 

Similar to that in the static model, we take a two-step approach in the proof. First, in series-parallel networks, we can construct a flow vector $w^*=(w^{t*}_{r})_{r \in R, t \in [T]}$ such that an optimal trip allocation restricted to allocating $w^{t*}_{r}$ capacity to each route $r$ with each departure time $t$ is also an optimal solution of the original problem (Lemma \ref{lemma:FF_dynamic}). This result is an extension of Lemma \ref{lemma:FF} in the static model. Interestingly, $w^*$ is a \emph{temporally repeated flow} of the static flow vector $k^*$ computed in Algorithm \ref{alg:flow}, i.e. $$w^{t*}_r= \kopt_r, \quad \forall r \in R, \quad \forall t=1, \dots, T- d_r.$$ 

\begin{lemma}\label{lemma:FF_dynamic}
If the network is series-parallel, then any optimal solution of \eqref{eq:LP2new} is an optimal solution of \eqref{eq:LP1new}:  
\begin{subequations}\label{eq:LP2new}
        \begin{align}
    \max_{\xbar} \quad &\Sx=\sum_{\start=1}^T \sum_{r \in R}\sum_{\b \in \B}V_r^z(b) x_r^z(b) \notag\\
  s.t. \quad   &\sum_{\start=1}^T \sum_{r \in R}\sum_{\b \ni m} x_r^z(b) \leq 1, \quad \forall \m \in \M,  \label{subeq:LP2k1new}\\
&\sum_{\b \in \B}x_r^z(b)  \leq k^{*}_r, \quad \forall \r \in \R, \quad \forall \start=1, \dots, T-\tr, \label{subeq:LP2k2new}\\
 &x_r^z(b) \geq 0, \quad \forall \bbar \in \Bbar, \quad \forall \r \in \R, \quad \forall \start =1, 2, \dots, T. \label{subeq:LP2k3new}
    \end{align}
    \end{subequations}

\end{lemma}

The proof of Lemma \ref{lemma:FF_dynamic} follows the similar idea of constructing an optimal trip allocation $\hat{x}$ that satisfies the flow capacity constraint associated with $w^*$ from an optimal fractional solution $x$ that may violate the capacity constraint associated with $w^*$. However, the construction of $\hat{x}$ is more complex than that of Lemma \ref{lemma:FF} in the static model due to the need of re-arranging trip start time in $\hat{x}$. 
We construct such $\xhat$ by redistributing flow of agent groups in $x$, ensuring no group has later arrival time in $\xhat$ compared to $\x$ and that agent groups with higher time sensitivity are prioritized for shorter routes. The series-parallel network condition is used to show that the temporally repeated flow $w^*$ is the earliest arrival flow in series-parallel networks (Lemma \ref{lemma:earliest_arrival} in Appendix \ref{apx:multi}). Thus, $\xhat$ has the same total flow of each $b$ as in $x$, and the flow arriving before each time step $t$ in $\xhat$ is no less than that in $x$. We also prove, using mathematical induction, that $\xhat$ has higher social welfare compared to $\x$ (i.e. $S(\xhat) \geq S(\x)$) when the network is series-parallel: If the inequality holds on any two series-parallel networks, then it also holds on the network constructed by connecting the two sub-networks in series or in parallel. 

In the second step of the proof, we follow the arguments analogous to Lemma \ref{lemma:condition_gross} -- \ref{lemma:integer} to verify the gross substitutes condition and prove equilibrium existence in the multi-period model. This also indicates that the equilibrium computation follows a similar procedure to that in the static model: We first compute the temporally repeated flow $w^*$, and then compute the Walrasian equilibrium in the auxiliary market of allocating agents to the flow capacity. Furthermore, Theorem \ref{theorem:strategyproof} also holds for the multi-period model: there exists a market equilibrium that achieves the highest utility of all agents, and that equilibrium also coincides with the outcome of a VCG mechanism. The details of the proofs are included in Appendix \ref{apx:multi}.

\subsection{Extension II: General networks and agent preferences}\label{sec:extension_2}
In this section, we generalize the equilibrium existence and computation results to general networks with multiple sources and sinks and agents with heterogeneous capacity sharing disutilities. We discuss extensions on the pricing mechanism such that the equilibrium still exists in the general setting. We include an equilibrium computation algorithm in Appendix \ref{apx:multi}. Additionally, in Appendix \ref{subsec:numerical}, we demonstrate the effectiveness of this algorithm in computing equilibrium in a general setting through a numerical experiment on carpool pricing, using data collected from the California Bay Area.

For simplicity, we still consider the static model but all results can be extended to the multi-period model as in Section \ref{subsec:extension_1}. In this generalized setting, the set of all agents $M$ is partitioned into a finite number of populations $\{M_i\}_{i \in I}$, where agents in different populations are associated with different source-sink pairs and capacity sharing disutilities. For each $i \in I$, we denote the set of routes connecting the origin and destination as $R_i$. Examples \ref{ex:wheatstone}
 -- \ref{ex:carpool} demonstrate that market equilibrium may not exist in the general setting. 
 To overcome this issue, we consider (i) setting route-based pricing instead of edge-based pricing; (ii) creating separate market for each population $M_i$, i.e. agents in $M_i$ only share trips with others in the same subset, and the set of feasible trip groups of market $i$ is $B_i$. 

\medskip 
 \begin{proposition}\label{prop:extension}
     \begin{enumerate}
         \item[(i)] If the network is series-parallel and $|I|>1$, then market equilibrium $(x^*, p^*, \tau^*)$ exists, where agents from the same population form coalitions among themselves, and $\tau^*_i= (\tau^*_{e,i})_{e \in E}$ is the population-specific edge price vector for $i$. 
         \item[(ii)] If the network is not series-parallel and $|I|=1$, then market equilibrium $(x^*, p^*, \lambda^*)$ exists, where $\lambda^*= (\lambda^*_{r})_{r \in R}$ is the route-based price vector in equilibrium. 
         \item[(iii)] If the network is not series-parallel and $|I|>1$, then market equilibrium $(x^*, p^*, \lambda^*)$ exists, where agents from the same population form coalitions among themselves, and $\lambda^*= (\lambda^*_{r,i})_{r \in R, i \in I}$ is the route-based and population-specific price vector in equilibrium.
     \end{enumerate}
 \end{proposition}
\medskip


In (i) and (iii), when creating separate market for each agent population, we need to determine an allocation of network capacity to each market associated to a population. Building on Proposition \ref{prop:extension}, we formulate the following integer optimization problem that computes the optimal capacity allocation: 
\begin{subequations}
        \makeatletter
        \def\@currentlabel{${\mathrm{IP}}_{\mathrm{mult}}$}
        \makeatother
        \label{IP-mult}
        \renewcommand{\theequation}{${\mathrm{IP}}_{\mathrm{mult}}$.\alph{equation}}
\begin{align}
\max_{x, f}\quad & \Sx=\sum_{r \in R}\sum_{i \in I}\sum_{\b \in \B_i}\Vrbbar \xrbbar \notag\\
s.t. \quad   &\sum_{r \in R}\sum_{\b \ni \m}\xrbbar \leq 1, && \forall i \in I, ~\forall \m \in \M_i, \label{eq:ipMult-u} \\
& \sum_{\bbar \in \Bbar_i} \xrbbar  \leq f_{r,i}, &&  \forall i \in I, ~  \forall \r \in R_i, \label{ipmult-cap} \\
& \sum_{i \in I} \sum_{r \in \{R_i | r \ni e\}} f_{r,i} \leq q_e, && \forall e \in E, \\ 
& \xrbbar \in \{0, 1\}, \quad f_{r,i} \in \mathbb{Z}_+ 
 && \forall i \in I, ~  \forall \bbar \in B_i, ~ \forall \r \in R_i.
\end{align}
\end{subequations}
In \eqref{IP-mult}, $f=(f_{r,i})_{r \in R_i, i \in I}$ is the route capacity allocation vector, where $f_{r,i}$ is the integer capacity on route $r$ that is allocated to agents in $M_i$.




The problem of computing the socially optimal $f^*$ and equilibrium trip allocation $x^*$ is NP-hard even if the network is series-parallel because \eqref{IP-mult} is a reduction from the NP-hard \emph{edge-disjoint paths} problem. 
We can develop a \emph{Branch-and-Price algorithm} to compute the market equilibrium in the general setting. Let $(\mathrm{LP}_\mathrm{mult})$ be the LP-relaxation of \eqref{IP-mult}, and let $(x^*, f^*)$ be an optimal LP-solution. 
If the capacity allocation vector $f^*$ is integral, one can efficiently compute an equilibrium for each submarket via Algorithm \ref{alg:allocation}. Otherwise, there must exist at least one $(i, r)$ such that $q^{*}_{r,i}$ is fractional. We then branch on this variable to create two subproblems by including one of the two new constraints $q_{r, i} \leq \left\lfloor q^{*}_{r,i} \right\rfloor$, $q_{r,i} \geq \left\lfloor q^{*}_{r,i} \right\rfloor$, and compute the new optimal solution associated with the LP relaxation given the added constraint. 
We note that the computation of the LP relaxation (which has exponential number of variables) builds on the fact that the trip value function in each sub-market satisfies gross substitute condition due to the identical capacity sharing disutility (Lemma \ref{lemma:condition_gross}), and thus can be solved efficiently by column generation. 
The algorithm terminates when all $q^{*}_{r,i}$ variables are integer-valued. A formal description of Algorithm \ref{alg:branchAndPrice} and implementation details are provided in Appendix \ref{apx:multi}, with numerical experiments on a California Bay Area carpooling example reported in Appendix \ref{subsec:numerical}.

\section*{Acknowledgement}
This work has been partly funded by AFOSR grant FA9550-19-1-0263

\bibliographystyle{ACM-Reference-Format}
\bibliography{sn-bibliography}

\appendix
\newpage \section{Agent allocation algorithm}\label{ap:alg}

The algorithm is a modified version of the Kelso-Crawford algorithm (\cite{kelso1982job}) for computing Walrasian equilibrium in the equivalent economy with augmented trip value functions $\bar{V}$. 

\begin{algorithm}[htp]
\LinesNumbered
\SetAlgoLined
\textbf{Initialize:} Set $u_m \leftarrow 0 ~ \forall m \in M$; $\bl \leftarrow \emptyset, ~\forall l \in L$. Set $\epsilon \in \left(0, 1/2|M|\right)$\; 
\While{TRUE}{
\For{$\l$ in $\L$}{
$J_l \leftarrow \emptyset$, $~ \hlbar \leftarrow \emptyset$, $|\hlbar| \leftarrow 0, ~ \etalowl \leftarrow 0, ~ \phil \leftarrow 0$, $\forall l \in L$\; 
\For{$\hat{m}$ in $\bl$ in decreasing order of $\eta_{m,l}$}{
\eIf{$\eta_{\hat{m}, l} < \left(\xi_l(|\hlbar|+1) - \xi_l(|\hlbar|)\right)d_l$}{break}{$|\hlbar| \leftarrow |\hlbar|+1$, $\hlbar \leftarrow \hlbar\cup \{\hat{m}\}$, $\etalowl \leftarrow \eta_{\hat{m}, l}$}
}
$\phil \leftarrow \sum_{\m \in \hlbar} \eta_{m, l} -\xi_l(|\hlbar|)d_l - \sum_{m \in \bl} u_m$\;
\For{$\hat{j}$ in $S\setminus \bl$ in decreasing order of $\eta_{j,l}-u_j$}{ 
\uIf{$\eta_{\hat{j}, l} \geq \etalowl \geq (\xi_l(|\hlbar|+1) - \xi_l(|\hlbar|)) d_l$}{$|\hlbar'|\leftarrow |\hlbar|+1$, $\hlbar' \leftarrow \hlbar \cup \{\hat{j}\}$, $\phil' \leftarrow \phil + \eta_{\hat{j},l} - (\xi_l(|\hlbar|+1) - \xi_l(|\hlbar|)) d_l - u_{\hat{j}} - \epsilon$, $\etalowl'\leftarrow \etalowl$}
\uElseIf{$\etalowl \geq \eta_{\hat{j}, l} \geq (\xi_l(|\hlbar|+1) - \xi_l(|\hlbar|)) d_l$}{$|\hlbar'|\leftarrow |\hlbar|+1$, $\hlbar' \leftarrow \hlbar \cup \{\hat{j}\}, \etalowl' \leftarrow \eta_{\hat{j},l}$, $\phil' \leftarrow \phil + \eta_{\hat{j},l} - (\xi_l(|\hlbar|+1) - \xi_l(|\hlbar|)) d_l- u_{\hat{j}} - \epsilon$}
\uElseIf{$\eta_{\hat{j},l} \geq \etalowl$ and $(\xi_l(|\hlbar|+1) - \xi_l(|\hlbar|)) d_l \geq \etalowl$}{$\hlbar' \leftarrow \hlbar \cup \{\hat{j}\} \setminus \{\underline{l}\}, \etalowl' \leftarrow \eta_{\hat{j},l}$, $\phil' \leftarrow \phil + \eta_{\hat{j},l} - \etalowl- u_{\hat{j}} - \epsilon$}
\Else{$\phil' \leftarrow \phil- u_{\hat{j}} - \epsilon$}
\eIf{$\phi_{l}^{'} \leq \phi_l$}{break}{$(\hlbar, |\hlbar|, \phil, \etalowl) \leftarrow (\hlbar', |\hlbar'|, \phil', \etalowl')$, $J_l \leftarrow J_l \cup \{\hat{j}\}$}}}
\eIf{$J_l =\emptyset, ~ \forall l \in L$}{
break}{Arbitrarily pick $\hat{l}$ with $J_{\hat{l}} \neq \emptyset$\;
$\bar{b}_{\hat{l}} \leftarrow \bar{b}_{\hat{l}} \cup J_{\hat{l}}$\;
$\bar{b}_{\hat{l}} \leftarrow \bar{b}_{\hat{l}} \setminus J_{\hat{l}}, ~ \forall l \neq \hat{l}$\;
$u_m \leftarrow u_m+\epsilon, ~ \forall \m \in J_{\hat{l}}$.}
}
\textbf{Return} $\left(\bl\right)_{l \in L}$, $(\hlbar)_{l \in L}$
\caption{Allocating capacity sharing coalitions}
\label{alg:allocation}
\end{algorithm}
For each $r \in \Ropt$, we re-write the augmented trip value function \eqref{eq:Vtilde} with agent coalition $\ball \in \Ball$ and slot $l \in L_r$ (with slight abuse of notation) as follows: 
\begin{align}\label{eq:barV}
    \barV_l(\ball)
    &=\sum_{\m \in h_l(\ball)} \eta_{m,l} -\xi_l(|h_l(\ball)|), \quad \forall \ball \in \Ball, \quad \forall \l \in \L_r, 
\end{align}
where 
\begin{align*}\eta_{m,l} &\deleq \tripm - \vm\tr, \\ \xi_l(|h_l(\ball)|) &\deleq \changetrip(|h_l(\ball)|)|h_l(\ball)|+ \changevm(|h_l(\ball)|)|h_l(\ball)| \tr,
\end{align*}
and $h_l(\ball) = h_r(\ball)$ is the representative agent coalition in $\ball$ in slot $l \in L_r$ as in \eqref{eq:Vtilde}.

Algorithm \ref{alg:allocation} starts with setting the utility of all agents to be zero, and the set of agents assigned to each slot $l \in L$ to be empty (Line 1). The algorithm keeps track of the following quantities: 
\begin{itemize}
\item[-] $u_m$ is the utility of each agent $m \in M$.
    \item[-] $\bl$ is the augmented agent coalition that is assigned to each slot $l \in L$.
    \item[-] $\hlbar=h_l(\bl)$ is the representative agent coalition given $\bl$ in slot $\l \in L$. $|\hlbar|$ is the size of the representative agent coalition.
    \item[-] $\phil(\ball_l) = \barV_l(\bl) - \sum_{m \in \bl} \um$ is the difference between the augmented trip value function with assigned agent coalition $\bl$ in slot $l$ and the total utility of agents in $
    \bl$. 
    \item[-] $J_l = \arg\max_{J \subseteq \M \setminus \bl} \phil(\ball \cup J) - \phil(\ball)$ is the set of agents, when added to $\bl$, maximally increases the value of $\phil$.
\end{itemize}

In each iteration of Algorithm \ref{alg:allocation}, Lines 3-29 compute the representative agent coalition $\hlbar$, and the set $J_l$ based on the current agent coalition assignment and the utility vector for each $l \in L$. In particular, the representative agent coalition $\hlbar$ is computed by selecting agents from the currently assigned augmented agent coalition $\bl$ in decreasing order of $\eta_{l, m}$ in \eqref{eq:barV}, and the last selected agent $\hat{\m}$ (i.e. the agent in $\hlbar$ with the minimum value of $\eta_{l, m}$) satisfies $\eta_{l, \mhat} \geq \xi_r(|\hlbar|) - \xi_r(|\hlbar|-1)$. That is, adding agent $\hat{\m}$ to the set $\hlbar \setminus \{\hat{\m}\}$ increases the trip value, but adding any other agents decrease the trip value, i.e. $\eta_{l, \mhat}  < \xi_r(|\hlbar|+1) - \xi_r(|\hlbar|)$ for all $\m \in \bl\setminus \hlbar$. The value of $\hlbar$, $|\hlbar|$ records the element and size of the representative agent coalition in the current round, and $\lambda_l$ records the value of $\eta_{l, \hat{m}}$. We also compute the value of the function $\phil(\ball_l) = \barV_l(\bl) - \sum_{m \in \bl} \um$ in Line 12. 


Furthermore, since the augmented value function $ \barV_l(\ball)$ satisfies monotonicity and gross substitutes conditions, we can compute the set $J_l = \arg\max_{J \subseteq \M \setminus \bl} \phil(\ball \cup J) - \phil(\ball)$ by iteratively adding agents not in $\bl$ into $J_l$ greedily according to their marginal contribution to the value of the function $\phi_l(\bl)$  (\cite{gul1999walrasian}). In this step, we do not re-compute the represented agent coalition or the augmented trip value function every time we add an agent $\hat{j}$ to $\bl$. Instead, we only need to compare $\hat{j}$ with $\lambda_l$ and the marginal change of $\phi_l(|\hlbar|)$ to determine if the representative agent coalition needs to include $\hat{j}$ (Lines 13-29). 
 In Lines 30-38, if there exists at least one slot $\hat{l} \in \L$ with $J_{\hat{l}} \neq \emptyset$, then we choose one such slot $\hat{l}$, and re-assign agents in coalition $J_{\hat{l}}$ to the current assignment $\hat{l}$. We increase the utility $\um$ for the re-assigned agents $\m \in J_{\hat{l}}$ by a small number $\epsilon>0$. As $u_m$ increases for each $m \in M$, the algorithm eventually terminates in finite time when $J_l = \emptyset$ for all $l \in L$. The algorithm returns the representative agent coalition of each slot $\left(\hlbar\right)_{l \in L}$ (Line 39). From Lemma \ref{lemma:kc}, we know that $\left(\hlbar\right)_{l \in L}$ is a Walrasian equilibrium good allocation in the auxiliary economy. Then, following from Lemma \ref{lemma:integer}, the corresponding trip organization vector is given by \eqref{eq:construct}. 


\section{Review of Combinatorial Auction Theory}\label{sec:review}

Consider an economy with a finite set of indivisible goods $M$ and a finite set of buyers $L$. Each buyer $l \in L$ has a valuation function $\bar{V}_l: \bar{B} \to \mathbb{R}$, where each $\bar{b} \in \bar{B} = 2^{M}$ is a bundle of goods, and $\bar{V}_l(\bar{b})$ is buyer $l$'s valuation of $\bar{b}$. The good allocation vector in this economy is $\bar{x}= (\bar{x}_{l}(\bar{b}))_{l \in L, \bar{b} \in \bar{B}}$, where $\bar{x}_{l}(\bar{b}) =1$ if good bundle $\bar{b}$ is allocated to buyer $l$ and 0 otherwise.

\textbf{Equivalence between coalition formation and good allocation.} Our problem of forming capacity sharing coalitions without the capacity constraint can be equivalently viewed as the good allocation problem in the economy with indivisible goods. In particular, the set of agents $M$ is equivalently viewed as the set of goods $M$. The set of route slots $\L$ is viewed as the set of buyers. Then, the augmented trip value function $\bar{V}_r(\bar{b})$ is equivalent to any buyer $l \in L_{r}$'s valuation of good bundle $\bar{b}$. Each agent $m$'s utility is equivalent to the price of good $m$. 

We next define Walrasian equilibrium of the equivalent economy. 

\begin{definition}[Walrasian equilibrium \cite{kelso1982job}]\label{def:we}
A tuple $\(\bar{x}^*, \uopt\)$ is a Walrasian equilibrium if 
\begin{itemize}
    \item[(i)] For any $\l \in \L$, $\bl \in \argmax_{\ball \in \Ball} \bar{V}_l(\ball) - \sum_{\m \in \bl} \um$, where $\bl$ is the good bundle that is allocated to $l$ given $\yopt$, i.e. $\yopt_{l}(\bl)=1$
    \item[(ii)] For any good $\m \in \M$ that is not allocated to any buyer, (i.e. $\sum_{\l \in \L} \sum_{\ball \ni \m} \yopt_l(\ball) =0$), the price $\umopt=0$. 
\end{itemize}
\end{definition}

\begin{lemma}[\cite{kelso1982job}]\label{lemma:existence_eq}
If the augmented value function $\bar{V}_r(\ball)$ satisfies the monotonicity and gross substitutes conditions for all $r \in R$, then Walrasian equilibrium exists in the equivalent economy with indivisible goods. \end{lemma}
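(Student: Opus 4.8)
The plan is to give a constructive proof via the ascending-price (``salary adjustment'') process of Kelso and Crawford, which is precisely the procedure embodied in Algorithm \ref{alg:allocation_sketch}. In the equivalent economy the goods are the agents $\M$, the buyers are the slots $\L$, and the price of good $\m$ is the agent utility $\um$. First I would initialize all prices at $\um = 0$ and all assigned bundles empty, and then iterate: at each round every slot $\l$ identifies a bundle (or an incremental set $J_l$) that maximizes its net value $\barV_l(\ball) - \sum_{\m \in \ball}\um$ at the current prices; whenever a slot wishes to add agents, those agents are reassigned to it, removed from the other slots, and their prices are raised by a fixed increment $\epsilon$. The two things that must be verified are that this process terminates and that its terminal configuration satisfies both conditions of Definition \ref{def:we}.

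Before analyzing convergence I would record two structural consequences of the hypotheses. Monotonicity (Definition \ref{def:monotonicity}) guarantees that each demanded bundle is well defined and can be produced by greedily adding agents in decreasing order of marginal net value, exactly as in Lines 5--14 of Algorithm \ref{alg:allocation_sketch}; it also caps every price, since a good is never demanded once $\um$ exceeds its largest marginal contribution, and the bounded valuations make these caps uniform. Gross substitutes (Definition \ref{def:gross_substitute}) yields the key \emph{persistence} property: if the prices of a subset of goods are raised while the remaining prices are held fixed, then every good a buyer was already content to hold at the old prices remains in some demanded bundle. Equivalently, raising the price of the contested goods never destroys a buyer's demand for the goods whose prices did not move, so the reassignment step makes monotone progress rather than cycling.

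With these in hand, termination follows because prices are nondecreasing, advance in discrete steps of $\epsilon$, and are uniformly bounded, so only finitely many rounds occur. At the terminal prices $\uopt$ no slot wishes to add any further agent, so the demanded bundles can be realized simultaneously by an integral allocation $\yopt$, which gives condition (i) of Definition \ref{def:we}: each slot $\l$ holds a bundle maximizing $\barV_l(\ball)-\sum_{\m\in\ball}\um$ over $\ball \in \Ball$. For condition (ii), note that a good's price is incremented only when it is actively reassigned, so any good that is ultimately unallocated was never reassigned and its price remains at $0$. Hence $(\yopt,\uopt)$ is a Walrasian equilibrium.

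The hard part is the persistence property drawn from gross substitutes: proving that raising the prices of the contested goods leaves demand for the unchanged goods intact — so the process converges to a feasible market-clearing allocation instead of oscillating — is the technical heart of the Kelso--Crawford argument, and it is exactly where condition (ii) of Definition \ref{def:gross_substitute} is invoked. A non-constructive alternative would be to show that gross substitutes renders the social-welfare function $M^{\natural}$-concave, so that the linear relaxation of the welfare-maximizing assignment has an integral optimum; LP duality would then supply the Walrasian prices and complementary slackness the two equilibrium conditions. I would favor the constructive route, since it directly reuses Algorithm \ref{alg:allocation_sketch} already developed in the paper and ties the existence claim to the computation in Proposition \ref{prop:complexity}.
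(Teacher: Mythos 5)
The paper does not prove this lemma at all: it is imported verbatim from \cite{kelso1982job} and used as a black box, so there is no internal proof to compare against. Your sketch reconstructs exactly the argument of the cited source (the ascending ``salary adjustment'' process, termination by bounded monotone prices, and verification of the two conditions of Definition \ref{def:we} at termination), so in spirit you are doing the same thing the paper does, just one level deeper.

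Two loose ends are worth flagging. First, you explicitly defer the persistence (``no destruction of demand for goods whose prices did not move'') property to gross substitutes without proving it; since that is precisely the content of the cited theorem, this is acceptable for a citation-backed lemma, but your write-up should be honest that the proof is therefore not self-contained. Second, and more substantively, your claim that ``at the terminal prices $\uopt$ no slot wishes to add any further agent, so the demanded bundles can be realized simultaneously'' is not quite right for a fixed increment $\epsilon>0$: at termination each slot only declines to add agents at the perturbed prices $\um+\epsilon$, so the terminal allocation is an $\epsilon$-approximate equilibrium, not automatically an exact one. To get exact existence you need either a limiting argument as $\epsilon\to 0$ (with a compactness/finiteness selection) or a discreteness argument tying $\epsilon$ to the granularity of the valuations --- this is exactly why the paper's Lemma \ref{lemma:kc} and Proposition \ref{prop:complexity} impose $\epsilon<\frac{1}{2|\M|}$. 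Adding one sentence closing that gap would make the constructive route complete.
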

\begin{lemma}[\cite{gul1999walrasian}]\label{lemma:lattice}
If the value function $\bar{V}$ satisfies the monotonicity and gross substitutes conditions, then the set of Walrasian equilibrium prices $U^{*}$ is a lattice and has a maximum component $\umax = \left(\umaxm\right)_{\m \in \M}$ as in \eqref{eq:umaxm}. 
\end{lemma}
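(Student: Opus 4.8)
The plan is to identify the set $\Uopt$ of Walrasian equilibrium price vectors with the minimizers of a convex potential (Lyapunov) function, and then to use the discrete convexity forced by gross substitutes to deduce both the lattice structure and the maximal element. Working in the equivalent economy of the appendix, where the agents $\M$ are the indivisible goods and the slots $\L$ are the buyers with valuations $\barV_\l$, I would first write each buyer's indirect utility at a price vector $\u = (\um)_{\m \in \M} \geq 0$ as $\phi_\l(\u) = \max_{\ball \in \Ball}\{\barV_\l(\ball) - \sum_{\m \in \ball}\um\}$ and define the Lyapunov function
\[
    L(\u) = \sum_{\l \in \L}\phi_\l(\u) + \sum_{\m \in \M}\um.
\]
Because $\barV$ satisfies monotonicity and gross substitutes, the allocation program \eqref{eq:LPy} has an integral optimum (Lemma \ref{lemma:integer}); hence, by strong LP duality together with complementary slackness, a vector $\u \geq 0$ supports a Walrasian equilibrium precisely when it minimizes $L$. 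This yields $\Uopt = \arg\min_{\u \geq 0} L(\u)$.

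Next I would establish the lattice property. The crucial structural fact is that under gross substitutes every indirect utility $\phi_\l$ is submodular in $\u$ (equivalently, $L$ is $L^\natural$-convex in the sense of discrete convex analysis); since $\sum_{\m}\um$ is modular, $L$ is itself submodular. Then for any two minimizers $\up, \upp$ submodularity gives $L(\up \wedge \upp) + L(\up \vee \upp) \leq L(\up) + L(\upp) = 2\min L$, and as each summand on the left is at least $\min L$, both $\up \wedge \upp$ and $\up \vee \upp$ are again minimizers. Thus $\Uopt$ is closed under componentwise meet and join, i.e. a lattice; being closed, convex, and bounded above (prices cannot exceed marginal values), it contains its coordinatewise supremum, which is the greatest element $\umax$.

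Finally I would identify $\umax$ with the VCG utilities in \eqref{eq:umaxm}. For the upper bound, stability of any equilibrium forces each coordinate to satisfy $\um \leq S(\xopt) - S_{-\m}(\xoptmm)$, since deleting good $\m$ and reoptimizing can raise the buyers' aggregate surplus by at most $\um$. For the matching lower bound I would verify that the marginal-product prices $\um = S(\xopt) - S_{-\m}(\xoptmm)$ are themselves supported by an equilibrium, so that they lie in $\Uopt$; combined with the upper bound, this forces the greatest element of $\Uopt$ to have $\m$-th coordinate equal to $S(\xopt) - S_{-\m}(\xoptmm)$, which by \eqref{eq:umaxm} is exactly $\umaxm$.

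The main obstacle is the submodularity step: proving that the gross substitutes condition implies each buyer's indirect utility is submodular in the price vector (equivalently that $L$ is $L^\natural$-convex), so that the minimizer set is a lattice. This is precisely where gross substitutes is indispensable — without it $L$ need not be submodular and $\Uopt$ need not form a lattice — and it is the technical heart of the cited result of \cite{gul1999walrasian}. A secondary, more routine point is the identification $\Uopt = \arg\min_{\u \geq 0} L(\u)$, which relies on the integrality guaranteed by Lemma \ref{lemma:integer} and on strong LP duality.
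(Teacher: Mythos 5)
This lemma is imported by the paper directly from \cite{gul1999walrasian} as background material in the appendix review section; the paper gives no proof of its own, so there is nothing in-text to compare your argument against line by line. Your outline is nevertheless a faithful reconstruction of how the result is established in the literature, albeit via the Lyapunov-function route (closer to Ausubel's later treatment) than via Gul and Stacchetti's original order-theoretic argument: the identification $\Uopt=\arg\min_{\u\ge 0}L(\u)$ via LP duality and integrality is sound, the deduction ``submodular objective $\Rightarrow$ minimizer set is a sublattice'' is correct, and the compactness argument for the existence of a greatest element is fine.

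That said, as a proof the proposal defers exactly the two load-bearing steps. First, the submodularity of each indirect utility $\phi_\l$ under gross substitutes is asserted, not proven; you correctly flag this as the technical heart of the matter, but it is the content of the cited theorem rather than a lemma you can wave at. Second, and less clearly acknowledged, the identification of the greatest element with the VCG utilities $\umaxm=S(\xopt)-S_{-\m}(\xoptmm)$ is only half done: your upper bound $\um\le S(\xopt)-S_{-\m}(\xoptmm)$ for every equilibrium $\u$ is a clean duality argument, but the matching lower bound --- that the vector of marginal products is \emph{itself} supported by an equilibrium --- is stated as ``I would verify'' with no argument. This step is not a routine consequence of the lattice structure (the coordinatewise supremum of $\Uopt$ could a priori sit strictly below the marginal-product vector), and it is precisely where gross substitutes is used a second time. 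Since the paper treats the whole lemma as a citation, these gaps do not put you at odds with anything the authors do; but if the goal were a self-contained proof, both steps would need to be supplied rather than named.
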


\begin{lemma}[\cite{kelso1982job}]\label{eq:greedy_oracle}
Given any price vector $u$, if the value function $\bar{V}_l$ for any $l \in L$ satisfies the monotonicity and gross substitutes conditions, then $\bl \in \argmax_{\ball \in \Ball} \{\bar{V}_l(\ball) - \sum_{\m \in \ball} \um\}$ can be computed by greedy algorithm.
\end{lemma}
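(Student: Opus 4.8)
The plan is to show that the greedy procedure terminates at a set maximizing the net value $g(\bar b) := \bar V_l(\bar b) - \sum_{m \in \bar b} u_m$ over $\bar b \subseteq M$, which is precisely the demanded bundle $\arg\max_{\bar b}\{\bar V_l(\bar b)-\sum_{m\in\bar b}u_m\}$ in the statement. First I would record two structural facts. Subtracting the linear price term preserves gross substitutes, so $g$ again satisfies the conditions of Definition~\ref{def:gross_substitute}; writing $v(m\mid\bar b)=\bar V_l(\bar b\cup\{m\})-\bar V_l(\bar b)$ for the marginal value, part (i) says the net marginals $v(m\mid\bar b)-u_m$ are nonincreasing in $\bar b$ (submodularity) and part (ii) is the triplet inequality. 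Together these are equivalent to $g$ being $M^\natural$-concave, which I would use in the form of the exchange inequality: for all $S,T\subseteq M$ and $i\in S\setminus T$, $g(S)+g(T)\le \max\{\,g(S\setminus i)+g(T\cup i),\ \max_{j\in T\setminus S}[\,g(S\setminus i\cup j)+g(T\cup i\setminus j)\,]\,\}$.

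The core of the argument is an invariant maintained along the greedy path. Let $\bar b_0=\emptyset$ and $\bar b_{t+1}=\bar b_t\cup\{m_{t+1}\}$, where $m_{t+1}$ is an item of maximum (strictly positive) net marginal $\mu_{t+1}:=v(m_{t+1}\mid \bar b_t)-u_{m_{t+1}}$. I claim that at every step there is a global maximizer $\bar b^\ast$ of $g$ with $\bar b_t\subseteq\bar b^\ast$. This holds vacuously for $\bar b_0$. For the inductive step, fix a maximizer $\bar b^\ast\supseteq\bar b_t$. If $m_{t+1}\in\bar b^\ast$ then $\bar b_{t+1}\subseteq\bar b^\ast$ and we are done.

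The only real obstacle is the case $m_{t+1}\notin\bar b^\ast$, and this is exactly where gross substitutes (rather than mere submodularity) is indispensable. Here I would apply the exchange inequality with $S=\bar b_{t+1}$, $T=\bar b^\ast$, $i=m_{t+1}$, so that $S\setminus i=\bar b_t$ and $T\setminus S=\bar b^\ast\setminus\bar b_t$. The non-swap branch $g(\bar b_t)+g(\bar b^\ast\cup\{m_{t+1}\})$ cannot be the maximizer: since $g(\bar b_{t+1})=g(\bar b_t)+\mu_{t+1}$ with $\mu_{t+1}>0$, it would force $g(\bar b^\ast\cup\{m_{t+1}\})>g(\bar b^\ast)$, contradicting optimality of $\bar b^\ast$. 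Hence the maximum is attained by a swap, yielding some $j\in\bar b^\ast\setminus\bar b_t$ with $g(\bar b_t\cup\{j\})+g(\bar b^\ast\cup\{m_{t+1}\}\setminus\{j\})\ge g(\bar b_t)+\mu_{t+1}+g(\bar b^\ast)$. Because $m_{t+1}$ was the greedy (maximum-marginal) choice, $v(j\mid\bar b_t)-u_j\le\mu_{t+1}$, so $g(\bar b_t\cup\{j\})\le g(\bar b_t)+\mu_{t+1}$; substituting gives $g(\bar b^\ast\cup\{m_{t+1}\}\setminus\{j\})\ge g(\bar b^\ast)$. Thus $\bar b^{\ast\ast}:=\bar b^\ast\cup\{m_{t+1}\}\setminus\{j\}$ is again a maximizer, and it contains $\bar b_{t+1}$, restoring the invariant.

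Finally I would close at termination. When the greedy stops, every $m\notin\bar b_t$ has $v(m\mid\bar b_t)-u_m\le 0$, and by the invariant $\bar b_t\subseteq\bar b^\ast$ for some maximizer $\bar b^\ast$. For each $m\in\bar b^\ast\setminus\bar b_t$, submodularity gives $v(m\mid\bar b^\ast\setminus\{m\})-u_m\le v(m\mid\bar b_t)-u_m\le 0$, so $\bar b^\ast\setminus\{m\}$ is still a maximizer; peeling off the elements of $\bar b^\ast\setminus\bar b_t$ one at a time shows $\bar b_t$ itself maximizes $g$, which is the claim. I expect the swap step of the previous paragraph to be the crux: it is precisely where the triplet condition of gross substitutes is needed, since submodularity alone would permit a greedily added item to become undesirable as the bundle grows and would not guarantee a value-preserving exchange. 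Monotonicity of $\bar V_l$ is used only lightly, to ensure the relevant marginals are well defined and that nothing is gained by discarding items; the handling of ties, and the $\epsilon$-shift used in Algorithm~\ref{alg:allocation_sketch}, affect only which maximizer is returned, not correctness.
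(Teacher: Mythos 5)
The paper itself offers no proof of this lemma --- it is imported as a black box from Kelso--Crawford (1982) (with the greedy demand computation also attributed to Gul--Stacchetti in the main text) --- so there is no in-paper argument to compare against; what you have written is a correct, essentially self-contained proof of the cited fact, and it matches the standard argument in the gross-substitutes literature. Your structure --- the invariant that the greedy set $\bar{b}_t$ is always contained in some global maximizer of $g(\bar{b})=\bar{V}_l(\bar{b})-\sum_{m\in\bar{b}}u_m$, maintained via a single-element exchange, followed by submodular peeling at termination --- is sound, including the exchange-step case analysis: the non-swap branch correctly yields a contradiction with optimality of $\bar{b}^*$, and the degenerate case $\bar{b}^*\setminus\bar{b}_t=\emptyset$ is implicitly excluded by the same contradiction. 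The one load-bearing step you import rather than derive is the passage from the paper's Definition \ref{def:gross_substitute} (submodularity plus the triplet condition) to the $M^\natural$-exchange inequality; that equivalence (Fujishige--Yang) is a genuine theorem, not a reformulation, so your proof is only as self-contained as that citation --- though this is no worse than the paper's own treatment, which cites the entire lemma. Two minor points worth noting: monotonicity is indeed not needed for the exchange argument (it only guarantees the unconstrained maximizer can be taken over all of $\bar{B}$ without loss), and your remark that the $\epsilon$-shift in Algorithm \ref{alg:allocation_sketch} affects only which maximizer is returned is slightly too casual --- the perturbed greedy returns an approximate demand, which is exactly why Proposition \ref{prop:complexity} needs $\epsilon<\frac{1}{2|\M|}$ --- but this does not affect the lemma as stated, which concerns the exact oracle.
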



\begin{lemma}[\cite{kelso1982job}]\label{lemma:kc}
For any $\epsilon < \frac{1}{2|M|}$, if the value function $\bar{V}_l$ satisfies the monotonicity and gross substitutes conditions for all $l \in L$, then $(\ball_l)_{l \in L}$ computed by Algorithm \ref{alg:allocation} is a Walrasian equilibrium good allocation. 
\end{lemma}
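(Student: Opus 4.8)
The plan is to read Algorithm~\ref{alg:allocation} as the Kelso--Crawford ascending ``salary-adjustment'' process for the equivalent economy in which the agents $\M$ are the indivisible goods, the route--departure slots $\L$ are the buyers, each buyer $l\in\L$ values a bundle $\ball\in\Ball$ by $\barV_l(\ball)$, and $\um$ is the price of good $m$; I would then verify the two defining conditions of a Walrasian equilibrium (Definition~\ref{def:we}) at the instant the loop halts, after first guaranteeing that it halts. Termination is the easy part: the price vector $u$ is coordinatewise non-decreasing, since each $\um$ is only ever raised (by $\epsilon$) in the reassignment step and never lowered, and it is bounded, because once $\um$ exceeds the largest marginal value $\max_{l\in\L}\max_{\ball\in\Ball}\{\barV_l(\ball)-\barV_l(\ball\setminus\{m\})\}\le\Vmax$ the good $m$ strictly decreases $\phil$ whenever it is added and is never demanded again, so $\um$ freezes. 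Every iteration that does not trigger the stopping rule raises at least one coordinate of $u$ by $\epsilon$, so there are at most $|\M|\Vmax/\epsilon$ iterations.

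For condition~(ii) I would maintain the invariant that, after every iteration, each good lies in at most one bundle $\bl$ and every good with $\um>0$ is currently assigned. Both hold at initialization, and the reassignment step preserves them: the goods of $J_{\hat l}$ are added to $\hat l$ and deleted from every other slot, but a deleted good necessarily belongs to $J_{\hat l}$ and is therefore simultaneously placed in $\hat l$, so no good is ever orphaned, and the only prices raised belong to goods just (re)assigned. Hence a good left unassigned at termination never had its price raised, giving $\um=0$, which is exactly condition~(ii).

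Condition~(i) is the substantive step and the one that genuinely needs gross substitutes. At termination $J_l=\emptyset$ for every $l$, and by Lemma~\ref{eq:greedy_oracle} the greedy inner loop computes $\arg\max_{J\subseteq\M\setminus\bl}\phil(J\mid\bl)$ exactly, so $\bl$ admits no profitable augmentation: $\barV_l(\bl)-\sum_{m\in\bl}\um\ge\barV_l(\ball)-\sum_{m\in\bl}\um-\sum_{m\in\ball\setminus\bl}(\um+\epsilon)$ for every $\ball\supseteq\bl$. The task is to upgrade this ``no profitable single addition'' statement into full demand optimality $\bl\in\arg\max_{\ball\in\Ball}\{\barV_l(\ball)-\sum_{m\in\ball}\um\}$. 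I would do this by invoking the gross substitutes property, under which the demand correspondence carries no complementarities: combined with monotonicity of $\barV_l$ (which controls profitable removals) it forces the absence of a profitable marginal addition to certify that $\bl$ is within total slack $\epsilon|\M|$ of the true demand value at prices $u$. The hypothesis $\epsilon<\tfrac{1}{2|\M|}$ then makes this slack strictly below $\tfrac12$, sharpening the $\epsilon$-approximate maximizer $\bl$ into an exact one.

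Putting the three pieces together, $\big((\bl)_{l\in\L},u\big)$ meets both conditions of Definition~\ref{def:we}, so the allocation returned by Algorithm~\ref{alg:allocation} is a Walrasian equilibrium good allocation. The main obstacle, as flagged above, is condition~(i): the stopping rule only ever tests \emph{additions} to the current bundle, never swaps or deletions, so passing from $J_l=\emptyset$ to genuine demand optimality is not immediate. Gross substitutes is precisely the structural hypothesis that closes this gap, and the bound $\epsilon<\tfrac{1}{2|\M|}$ is what converts the resulting $\epsilon$-approximate equilibrium into an exact one; tracking both simultaneously, while reconciling the discretized, $\epsilon$-stepped auction with Kelso--Crawford's original analysis, is the delicate part of the argument.
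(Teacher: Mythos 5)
The paper does not actually prove this lemma: it is imported wholesale from Kelso and Crawford (1982) as a black-box citation, so there is no in-paper proof to compare your argument against. Judged on its own terms, your treatment of termination and of condition (ii) of Definition~\ref{def:we} is correct and matches the standard Kelso--Crawford analysis: prices are monotone non-decreasing and bounded by $\Vmax$, so the loop halts in at most $|\M|\Vmax/\epsilon$ rounds, and since a good's price is raised only at the moment it is (re)assigned and a good, once assigned, is never orphaned (it is deleted from other slots only when simultaneously placed in $\hat{l}$), every unassigned good at termination still has price zero.

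Condition (i), however, is where the real content of the lemma lives, and your write-up leaves it as an acknowledged plan rather than a proof. You correctly observe that the stopping rule $J_l=\emptyset$ only certifies ``no profitable augmentation of $\bl$'' and that this must be upgraded to $\bl\in\arg\max_{\ball\in\Ball}\{\barV_l(\ball)-\sum_{m\in\ball}\um\}$, but you then simply assert that gross substitutes ``closes this gap.'' The actual mechanism is an inductive invariant maintained across iterations, not a one-shot argument at termination: one must show that each slot's tentative bundle $\bl$ remains an (approximately) demanded set as the algorithm runs, and gross substitutes enters precisely when \emph{other} slots poach goods and raise prices of goods outside $\bl$ --- condition (i) of Definition~\ref{def:gross_substitute} guarantees that these price increases never make slot $l$ want to shed any good it currently holds, so ``demanded'' is preserved and the terminal check against additions suffices. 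Without establishing that invariant, terminal augmentation-optimality alone does not rule out a profitable swap or deletion. Separately, your last step --- that being within total slack $\epsilon|\M|<\tfrac12$ of the true demand value forces exact optimality --- implicitly requires a minimum gap (e.g.\ integrality) between distinct attainable values of $\barV_l(\ball)-\sum_{m\in\ball}\um$; that discreteness assumption is what the threshold $\epsilon<\tfrac{1}{2|\M|}$ is calibrated to in Kelso--Crawford, and it should be stated rather than left tacit, since the paper's trip values are real-valued in general.
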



\section{Proof of Proposition \ref{prop:primal_dual}}\label{apx:proof_A}
First, we prove that the four conditions of market equilibrium $\(\xopt, \popt, \tollopt\)$ ensure that $\xopt$ satisfies the feasibility constraints of the primal \eqref{eq:LP1bar}, $\(\uopt, \tollopt\)$ satisfies the constraints of the dual \eqref{eq:D1bar}, and $\(\xopt, \uopt, \tollopt\)$ satisfies the complementary slackness conditions. Here, the vector $\uopt$ is the utility vector computed from \eqref{eq:u_p}.  
\begin{enumerate}
\item[(i)] Feasibility constraints of \eqref{eq:LP1bar}: Since $\xbaropt$ is a feasible trip vector, $\xbaropt$ must satisfy the feasibility constraints of \eqref{eq:LP1bar}. 

\item[(ii)] Feasibility constraints of \eqref{eq:D1bar}: From the stability condition \eqref{eq:stability}, individual rationality \eqref{eq:ir}, and the fact that edge prices are non-negative, we know that $\(\uopt, \tollopt\)$ satisfies the feasibility constraints of \eqref{eq:D1bar}. 

\item[(iii)] Complementary slackness condition with respect to  \eqref{subeq:LP11}: If agent $\m$ is not assigned, then \eqref{subeq:LP11} is slack with the integer trip assignment $\xbaropt$ for some agent $\m$. The budget balanced condition \eqref{subeq:p_not_assigned} shows that $\popt_m=0$. Since agent $\m$ is not in any trip and the payment is zero, the dual variable (i.e. agent $\m$'s utility) $\umopt=0$. On the other hand, if $\umopt>0$, then agent $\m$ must be in a trip, and constraint \eqref{subeq:LP11} must be tight. Thus, we can conclude that the complementary slackness condition with respect to the primal constraint \eqref{subeq:LP11} is satisfied. 

\item[(iv)] Complementary slackness condition with respect to \eqref{subeq:LP12}: Since the mechanism is market clearing, edge price $\tollet$ is nonzero if and only if the load that enters edge $\e$ at time $t$ is below the capacity, i.e. the primal constraint \eqref{subeq:LP12} is slack for edge $\e\in \E$ and $t$. Therefore, the complementary slackness condition with respect to the primal constraint \eqref{subeq:LP12} is satisfied. 
\item[(v)] Complementary slackness condition with respect to \eqref{subeq:D11}: From \eqref{subeq:bb}, we know that for any organized trip, the corresponding dual constraint \eqref{subeq:D11} is tight. If constraint \eqref{subeq:D11} is slack for a trip $\(\bbar, \r\)$, then the budget balance constraint ensures that trip is not organized. Therefore, the complementary slackness condition with respect to the primal constraint \eqref{subeq:D11} is satisfied. 
\end{enumerate}

We can analogously show that the inverse of (i) -- (v) are also true: the feasibility constraints of \eqref{eq:LP1bar} and \eqref{eq:D1bar}, and the complementary slackness conditions ensure that $\(\xopt, \popt, \tollopt\)$ is a market equilibrium. Thus, we can conclude that $\(\xopt, \popt, \tollopt\)$ is a market equilibrium if and only if $\(\xopt, \uopt, \tollopt\)$ satisfies the feasibility constraints of \eqref{eq:LP1bar} and \eqref{eq:D1bar}, and the complementary slackness conditions.

From strong duality theory, we know that the equilibrium trip vector $\xbaropt$ must be an integer optimal solution of \eqref{eq:LP1bar}. Therefore, the existence of market equilibrium is equivalent to the existence of an integer optimal solution of \eqref{eq:LP1bar}. The optimal trip assignment is an integer optimal solution of \eqref{eq:LP1bar}, and $\left(\uopt, \tollopt\right)$ is an optimal solution of the dual problem \eqref{eq:D1bar}. The payment $\popt$ can be computed from \eqref{eq:u_p}. \QEDA

\section{Proof of Statements in Section \ref{subsec:sufficient}}\label{apx:proof_B}


\begin{lemma}[\cite{bein1985minimum}]\label{lemma:max_flow}
On series-parallel networks, the flow $\kopt$ maximizes the total flow. That is, for any $x$ that satisfies \eqref{subeq:LP11} -- \eqref{subeq:LP13}, we have: 
\begin{align*}
\sum_{r \in \R} \sum_{b \in B} x_r(b) \leq \sum_{r \in \R} \kopt_r.
\end{align*}
\end{lemma}

\noindent\emph{Proof of Lemma \ref{lemma:FF}.}
Consider any (fractional) optimal solution of \eqref{eq:LP1bar}, denoted as $\xhat$. We denote $\fhat(\b) = \sum_{\r \in \R} \xhat_r(\b)$ as the flow of coalition $\b$, and $\widehat{F}= \sum_{\b \in \B} \fhat(\b)$ as the total flow. Since $\xhat$ is feasible, we know that $\widehat{F} \leq C$, where $C$ is the maximum capacity of the network. We re-write the trip valuation as follows: 
\begin{align*}
V_r(\b)= z(\b)- g(\b) \tr, \quad \forall \(\b, \r\) \in \B \times \R, \end{align*}
where $g(\b) = \sum_{\m \in \b} \vm + \sum_{\m \in \b} \changevm_m(|\b|)$, and $z(\b)= \sum_{\m \in \b} \tripm - \sum_{m \in b} \changetrip_m(|b|)$.

The set of all coalitions with positive flow in $\xhat$ is $\widehat{\B} \deleq \{b \in \B|\fhat(\b)>0\}$. We denote the number of coalitions in $\Bhat$ as $n$, and re-number these coalitions in decreasing order of $g(\b)$, i.e. 
\begin{align}\label{eq:g_ordering}
g(\b_1) \geq g(\b_2) \geq \cdots \geq g(\b_n).
\end{align}

We now construct another trip  vector $\xopt$ by the following assignment procedure: \\
\emph{Initialization:} Set route set $\tildeR = \Ropt$, initial zero assignment vector $\xrbopt \leftarrow 0$ for all $\r \in \R$ and all $\b \in \B$, and residual route capacity $\tildeq_r = \kopt_{\r}$ for all $\r \in \tildeR$.\\\
\emph{For $j=1, \dots, n$:}
\begin{itemize}
    \item[(i)] Assign coalition $\b_j$ to a route 
    $\rhat$ in $\tildeR$, which has the minimum travel time among all routes with flow less than the capacity, i.e. $\rhat \in \argmin_{\r \in \tildeR}\{\tr\}$. 
    \item[(ii)] If $\fhat(\b_j) \leq \tildeq_{\hat{r}}$, then $x^*_{\hat{r}}(\b_j) = \fhat(\b_j)$. We update $\tilde{q}_{\hat{r}} \leftarrow \tilde{q}_{\hat{r}} - \fhat(\b_j)$. 
    \item[(iii)] Otherwise, assign $\xopt_{\hat{r}}(\b_j)= \tildeq_{\hat{r}}$, set $\tilde{q}_{\hat{r}} \leftarrow 0$, $\tilde{R} \leftarrow \tilde{R} \setminus \{\hat{r}\}$, and continue to assign the remaining weight $\hat{f}(\b_j)\leftarrow \hat{f}(\b_j)- \xopt_{\hat{r}}(\b_j)$ to the next unsaturated route with the minimum cost. Repeat this process until the condition in (ii) is satisfied, i.e. the weight of $\bhat_j$ is assigned. 
\end{itemize}
From Lemma \ref{lemma:max_flow}, we know that $\hat{F} = \sum_{r \in R} \sum_{b \in B} \hat{x}_r(b) \leq \sum_{r \in R} k_r^*$ so that the algorithm terminates with the flow of all $\hat{B}$ being allocated. 
We can check that $\sum_{\b \ni \m} \sum_{r \in \R}  \xrbopt = \sum_{\b \ni \m} \fhat(\b) \leq 1$ so that \eqref{subeq:LP2k1} is satisfied. Additionally, since in the assignment procedure, the total weight assigned to route $\r$ is less than or equal to $\kopt_r$, we must have $\sum_{\b \in \B} \xrbopt \leq \kopt_r$ for all $\r \in \R$, i.e. \eqref{subeq:LP2k2} is satisfied. Thus, $\xopt$ is a feasible solution of \eqref{eq:LP2k}. 

It remains to prove that $\xopt$ is optimal for \eqref{eq:LP2k}. We prove this by showing that $S(\xopt) \geq S(\xhat)$. The objective function $S(\xopt)$ can be written as follows:
\begin{align}\label{eq:decompose_V}
S(x^*)=&\sum_{r \in \R} \sum_{\b \in \B} \Vrb \xrbopt = \sum_{r \in \R} \sum_{\b \in \B} z(\b) \xrbopt - \sum_{r \in \R} \sum_{\b \in \B} g(\b) \tr \xrbopt. \end{align}
Since the assignment procedure terminates with all coalitions in $\xhat$ being assigned, $\sum_{\r \in \R} \xrbopt=\fhat(\b) = \sum_{\r \in \R} \xhat_r(\b)$ for all $\b \in \B$. Therefore, 
\begin{align}\label{eq:equal_part}
&\sum_{r \in \R} \sum_{\b \in \B} z(\b) \xrbopt = \sum_{\b \in \B} z(\b) \fhat(\b) = \sum_{r \in \R} \sum_{\b \in \B} z(\b) \xhat_r(\b).
\end{align}
Thus, to prove $S(\xopt) \geq S(\xhat)$, it remains to show that 
\begin{align}\label{eq:final_argument}
    \sum_{r \in \R}  \sum_{\b \in \B} g(\b)\tr \xrbopt \leq \sum_{r \in \R}  \sum_{\b \in \B} g(\b)\tr \xhat_r(b).
\end{align}
{This is equivalent to proving that the following claim holds: 
\begin{claim}\label{claim}
    Suppose that the network $G$ is series-parallel. For any $\xhat$ and coalition flow vector $\hat{f}$, we construct the trip allocation vector $\xopt$ following procedure (i) -- (iii). Then, the trip allocation vector $\xopt$ minimizes $\sum_{r \in \R} \sum_{\b \in \B} g(\b) \tr \xrb$ among all feasible $\x$ that induces the same flow of coalitions $\fhat$, i.e. 
\begin{align}\label{eq:induction}
\xopt \in \argmin_{\x \in \X(\fhat)}\sum_{r \in \R}  \sum_{\b \in \B} g(\b)\tr \xrb, 
\end{align}
where 
\begin{align}\label{eq:X}
\X(\fhat) \deleq \left\{\(\xrb\)_{\r \in \R, \b \in \B}\left\vert
\begin{array}{l}
\sum_{\r \in \R} \xrb= \fhat(\b), \quad \forall \b \in \B, \\
\sum_{\b \in \B}\sum_{\r \ni \e} \xrb \leq \qe, \quad \forall \e \in \E, \\
\xrb \geq 0, \quad \forall \r \in \R, \quad \forall \b \in \B 
\end{array}
\right.
\right\}.
\end{align}
\end{claim}

If Claim \ref{claim} holds, then for any $\xhat$ that is a fractional optimal solution of \eqref{eq:LP1bar}, we can compute the aggregate coalition flow vector $\fhat$ and construct $x^*$ following the assignment procedure. The constructed $x^*$ satisfies all the constraints in \eqref{eq:LP2k}, and satisfies $S(\xhat) \leq S(x^*)$. This implies that the optimal value of \eqref{eq:LP2k} is higher than that of \eqref{eq:LP1bar}. Since \eqref{eq:LP1bar} and \eqref{eq:LP2k} have the same objective function and any feasible solution of \eqref{eq:LP2k} is also a feasible solution of \eqref{eq:LP1bar}, we can conclude that any optimal solution of \eqref{eq:LP2k} is also an optimal solution of \eqref{eq:LP1bar}. 

The rest of the proof focuses on showing that Claim \ref{claim} holds by mathematical induction. If $G$ is  a single-edge network, then Claim \ref{claim} holds trivially since the assignment procedure does not change the trip allocation. We next prove that if Claim \ref{claim} holds for any two series-parallel sub-networks $\Gp$ and $\Gpp$, then Claim \ref{claim} holds for the network $\G$ that connects $\Gp$ and $\Gpp$ in series or in parallel. In particular, we analyze the cases of series connection and parallel connection separately.} 

\vspace{0.2cm}
\noindent\emph{(Case 1)} Series-parallel network $G$ is formed by connecting two series-parallel sub-networks $\Gp$ and $\Gpp$ in series. 

    We denote the sets of routes in sub-network $\Gp$ and $\Gpp$ as $\Rp$ and $\Rpp$, respectively. Since $\Gp$ and $\Gpp$ are connected in series, the set of routes in network $\G$ is $\R \deleq \Rp \times \Rpp$. We denote a route $r=r_1r_2 \in R$ as a route composed of $r_1 \in R_1$ and $r_2 \in R_2$, connected in series. For any flow vector $\fhat$, the set of trip vectors on $G$ that {satisfy the constraint in \eqref{eq:X}} is $\X(\fhat)$, and the trip  vector obtained from the above-mentioned assignment procedure based on $\fhat$ is $\xopt$.  
Since the two sub-networks are connected in series, the coalition flow of each $b \in B$ is $\fhat(b)$ in both $\Gp$ and $\Gpp$. {We denote a trip vector on sub-network $G^1$ (resp. $G^2$) as $x^1$ (resp. $x^2$). Analogously to \eqref{eq:X}, we define $\Xp(\fhat)$ (resp. $\Xpp(\fhat)$) as the set of trip vectors that induce the coalition flow vector $\hat{f}$ on sub-network $\Gp$ (resp. $\Gpp$)
    \begin{align*}
        \Xp(\fhat)&\deleq \left\{\(\xp_{\rp}(b)\)_{\rp \in \Rp, \b \in \B}\left\vert
\begin{array}{l}
\sum_{\rp \in \Rp} \xp_{\rp}(b)= \fhat(\b), \quad \forall \b \in \B, \\
\sum_{\b \in \B}\sum_{\rp \ni \e} \xp_{\rp}(b) \leq \qe, \quad \forall \e \in \E^1, \\
\xp_{\rp}(b) \geq 0, \quad \forall \rp \in \Rp, \quad \forall \b \in \B 
\end{array}
\right.
\right\}, \\
\Xpp(\fhat)& \deleq \left\{\(\xpp_{\rpp}(b)\)_{\rpp \in \Rpp, \b \in \B}\left\vert
\begin{array}{l}
\sum_{\r \in \R} \xpp_{\rpp}(b)= \fhat(\b), \quad \forall \b \in \B, \\
\sum_{\b \in \B}\sum_{\rpp \ni \e} \xpp_{\rpp}(b) \leq \qe, \quad \forall \e \in \E^2, \\
\xpp_{\rpp}(b) \geq 0, \quad \forall \rpp \in \Rpp, \quad \forall \b \in \B 
\end{array}
\right.
\right\},
    \end{align*}
where $E^1$ and $E^2$ are the edge sets of $\Gp$ and $\Gpp$, respectively. Since the two sub-networks are connected in series, for any $\x \in \X(\fhat)$, we can find $\xp = (\xp_{\rp}(\b))_{\rp \in \Rp, b \in B} \in \Xp(\fhat)$ (resp. $\xpp = (\xpp_{\rpp}(\b))_{\rpp \in \Rpp, b \in B} \in \Xpp(\fhat)$) such that $\xp_{\rp}(\b) = \sum_{\rpp \in \Rpp}\x_{\rp\rpp}(\b)$ (resp. $\xpp_{\rpp}(\b)= \sum_{\rp \in \Rp}\x_{\rp\rpp}(\b)$) for all $\b \in \B$ and all $\rp \in \Rp$ (resp. $\rpp \in \Rpp$). Therefore,} 
    \begin{align}
        &\sum_{r \in \R}  \sum_{\b \in \B} g(\b)\tr \xrb = \sum_{\rp \in \Rp} \sum_{\rpp \in \Rpp} \sum_{\b \in \B} g(\b)(d_{\rp}+ d_{\rpp}) x_{\rp\rpp}(b) \notag \\
        =& \sum_{\rp \in \Rp} \sum_{\b \in \B} g(\b)d_{\rp} \(\sum_{\rpp \in \Rpp} \x_{\rp\rpp}(\b)\)+ \sum_{\rpp \in \Rpp} \sum_{\b \in \B} g(\b)d_{\rpp} \(\sum_{\rp \in \Rp} \x_{\rp\rpp}(\b)\)\notag \\
        =&\sum_{\rp \in \Rp} \sum_{\b \in \B} g(\b)d_{\rp} \xp_{\rp}(\b)+ \sum_{\rpp \in \Rpp} \sum_{\b \in \B} g(\b)d_{\rpp} \xpp_{\rpp}(\b).\label{eq:series_1}
        \end{align}

We construct $\xoptp$ (resp. $\xoptpp$) as the trip vector derived from the assignment procedure given $\fhat$ on $\Gp$ (resp. $\Gpp$). We now argue that $\sum_{\rpp \in \Rpp}\xopt_{\rp\rpp}(\b)=\xoptp_{\rp}(\b)$ for all $\b \in \B$ and all $\rp \in \Rp$. For the sake of contradiction, assume that there exists $\b \in \B$ such that $\sum_{\rpp \in \Rpp}\xopt_{\rp\rpp}(\b) \neq \xoptp_{\rp}(\b)$ for at least one $\rp \in \Rp$. We denote $\bhat$ as one such coalition with the maximum $g(\bhat)$. Since the total flow of $\bhat$ is $\fhat(\bhat)$ in both $\xopt$ and $\xoptp$, if $\sum_{\rpp \in \Rpp}\xopt_{\rp\rpp}(\bhat) \neq \xoptp_{\rp}(\bhat)$ on one $\rp \in \Rp$, the same inequality must hold for another $\rptwo \in \Rp$. Without loss of generality, we assume that $d_{\rp}< d_{\rptwo}$. {Since any coalition $\b$ assigned before} $\bhat$ ($\{b \in B | g(\b)< g(\bhat)\}$) satisfies $\sum_{\rpp \in \Rpp}\xopt_{\rp\rpp}(\b)=\xoptp_{\rp}(\b)$ for all $\rp \in \Rp$, we know that the residual route capacities $\tildeq$ in the round of assigning $\bhat$ in procedure (i) -- (iii) satisfy $\sum_{\rpp \in \Rpp}\tildeq_{\rp\rpp}=\tildeq_{\rp}$ for all $\rp \in \Rp$. Therefore, if $\sum_{\rpp \in \Rpp}\xopt_{\rp\rpp}(\bhat)>\xoptp_{\rp}(\bhat)$, then $\xoptp$ is not obtained by procedure (i) -- (iii) on $\Gp$ because $\rp$ is the route with the minimum time cost in the round of assigning flow of $\hat{b}$ but $\xoptp$ does not assign as much flow of $\bhat$ as possible to $\rp$, and more flow of $\bhat$ should be moved from the longer route $\rptwo$ to the shorter $\rp$. We can analogously argue that if $\sum_{\rpp \in \Rpp}\xopt_{\rp\rpp}(\bhat)<\xoptp_{\rp}(\bhat)$, then $\xopt$ is not obtained from the algorithm on network $G$. In either case, we have arrived at a contradiction. We can analogously argue that $\sum_{\rp \in \Rp}\xopt_{\rp\rpp}(\b)=\xoptpp_{\rpp}(\b)$ for all $\b \in \B$ and all $\rpp \in \Rpp$. Therefore, \begin{align}
        &\sum_{r \in \R}  \sum_{\b \in \B} g(\b)\tr \xrbopt\notag \\
        =& \sum_{\rp \in \Rp} \sum_{\b \in \B} g(\b)d_{\rp} \(\sum_{\rpp \in \Rpp} \xopt_{\rp\rpp}(\b)\)+ \sum_{\rpp \in \Rpp} \sum_{\b \in \B} g(\b)d_{\rpp} \(\sum_{\rp \in \Rp} \xopt_{\rp\rpp}(\b)\)\notag\\
        =&\sum_{\rp \in \Rp} \sum_{\b \in \B} g(\b)d_{\rp} \xoptp_{\rp}(\b)+ \sum_{\rpp \in \Rpp} \sum_{\b \in \B} g(\b)d_{\rpp} \xoptpp_{\rpp}(\b).\label{eq:series_2}
        \end{align}
        
Since Claim \ref{claim} holds on both sub-networks $\Gp$ and $\Gpp$ for any flow vector $\hat{f}$, we have
\[\xoptp \in \argmin_{\x \in \Xp(\fhat)} \sum_{\rp \in \Rp}  \sum_{\b \in \B} g(\b)\trp \xp_{\rp}(\b), \quad \xoptpp \in \argmin_{\x \in \Xpp(\fhat)}\sum_{\rpp \in \Rpp}  \sum_{\b \in \B} g(\b)\trpp \xpp_{\rpp}(\b).\]
From \eqref{eq:series_1} -- \eqref{eq:series_2}, we know that Claim \ref{claim} also holds on network $\G$. 
\vspace{0.2cm}
    
\noindent\emph{(Case 2)} Series-parallel network $G$ is formed by connecting two series-parallel networks $G_1$ and $G_2$ in parallel.

Same as case 1, we denote $\Rp$ (resp. $\Rpp$) as the set of routes in $\Gp$ (resp. $\Gpp$). Then, the set of all routes in $\G$ is $\R = \Rp \cup \Rpp$. Given $\fhat$, we construct $\xopt$ from the procedure (i) -- (iii) on the entire network $\G$. We define $f^{1*}(b) = \sum_{\rp \in \Rp}  \xopt_r(b)$ and $f^{2*}(b) = \sum_{\rpp \in \Rpp}  \xopt_r(b)$, which are the aggregate flow of each $b \in B$ induced by $x^*$ on sub-network $\Gp$ and $\Gpp$, respectively. The tuple $(f^{1*}, f^{2*})$ governs how the flow $\hat{f}$ splits between the two sub-networks for each $b \in B$. We can verify that the trip vector induced by the assignment procedure given $f^{1*}$ (resp. $f^{2*}$) on sub-network $\Gp$ (resp. $\Gpp$) is $(x_{\rp}^*)_{\rp \in \Rpp}$ (resp. $(x^*_{\rpp})_{\rpp \in \Rpp}$). 


Consider any arbitrary split of flow $\fhat$ between the two sub-networks, denoted as $\(\fhatp, \fhatpp\)$, such that $\fhatp(\b)+\fhatpp(\b)=\fhat(\b)$ for all $\b \in \B$. We define the set of feasible trip  vectors on sub-network $\Gp$ (resp. $\Gpp$) that induce the total flow $\fhatp$ (resp. $\fhatpp$) given by \eqref{eq:X} as $\Xp(\fhatp)$ (resp. $\Xpp(\fhatpp)$), i.e.
{\begin{align*}
        \Xp(\fhat)&\deleq \left\{\(\xp_{\rp}(b)\)_{\rp \in \Rp, \b \in \B}\left\vert
\begin{array}{l}
\sum_{\rp \in \Rp} \xp_{\rp}(b)= \fhat^1(\b), \quad \forall \b \in \B, \\
\sum_{\b \in \B}\sum_{\rp \ni \e} \xp_{\rp}(b) \leq \qe, \quad \forall \e \in \E^1, \\
\xp_{\rp}(b) \geq 0, \quad \forall \rp \in \Rp, \quad \forall \b \in \B 
\end{array}
\right.
\right\}, \\
\Xpp(\fhat)& \deleq \left\{\(\xpp_{\rpp}(b)\)_{\rpp \in \Rpp, \b \in \B}\left\vert
\begin{array}{l}
\sum_{\r \in \R} \xpp_{\rpp}(b)= \fhat^2(\b), \quad \forall \b \in \B, \\
\sum_{\b \in \B}\sum_{\rpp \ni \e} \xpp_{\rpp}(b) \leq \qe, \quad \forall \e \in \E^2, \\
\xpp_{\rpp}(b) \geq 0, \quad \forall \rpp \in \Rpp, \quad \forall \b \in \B 
\end{array}
\right.
\right\},
\end{align*}}where $E^1$ and $E^2$ are the edge sets of sub-network $\Gp$ and $\Gpp$, respectively. 
For any flow split $\(\fhatp, \fhatpp\)$, we denote the trip  vector obtained by applying the assignment procedure (i) -- (iii) with $\fhatp$ (resp. $\fhatpp$) on sub-network $\Gp$ (resp. $\Gpp$) as $\xopthatp$ (resp. $\xopthatpp$). Since Claim \ref{claim} holds for sub-network $\Gp$ (resp. $\Gpp$) with any flow $\fhatp$ (resp. $\fhatpp$), we must have
\begin{align*}
    &\sum_{\rp \in \Rp}  \sum_{\b \in \B} g(\b)\tr \xopthatp_{\rp}(\b) + \sum_{\rpp \in \Rpp}  \sum_{\b \in \B} g(\b)\tr \xopthatpp_{\rpp}(\b) \\
    \leq &\sum_{\rp \in \Rp}  \sum_{\b \in \B} g(\b)\tr \xhatp_{\rp}(\b) +\sum_{\rpp \in \Rpp}  \sum_{\b \in \B} g(\b)\tr \xhatpp_{\rpp}(\b), \quad \forall \xhatp \in \X(\fhatp), \xhatpp \in \X(\fhatpp).
\end{align*}
{Therefore, the optimal solution of \eqref{eq:induction} for the entire network $G$ and flow vector $\fhat$ must be a trip  vector $\(\xopthatp, \xopthatpp\)$ that is constructed following assignment procedure (i) -- (iii) on each sub-network given a flow split $\(\fhatp, \fhatpp\)$. We can thus restrict our attention to finding the optimal flow split such that the trip allocation vector induced by the assignment procedure minimizes the objective function value. Recall that $x^*$ is the trip vector induced by the flow split $(f^{1*}, f^{2*})$. To prove that $x^*$ is the optimal solution (i.e. Claim \ref{claim} holds for network $\G$), it remains to show that the flow split $(f^{1*}, f^{2*})$ is the optimal flow split.

Before proceeding with the proof, we argue that in situations where there are multiple coalitions $b \in \hat{B}$ with the same $g(b)$ value (i.e. some inequalities in \eqref{eq:g_ordering} are equalities), these coalitions can be considered equivalent, and their flows can be combined. We denote a partition of $\hat{B}$ as $\hat{B}_1, \dots, \hat{B}_n$, where $g(b) = g(b')$ for any $b, b'$ in each $\hat{B}_{j}$, $j =1, \dots, n$, and $g(b) > g(b')$  for any $b \in \hat{B}_j$ and $b' \in \hat{B}_{j+1}$ for all $j=1, \dots, n-1$. We note that if $(\hat{f}^1, \hat{f}^2) \neq (f^{1*}, f^{2*})$ but 
\begin{align}\label{eq:f_equivalent}
    \sum_{b \in \hat{B}_j}\hat{f}^{1}(b) = \sum_{b \in \hat{B}_j}f^{1*}(b), \quad \sum_{b \in \hat{B}_j}\hat{f}^{2}(b) = \sum_{b \in \hat{B}_j}f^{2*}(b), \quad \forall j=1, \dots, n,
\end{align}
then the associated trip vectors $\hat{x}^*$ and $x^*$ derived from the assignment procedure satisfy: 
\begin{align*}
    \sum_{b \in \hat{B}_j} \xhat^*_r(b) = \sum_{b \in \hat{B}_j} x^*_r(b), \quad \forall r \in R,
\end{align*} 
because coalitions $\bhat$ in the same subset $\hat{B}_j$ are assigned in consecutive order. Moreover, $\hat{x}^*$ and $x^*$ incur the same cost: 
\begin{align*}
\sum_{\rp \in \Rp} \sum_{b \in B} d_{\rp} g(b) \hat{x}^{1*}_{\rp}(b) + \sum_{\rpp \in \Rpp} \sum_{b \in B} d_{\rpp} g(b) \hat{x}^{2*}_{\rpp}(b) = \sum_{r \in R} \sum_{b \in B} g(b) d_r x^*(b).
\end{align*}
This implies that interchanging the route assignment of coalitions $b \in \hat{B_j}$ with the same $g(b)$ does not change the objective function value, and the objective function value only depends on $(x_r(\hat{B}_j))_{r \in R, j=1, \dots, n}$, where $x_r(\hat{B}_j)=\sum_{b\in \hat{B}_j}x_r(b)$ is the aggregate assignment of coalitions in each $\hat{B}_j$ to each route $r \in R$. Therefore, for any flow vector $\hat{f}$ and any network $G$, we can view coalitions in each $\hat{B}_j$ as a single coalition, and combine their flows in the trip assignment process. Then, any trip assignment that leads to the aggregate assignment $(x_r(\hat{B}_j))_{r \in R, j=1, \dots, n}$ will have the same objective value. For the rest of the proof, without loss of generality, we view coalitions with the same $g(b)$ value as the same coalition, and we assume that the ordering of $g(b)$ in \eqref{eq:g_ordering} has no tie. 

}

We now proceed with the proof. For any $\(\fhatp, \fhatpp\) \neq \(f^{1*}, f^{2*}\)$, we can find a coalition $\bj$ such that $\fhatp(\bj) \neq f^{1*}(\bj)$ (henceforth $\fhatpp(\bj) \neq f^{2*}(\bj)$). We denote $\bjhat$ as one such coalition with the maximum $g(\b)$, i.e. $\fhatp(\bj)= f^{1*}(\bj)$ for any $j =1, \dots, \jhat-1$. Since coalitions $\b_1, \dots, \b_{\jhat-1}$ are assigned before coalition $\bjhat$ according to procedure (i) -- (iii), we know that $\xopthatp_{\rp}(\bj)=\xoptp_{\rp}(\bj)$ and $\xopthatpp_{\rpp}(\bj)=\xoptpp_{\rpp}(\bj)$ for all $\rp \in \Rp$, all $\rpp \in \Rpp$ and all $j=1, \dots, \jhat-1$. 

Since $\fhatp(\bjhat) \neq f^{1*}(\bjhat)$, the trip vector associated with $\bjhat$ in $\xopthatp$ and $\xopthatpp$ must be different from that in $\xoptp$ and $\xoptpp$. Without loss of generality, we assume that $\fhatp(\bjhat)> f^{1*}(\bjhat)$ and $\fhatpp(\bjhat)< f^{2*}(\bjhat)$. We note the following three facts: 
\begin{itemize}
    \item[(1)] The residual capacity after assignment of $\b_1, \dots, \b_{\jhat-1}$  is the same for the flow splits  $(\hat{f}^1, \hat{f}^2)$ and $(f^{1*}, f^{2*})$. 
    \item[(2)] When constructing $\xopt$, the assignment procedure (i)–(iii) assigns as much flow of coalition $\bjhat$ as possible to routes with the minimum travel time cost among all routes in $\R \cup \Rpp$ that remain unsaturated after assigning coalitions $b_1, \dots, b_{\jhat-1}$. 
    \item[(3)] When constructing $\xopthatp$ (resp. $\xopthatpp$), the assignment procedure (i)–(iii) assigns as much flow of coalition $\bjhat$ as possible to routes with the minimum travel time cost among all routes in $\Rp$  (resp. $\Rpp$) that remain unsaturated after assigning coalitions $b_1, \dots, b_{\jhat-1}$. 
\end{itemize}
Due to these three facts, the route set in $\Rpp$ where $b_{\hat{j}}$ is assigned to given $f^{2*}$ is a superset of the one with $\hat{f}^2$ since $\fhatpp(\bjhat)< f^{2*}(\bjhat)$. Similarly, the route set in $\Rp$ where $b_{\hat{j}}$ is assigned to given $f^{1*}$ is a subset of the one with $\hat{f}^1$ since $\fhatp(\bjhat)> f^{1*}(\bjhat)$. Moreover, $\xopthatpp_{\rpp}(\bjhat) \leq \xopt_{\rpp}(\bjhat)$ for all $\rpp \in \Rpp$, $\xopthatp_{\rp}(\bjhat) \geq \xopt_{\rp}(\bjhat)$ for all $\rp \in \Rpp$, and there must exist non-empty subsets $\hat{R}^1 \subseteq \Rp$ and $\hat{R}^2 \subseteq \Rpp$ such that $\xopthatpp_{\hat{r}^2}(\bjhat) < \xopt_{\hat{r}^2}(\bjhat)$ and $\xopthatp_{\hat{r}^1}(\bjhat) > \xopt_{\hat{r}^1}(\bjhat)$ for all $\hat{r}^1 \in \hat{R}^1$ and all $\hat{r}^2 \in \hat{R}^2$. Due to facts (2) and (3), we have $d_{\rpphat} \leq d_{\rphat}$ for all pairs of $\hat{r}^1 \in \hat{R}^1$ and $\hat{r}^2 \in \hat{R}^2$. We further have two cases: 
\begin{itemize}
    \item[(a)] There exist one $\hat{r}^1 \in \hat{R}^1$ and one $\hat{r}^2 \in \hat{R}^2$ such that $d_{\rpphat} < d_{\rphat}$. In this case, if route $\rpphat$ is unsaturated given $\xopthatpp$, then we decrease $\xopthatp_{\rphat}(\bjhat)$ and increase $\xopthatpp_{\rpphat}(\bjhat)$ for a small positive number $\epsilon>0$. We can check that the objective function of \eqref{eq:induction} is reduced by $\epsilon (d_{\rphat}- d_{\rpphat})g(\bjhat) > 0$. Therefore, the $\epsilon$-perturbation strictly reduces the objective function value, and thus $\hat{x}^*$ with flow split $(\fhat^1, \fhat^2)$ cannot be an optimal solution. 
    
On the other hand, if route $\rpphat$ is saturated, there must exist another coalition with a lower $g(b)$ value than $\b_{\jhat}$, which is allocated to $\rpphat$ with positive flow. Without loss of generality, we assume that this coalition is $b_{\hat{j}+1}$. We decrease $\xopthatp_{\rphat}(\bjhat)$ and $\xopthatpp_{\rpphat}(\b_{\jhat+1})$ by $\epsilon>0$, increase $\xopthatp_{\rphat}(\b_{\jhat+1})$ and $\xopthatpp_{\rpphat}(\b_{\jhat})$ by $\epsilon$ (i.e. exchange a small fraction of coalition $\bjhat$ on $\rphat$ with coalition $\b_{\jhat+1}$ on $\rpphat$). Note that $g(\bjhat)> g(\b_{\jhat+1})$ and $d_{\rphat}> d_{\rpphat}$. We can thus check that the objective function of \eqref{eq:induction} is reduced by $\epsilon (d_{\rphat} - d_{\rpphat}) (g(\bjhat)- g(\b_{\jhat+1}))> 0$. Therefore, we have again found an adjustment of trip  vector $(\xopthatp, \xopthatpp)$ that reduces the objective function of \eqref{eq:induction}. Hence, for any flow split $\(\fhatp, \fhatpp\) \neq \(f^{1*}, f^{2*}\)$, the associated trip  vector $\(\xopthatp, \xopthatpp\)$ is not the optimal solution of \eqref{eq:induction}. The optimal solution of \eqref{eq:induction} must be $\xopt$, and therefore Claim \ref{claim} holds for network $\G$. 

    \item[(b)] For all $\hat{r}^1 \in \hat{R}^1$ and $\hat{r}^2 \in \hat{R}^2$, $d_{\rpphat} = d_{\rphat}$. In this case, redistributing flow of any coalition assigned to any routes in $\hat{R}^1 \cup \hat{R}^2$ does not change the objective function value. Therefore, we can redistribute the flow of $b_{\hat{j}}$ allocated to $\hat{R}^1 \cup \hat{R}^2$ in $\hat{x}^*$ in the same way as that in $x^*$, and arbitrarily redistribute the flow of any other coalitions assigned to $\hat{R}^1 \cup \hat{R}^2$ to the remaining capacities of routes in $\hat{R}^1 \cup \hat{R}^2$. Such redistribution leads to another trip vector $\hat{x}^{*'}$ and flow split vector $(\hat{f}^{1'}, \hat{f}^{2'})$ that does not change the objective function value and satisfies 
    \begin{align*}
        \hat{x}^{*'}_r(b_j) &= x^*_r(b_j), \quad \forall j=1, \dots, \hat{j}, \quad \forall r \in R,\\
        \hat{f}^{1'}(b_j) &= f^{1*}(b_j), \quad \hat{f}^{2'}(b_j) = f^{2*}(b_j), \quad \forall j=1, \dots, \hat{j}. 
    \end{align*}
    We can then compare $(\hat{f}^{1'}, \hat{f}^{2'})$ with $(f^{1*}, f^{2*})$ and again find the coalition with the smallest index such that the flow split is different, i.e. $\jhat'= \min\{j=1, \dots, n| \hat{f}^{1'}(b_{j}) \neq f^{1*}(b_{j}) \}$. We have $\jhat'> \hat{j}$. We repeat the process until either we reach $\hat{j'}= n+1$ (i.e. we have changed the flow split $(\hat{f}^{1}, \hat{f}^{2})$ to be equal to $(f^{1*}, f^{2*})$ and $\xhat^* = x^*$ without changing the objective function value) or we have found a coalition such that the $\epsilon$-perturbation described in (a) strictly reduces the objective function value. In either scenario, we have shown that the flow split $(f^{1*}, f^{2*})$ is optimal, and $\xopt$ is an optimal solution of \eqref{eq:induction}. Therefore Claim \ref{claim} holds for network $\G$. 
\end{itemize}

We have shown from cases 1 and 2 that if Claim \ref{claim} holds on any two series-parallel networks $\Gp$ and $\Gpp$, then Claim \ref{claim} holds on the network $\G$ that is constructed by merging $\Gp$ and $\Gpp$ in series or in parallel. Moreover, since Claim \ref{claim} holds trivially when the network $\G$ has a single edge, and any series-parallel network is formed by iteratively connecting series-parallel sub-networks in series or in parallel, we can conclude that Claim \ref{claim} holds on any series-parallel network. 

From \eqref{eq:decompose_V}, \eqref{eq:equal_part} and \eqref{eq:induction}, we can conclude that $S(\xopt) \geq S(\xhatopt)$. Hence, $\xopt$ must be an optimal solution for \eqref{eq:LP2k} on any series-parallel network.  
\QEDA

\color{black}

\medskip

\noindent\emph{Proof of Lemma \ref{lemma:condition_gross}.} The augmented value function satisfies monotonicity condition since for any $\ball \subseteq \ball'$, we have: 
\begin{align*}
    \barV_r(\barb)=  \max_{\bbar \subseteq \ball, ~ \bbar \in \Bbar} \Vbar_r(\bbar) \leq  \max_{\bbar \subseteq \ball', ~ \bbar \in \Bbar} \Vbar_r(\bbar) = \barV_r(\ball').
\end{align*}
We next prove that $\barVrz$ satisfies gross substitutes condition. 
Since all agents have homogeneous disutility of capacity sharing, we can simplify the trip value function $\barVrz(\ball)$ as follows: 
\begin{align*}
    \barVrz(\ball) = \sum_{\m \in \tildeb} \etamrz - \theta_r(|\tildeb|), 
\end{align*}
where 
\begin{align*}
    \etamrz &\deleq \tripm - \vm\tr, \\
    \theta_r(|\tildeb|)&\deleq  \changetrip(|\tildeb|)|\tildeb|+ \changevm(|\tildeb|)|\tildeb| \tr 
\end{align*}

Before proving that the augmented trip value function $\barVrz(\ball)$ satisfies (a) and (b) in Definition \ref{def:gross_substitute}, we first provide the following statements that will be used later: 

\emph{(i)} The function $\thetafun(|\rep(\ball)|)$ is non-decreasing in $|\rep(\ball)|$ because the marginal disutility of capacity sharing is non-decreasing in the coalition size. 

\emph{(ii)} The representative agent coalition for any trip can be constructed by selecting agents from $\ball$ in decreasing order of $\etamrz$. The last selected agent $\hat{m}$ (i.e. the agent in $\rep(\ball)$ with the minimum value of $\etamrz$) satisfies: 
\begin{align}\label{eq:ell_one}
    \eta_{\hat{m}, r} \geq \thetafun(|\rep(\ball)|) - \thetafun(|\rep(\ball)|-1).
\end{align}
That is, adding agent $\hat{m}$ to the set $\rep(\ball) \setminus \{\hat{m}\}$ increases the trip valuation. Additionally, 
\begin{align}\label{eq:ell_two}
    \etamrz < \thetafun(|\rep(\ball)|+1) - \thetafun(|\rep(\ball)|), \quad \forall \m \in \ball \setminus \rep(\ball).
\end{align}
Then, adding any agent in $\ball \setminus \rep(\ball)$ to $\rep(\ball)$ no longer increases the trip valuation.

\emph{(iii)} $|\rep(\ball')| \geq |\rep(\ball)|$ for any two agent coalitions $\ball', \ball \in \B$ such that $\ball' \supseteq \ball$. \\
\emph{Proof of (iii).} Assume for the sake of contradiction that $|\rep(\ball')|< |\rep(\ball)|$. Consider the agent $\hat{m} \in \arg\min_{\m \in \rep(\ball)} \eta_{m,r}$. The value $\eta_{\hat{m}, r}$ satisfies  \eqref{eq:ell_one}. Since $|\rep(\ball')|< |\rep(\ball)|$, $\ball' \supseteq \ball$, and we know that agents in the representative agent coalition $\rep(\ball')$ are the ones with $|\rep(\ball')|$ highest $\etamrz$ in $\ball'$, we must have $\hat{m} \notin \rep(\ball')$. From \eqref{eq:ell_two}, we know that $\eta_{\hat{m}, r} < \thetafun(|\rep(\ball')|+1) - \thetafun(|\rep(\ball')|)$. Since the marginal disutility of capacity sharing is non-decreasing in the agent coalition size, we can check that $\thetafun(|\rep(\ball)|+1) - \thetafun(|\rep(\ball)|)$ is non-decreasing in $|\rep(\ball)|$. Since $|\rep(\ball')|< |\rep(\ball)|$, we have $|\rep(\ball')|\leq |\rep(\ball)|-1$. Therefore, \[\eta_{\hat{m}, r} < \thetafun(|\rep(\ball')|+1) - \thetafun(|\rep(\ball')|) \leq \thetafun(|\rep(\ball)|) - \thetafun(|\rep(\ball)|-1),\] which contradicts \eqref{eq:ell_one} and the fact that $\hat{m} \in \tildeb$. Hence, $|\rep(\ball')| \geq |\rep(\ball)|$. 

We now prove that $\barVrz$ satisfies \emph{(i)} in Definition \ref{def:gross_substitute}.
For any $\ball, \ball' \subseteq \M$ and $\ball \subseteq \ball'$, consider two cases: \\
\emph{Case 1:} $i \notin \rep(\{i\} \cup\ball')$. In this case, $\rep(\ball' \cup i) = \rep(\ball')$, and $ \barVrz(i|\ball')= \barVrz(\ball' \cup i) -  \barVrz(\ball')=0$. Since $ \barVrz$ satisfies monotonicity condition, we have $ \barVrz(i|\ball) \geq 0 $. Therefore, $ \barVrz(i|\ball) \geq  \barVrz(i|\ball')$.

\noindent\emph{Case 2:} $i \in \rep(\{i\} \cup\ball')$. We argue that $i \in \rep(\{i\} \cup\ball)$. From 
\eqref{eq:ell_one}, $\eta_{i,r} \geq \thetafun(|\rep(\ball')|) - \thetafun(|\rep(\ball')|-1)$. Since $\ball' \supseteq \ball$, we know from (iii) that $|\rep(\ball')| \geq |\rep(\ball)|$. Hence, $\eta_{i,r} \geq\thetafun(|\rep(\ball)|) - \thetafun(|\rep(\ball)|-1)$, and thus $i \in \rep(\{i\} \cup\ball)$. 

We define $\hat{m}' \deleq \arg\min_{m \in \rep(\ball')} \eta_{m,r}$ and $\hat{m} \deleq \arg\min_{m \in \rep(\ball)} \eta_{m,r}$. We also consider two thresholds $\mu' = \thetafun(|\rep(\ball')|+1) - \thetafun(|\rep(\ball')|)$, and $\mu = \thetafun(|\rep(\ball)|+1) - \thetafun(|\rep(\ball)|)$.  
Since $\ball' \supseteq \ball$, from (iii), we have $|\rep(\ball')| \geq |\rep(\ball)|$ and thus $\mu' \geq \mu$. 
We further consider four sub-cases:

\emph{(2-1)} $\eta_{\hat{m}',r} \geq \mu'$ and $\eta_{\hat{m}, r} \geq \mu$. From \eqref{eq:ell_one} and \eqref{eq:ell_two}, $\rep(\{i\} \cup\ball') = \rep(\ball') \cup \{i\}$ and $\rep(\{i\} \cup\ball) = \rep(\ball) \cup \{i\}$. The marginal value of $i$ is $\barVrz(i |\ball')=\eta_{i,r} - \mu'$, and $\barVrz(i|\ball)= \eta_{i,r}- \mu$. Since $\mu' \geq \mu$, $\barVrz(i|\ball') \leq \barVrz(i|\ball)$. 
    
\emph{(2-2)} $\eta_{\hat{m}',r}< \mu'$ and $\eta_{\hat{m}, r} \geq \mu$. Since $i \in \rep(\{i\} \cup\ball')$ in \emph{Case 2}, we know from \eqref{eq:ell_one} and \eqref{eq:ell_two} that $\rep(\{i\} \cup\ball')= \rep(\ball') \setminus \{\hat{m}'\}\cup\{i\}$ and $\rep(\{i\} \cup\ball) = \rep(\ball) \cup \{i\}$. Therefore, $\barVrz(i|\ball') = \eta_{i,r} -\eta_{\hat{m}',r}$ and $\barVrz(i|\ball) = \eta_{i,r}- \mu$. We argue in this case, we must have $|\rep(\ball')|>|\rep(\ball)|$. Assume for the sake of contradiction that $|\rep(\ball')|=|\rep(\ball)|$, then $\mu' = \mu$ and $\eta_{\hat{m}',r} \geq \eta_{\hat{m}, r}$ because $\ball' \supseteq \ball$. However, this contradicts the assumption of this sub-case that $\eta_{\hat{m}',r} < \mu' = \mu \leq {\eta_{\hat{m}, r}}$. Hence, we must have $|\rep(\ball')| \geq |\rep(\ball)|+1$. Then, from \eqref{eq:ell_one}, we have $\eta_{\hat{m}',r} \geq \thetafun(|\rep(\ball')|) - \thetafun(|\rep(\ball')|-1) \geq \mu$. Hence, $\barVrz(i|\ball')  \leq \barVrz(i|\ball)$. 

\emph{(2-3)} $\eta_{\hat{m}',r}\geq \mu'$ and $\eta_{\hat{m}, r} < \mu$. From \eqref{eq:ell_one} and \eqref{eq:ell_two}, $\rep(i \cup\ball') = \rep(\ball') \cup \{i\}$ and $\rep(\{i\} \cup\ball) = \rep(\ball) \setminus \{\hat{m}'\}  \cup \{i\}$. Therefore, $\barVrz(i|\ball') = \eta_{i,r} -\mu'$ and $\barVrz(i|\ball) = \eta_{i,r}- \eta_{\hat{m}, r}$. Since $\mu' \geq \mu \geq \eta_{\hat{m}, r}$, we know that $\barVrz(i|\ball')  \leq \barVrz(i|\ball)$. 

\emph{(2-4)} $\eta_{\hat{m}',r} < \mu'$ and $\eta_{\hat{m}, r} < \mu$. From \eqref{eq:ell_one} and \eqref{eq:ell_two},  $\rep(\{i\} \cup\ball') = \rep(\ball') \setminus \{\hat{m}'\}\cup \{i\}$, and $\rep(\{i\} \cup\ball) = \rep(\ball) \setminus \{\hat{m}\}\cup \{i\}$. Therefore, $\barVrz(i|\ball') = \eta_{i,r} -\eta_{\hat{m}',r}$ and $\barVrz(i|\ball) = \eta_{i,r}- \eta_{\hat{m}, r}$. If $|\rep(\ball')| = |\rep(\ball)|$, then we must have $\eta_{\hat{m}',r} \geq \eta_{\hat{m}, r}$, and hence $\barVrz(i|\ball') \leq \barVrz(i|\ball)$. On the other hand, if $|\rep(\ball')| \geq |\rep(\ball)|+1$, then from \eqref{eq:ell_one} we have $\eta_{\hat{m}, r} \geq \thetafun(|\rep(\ball')|) - \thetafun(|\rep(\ball')|-1) \geq \mu > \eta_{\hat{m}, r}$. Therefore, we can also conclude that $\barVrz(i|\ball') \leq \barVrz(i|\ball)$.

From all four subcases, we can conclude that in case 2, $\barVrz(i|\ball) \geq \barVrz(i|\ball')$.  

We now prove that $ \barVrz$ satisfies condition (\emph{ii}) of Definition \ref{def:gross_substitute} by contradiction. Assume for the sake of contradiction that Definition \ref{def:gross_substitute} (ii) is not satisfied. Then, there must exist a coalition $\ball \in \Ball$, and $i, j,k \in M\setminus \ball$ such that: 
\begin{subequations}
\begin{align}
    &\barVrz(\{i, j\}|\ball)+ \barVrz(k|\ball) > \barVrz(i|\ball) + \barVrz(\{j, k\}|\ball),\quad \Rightarrow \quad \barVrz(j|\{i\} \cup \ball) > \barVrz(j|\{k\} \cup \ball), \label{subeq:one}\\
    &\barVrz(\{i, j\}|\ball)+ \barVrz(k|\ball) > \barVrz(j|\ball) + \barVrz(\{i,k\}|\ball), \quad \Rightarrow \quad \barVrz(i|\{j\}\cup \ball) > \barVrz(i|\{k\} \cup \ball),\label{subeq:two}
\end{align}
\end{subequations}
where $\barVrz(\{i, j\}|\ball)= \barVrz(\ball \cup \{i, j\})- \barVrz(\ball)$ for any $i, j \in M$ and $\ball \in \Ball$, $\barVrz(k|\ball) = \barVrz(\{k\} \cup \ball) - \barVrz(\ball)$ for any $k \in M$ and $\ball \in \Ball$, and $\barVrz(i|\{j\}\cup \ball) = \barVrz(\{i, j\} \cup \ball) - \barVrz(\{j\} \cup \ball)$ for any $i, j \in M$ and $\ball \in \Ball$. 

We consider the following four cases: 

\emph{Case A:} $\rep\(\ball \cup \{i, j\}\) = \rep\(\ball \cup \{i\}\) \cup \{j\}$ and $\rep\(\ball \cup \{j, k\}\) = \rep\(\ball \cup \{k\}\) \cup \{j\}$. In this case, if $|\rep\(\ball \cup \{i\}\)| \geq  |\rep\(\ball \cup \{k\}\)|$, then $\barVrz(j|\{i\} \cup \ball) \leq \barVrz(j|\{k\} \cup \ball)$, which contradicts \eqref{subeq:one}. On the other hand, if $|\rep\(\ball \cup \{i\}\)| <  |\rep\(\ball \cup \{k\}\)|$, then we must have $\rep\(\ball \cup \{i\}\)=\rep(\ball)$ and $\rep\(\ball \cup \{k\}\) = \rep(\ball) \cup \{k\}$. Therefore, $\barVrz(i|\{j\} \cup \ball)=0$, and \eqref{subeq:two} cannot hold. We thus obtain the contradiction.  

\emph{Case B:} $|\rep\(\ball \cup \{i, j\}\)| = |\rep\(\ball \cup \{i\}\)|$ and $|\rep\(\ball \cup \{j, k\}\)| = |\rep\(\ball \cup \{k\}\)|$. We further consider the following four sub-cases: 

\emph{(B-1).} $\rep\(\ball \cup \{i, j\}\) = \rep\(\ball \cup \{i\}\) $ and $\rep\(\ball \cup \{j, k\}\) = \rep\(\ball \cup \{k\}\) $. In this case, $\barVrz(j|\{i\} \cup \ball) = \barVrz(j|\{k\} \cup \ball)=0$. Hence, we arrive at a contradiction against \eqref{subeq:one}. 

\emph{(B-2).} $\rep\(\ball \cup \{i, j\}\) \neq \rep\(\ball \cup \{i\}\) $ and $\rep\(\ball \cup \{j, k\}\) = \rep\(\ball \cup \{k\}\) $. In this case, when $j$ is added to the set $\ball \cup \{i\}$, $j$ replaces an agent, denoted as $\hat{m} \in \ball \cup \{i\}$. Since $\hat{m}$ is replaced, we must have $\eta_{\hat{m}, r} \leq \eta_{m, r}$ for any $\m \in \rep(\ball \cup\{j\})$. If $\hat{m} = i$, then $\rep(\ball \cup\{i, j\}) = \rep(\ball \cup \{j\})$. Hence, $\barVrz(i|\{j\} \cup  \ball)=0$, and we arrive at a contradiction with \eqref{subeq:two}. On the other hand, if $\hat{m} \neq i$, then $\hat{m}$ is an agent in coalition $\ball$. This implies that $\hat{m} \in \ball$ should be replaced by $j$ when $j$ is added to the set $\{k\} \cup \ball$, which contradicts the assumption of this case that $\rep\(\ball \cup \{j, k\}\) = \rep\(\ball \cup \{k\}\) $. 
        
\emph{(B-3).} $\rep\(\ball \cup \{i, j\}\) = \rep\(\ball \cup \{i\}\) $ and $\rep\(\ball \cup \{j, k\}\) \neq \rep\(\ball \cup \{k\}\)$. Analogous to case \emph{B-2}, we know that $\rep\(\ball \cup \{j, k\}\) = \rep\(\ball \cup \{j\}\)$ and $\eta_{j,r} \geq \eta_{k,r}$.  Moreover, since $\rep\(\ball \cup \{i, j\}\) = \rep\(\ball \cup \{i\}\) $, we must have $\eta_{j, r} \leq \eta_{i,r}$. Therefore, $\barVrz(\ball \cup \{i, j\}) = \barVrz(\ball \cup \{i\})$, and $\barVrz(i|\{j\} \cup \ball)= \barVrz(\ball\cup \{i\}) - \barVrz(\ball\cup \{j\})$. Since $\eta_{j, r} \leq \eta_{i,r}$ and $\eta_{j, r} \geq \eta_{k,r}$, we know that $\barVrz(i|\{k\} \cup \ball) =\barVrz(\ball\cup \{i\}) - \barVrz(\ball \cup \{k\}) \geq \barVrz(\ball\cup \{i\}) - \barVrz(\ball\cup \{j\}) = \barVrz(i|\{j\} \cup \ball)$, which contradicts \eqref{subeq:two}. 

\emph{(B-4).} $\rep\(\ball \cup \{i, j\}\) \neq \rep\(\ball \cup \{i\}\) $ and $\rep\(\ball \cup \{j, k\}\) \neq \rep\(\ball \cup \{k\}\) $. In this case, if $\rep\(\ball \cup \{i, j\}\) = \rep\(\ball \cup \{j\}\)$, then $\barVrz(i|\{j\} \cup \ball)= \barVrz(\{i, j\}\cup \ball)- \barVrz(\{j\} \cup \ball) = \barVrz(\{j\} \cup \ball) - \barVrz(\{j\} \cup  \ball)=0$, which contradicts \eqref{subeq:two}. On the other hand, if $\rep\(\ball \cup \{i, j\}\) \neq \rep\(\ball \cup \{j\}\)$, then one agent $\hat{m} \in \ball$ must be replaced by $j$ when $j$ is added into the set $\ball \cup \{i\}$, i.e. $\rep\(\ball \cup \{i, j\}\) = \rep\(\ball\setminus \{\hat{m}\} \cup \{i, j\}\) $. Hence, $\eta_{\hat{m}, r} \leq \eta_{i,r}$ and $\eta_{\hat{m}, r} \leq \eta_{j, r}$. If $\eta_{\hat{m}, r} \leq  \eta_{k, r}$, then under the assumption that $|\rep\(\ball \cup \{j, k\}\)|=|\rep\(\ball \cup \{k\}\)|$ and $\rep\(\ball \cup \{j, k\}\) \neq \rep\(\ball \cup \{k\}\)$, we must have $\rep\(\ball \cup \{j, k\}\)= \rep\(\ball\setminus \{\hat{m}\} \cup \{j, k\}\)$. Then, we can check that $\barVrz(j|i, b) = \barVrz(j|k, b)$, which contradicts \eqref{subeq:one}. 

        On the other hand, if $\eta_{\hat{m}, r} > \eta_{k, r}$, then $\rep\(\ball \cup \{j, k\}\)= \rep\(\ball \cup \{j\}\)$. In this case, $\barVrz(i|\{j\} \cup \ball)$ is the change of trip value by replacing $\hat{m}$ with $i$, and $\barVrz(i|\{k\} \cup \ball)$ is the change of trip value by replacing $k$ with $i$. Since $\eta_{k, r} < \eta_{\hat{m}, r}$, we must have $\barVrz(i|\{j\} \cup \ball) < \barVrz(i|\{k\} \cup \ball)$, which contradicts \eqref{subeq:two}.

\emph{Case C:} $\rep\(\ball \cup \{i, j\}\) = \rep\(\ball \cup \{i\}\) \cup \{j\}$ and $|\rep\(\ball \cup \{j, k\}\)| = |\rep\(\ball \cup \{k\}\)|$. We further consider the following sub-cases: 

\emph{(C-1).}  $\rep\(\ball \cup \{j, k\}\) = \rep\(\ball \cup \{k\}\)$. In this case, $\eta_{j, r} \leq \eta_{m,r}$ for all $m \in \rep(\ball \cup \{k\})$, and $\eta_{j, r} < \thetafun(|\rep(\ball \cup \{k\})+1|) - \thetafun(|\rep(\ball \cup \{k\})|)$. Since $\rep\(\ball \cup \{i, j\}\) = \rep\(\ball \cup \{i\}\) \cup \{j\}$, we know that $\eta_{j, r} \geq \thetafun(|\rep(\ball \cup \{i\})+1|) - \thetafun(|\rep(\ball \cup \{i\})|)$. Since disutility of capacity sharing is non-decreasing in agent coalition size, for $\eta_{j, r}$ to satisfy both inequalities, we must have $|\rep(\ball \cup \{i\})| < |\rep(\ball \cup \{k\})|$. Then, we must have $\rep(\ball\cup \{i\}) = \rep(\ball)$ and $\rep(\ball\cup \{k\}) = \rep(\ball) \cup \{k\}$. Therefore,  $\barVrz(\{i, j\} \cup \ball) = \barVrz(\{j\} \cup \ball)$ and $\barVrz(\{i, k\} \cup \ball) = \barVrz(\{k\} \cup \ball)$. Hence, $\barVrz(i|\{j\} \cup \ball) = \barVrz(i|\{k\} \cup  \ball)=0$, which contradicts \eqref{subeq:two}. 

\emph{(C-2).}  $\rep\(\ball \cup \{j, k\}\) \neq \rep\(\ball \cup \{k\}\)$. Since $|\rep\(\ball \cup \{j, k\}\)| = |\rep\(\ball \cup \{k\}\)|$, $j$ replaces an agent $\hat{m}$ in $\ball \cup \{k\}$, and $\eta_{\hat{m}, r}\leq \eta_{m,r}$ for all $\m \in \ball \cup{k}$. If $\hat{m} = k$, then $\rep\(\ball \cup \{j, k\}\) = \rep\(\ball \cup \{j\}\)$. Therefore, $\barVrz(j|\{i\} \cup  \ball) = \eta_{j, r} - \(\thetafun(|\rep(\ball \cup \{i\})|+1) - \thetafun(|\rep(\ball \cup \{i\})|)\)$ and $\barVrz(j|\{k\} \cup  \ball) = \eta_{j, r} - {\eta_{k,r}}$. If $\eta_{k, r} \leq \thetafun(|\rep(\ball \cup \{i\})|+1) - \thetafun(|\rep(\ball \cup \{i\})|)$, then \eqref{subeq:one} is contradicted. Thus, $\eta_{k, r} > \thetafun(|\rep(\ball \cup \{i\})|+1) - \thetafun(|\rep(\ball \cup \{i\})|)$. Since $k$ is replaced by $j$ when $j$ is added to $\ball \cup \{k\}$, we must have $\eta_{k, r} < \thetafun(|\rep(\ball \cup \{j\})|+1) - \thetafun(|\rep(\ball \cup \{j\})|)$. For $\eta_{k, r}$ to satisfy both inequalities, we must have $|\rep(\ball \cup \{j\})| > |\rep(\ball \cup \{i\})|$. Hence, $\rep(\ball \cup \{j\}) = \rep(\ball) \cup \{j\}$ and $\rep(\ball \cup \{i\}) = \rep(\ball)$. Then, $\barVrz(i|\{j\} \cup \ball) = \barVrz(\ball\cup \{i, j\}) -  \barVrz(\ball\cup \{ j\}) = 0$, which contradicts \eqref{subeq:two}. 

On the other hand, if $\hat{m} \in \ball$, then we know from \eqref{eq:ell_two} that $\eta_{\hat{m}, r} < \thetafun(|\rep\(\ball \cup \{k\}\)|+1) - \thetafun(|\rep\(\ball \cup \{k\}\)|)$. Additionally, since $\rep\(\ball \cup \{i, j\}\) = \rep\(\ball \cup \{i\}\) \cup \{j\}$, we know from \eqref{eq:ell_one} that $\eta_{\hat{m}, r} \geq \thetafun(|\rep\(\ball \cup \{i\}\)|+1) - \thetafun(|\rep\(\ball \cup \{i\}\)|)$. If $\eta_{\hat{m}, r}$ satisfies both inequalities, then we must have $|\rep\(\ball \cup \{i\}\)|< |\rep\(\ball \cup \{k\}\)|$. Therefore, $\rep\(\ball \cup \{i\}\)= \rep(\ball)$. Then, $\barVrz(i|\{j\} \cup \ball)=0$, which contradicts \eqref{subeq:two}.

\emph{Case D:} $|\rep\(\ball \cup \{i, j\}\)| = |\rep\(\ball \cup \{i\}\)|$ and $\rep\(\ball \cup \{j, k\}\) = \rep\(\ball \cup \{k\}\) \cup \{j\}$. We further consider the following sub-cases: 

\emph{(D-1).}  $\rep\(\ball \cup \{i, j\}\) = \rep\(\ball \cup \{i\}\)$. In this case, analogous to \emph{(C-1)}, we know that $|\rep(\ball \cup \{k\})| < |\rep(\ball \cup \{i\})|$. Therefore, $\rep(\ball\cup \{k\})=\rep(\ball)$ and $\rep(\ball \cup \{i\})=\rep(\ball) \cup \{i\}$. Therefore, $\eta_{k, r} < \eta_{i,r}$. Additionally, since $\rep\(\ball \cup \{i, j\}\) = \rep\(\ball \cup \{i\}\)$, $\eta_{j, r} < \eta_{i,r}$. Then, $\barVrz(i|\{j\} \cup  \ball)= \barVrz(\{i\} \cup \ball) - \barVrz(\{j\} \cup \ball)$ and $\barVrz(i|\{k\} \cup \ball) = \barVrz(\{i\} \cup \ball)- \barVrz(\ball)$. Since $ \barVrz$ is monotonic, $\barVrz(\{j\} \cup \ball) \geq \barVrz(\ball)$ so that $\barVrz(i|\{j\} \cup \ball) \leq \barVrz(i|\{k\} \cup \ball)$, which contradicts \eqref{subeq:two}. 

\emph{(D-2).} $\rep\(\ball \cup \{i, j\}\) \neq \rep\(\ball \cup \{i\}\)$. Since $|\rep\(\ball \cup \{i, j\}\)| = |\rep\(\ball \cup \{i\}\)|$, $j$ replaces the agent $\hat{m} \in \ball \cup \{i\}$ such that $\eta_{\hat{m}, r} \leq \eta_{m, r}$ for all $\m \in \rep(\ball \cup \{i\})$. If $\hat{m} = i$, then analogous to case \emph{C-2}, we know that if \eqref{subeq:two} is satisfied, then $|\rep(\ball \cup \{j\})|< |\rep(\ball \cup \{k\})|$. Hence, $\rep(\ball \cup \{j\}) = \rep(\ball)$ and $V(j |\{i\} \cup \ball) =0$, which contradicts \eqref{subeq:one}. 

On the other hand, if $\hat{m} \in \ball$, then again analogous to case \emph{C-2}, we know that $|\rep\(\ball \cup \{k\}\)|< |\rep\(\ball \cup \{i\}\)|$. Therefore, $\rep\(\ball \cup \{k\}\)=\rep(\ball)$, and $\rep\(\ball \cup \{i\}\)= \rep(\ball)\cup \{i\}$. Then, $\barVrz(j|\{i\} \cup \ball) = \barVrz(\ball \setminus \{\hat{m}\} \cup \{i, j\}) - \barVrz(\{i\} \cup \ball)$, and $\barVrz(j|\{k\} \cup \ball) = \barVrz(\ball\cup \{j\}) - \barVrz(\ball)$. Since $\hat{m} \neq i$, $\barVrz(i|\{j\} \cup \ball)=\barVrz(\ball \setminus \{\hat{m}\} \cup \{i, j\}) - \barVrz(\{j\} \cup \ball) = \eta_{i, r}- \eta_{\hat{m}, r}$. Additionally, since $\rep(\{i\} \cup \ball) = \rep(\ball) \cup \{i\}$, $\barVrz(i|\{k\} \cup \ball)=\barVrz(\{i\} \cup  \ball) - \barVrz(\ball) = \eta_{i, r} -(\thetafun(|\rep(\ball)|+1)- \thetafun(|\rep(\ball)|))$. Since $\rep(\ball \cup \{i\})= \rep(\ball) \cup \{i\}$ and $\hat{m} \in \ball$, we know from \eqref{eq:ell_one} that $\eta_{\hat{m}, r}\geq  \thetafun(|\rep(\ball)|+1)- \thetafun(|\rep(\ball)|)$. Therefore, $\barVrz(i|\{j\} \cup \ball) \leq \barVrz(i|\{k\} \cup \ball)$, which contradicts \eqref{subeq:two}.

From all above four cases, we can conclude that condition \emph{(ii)} of Definition \ref{def:gross_substitute} is satisfied. We can thus conclude that $ \barVrz$ satisfies gross substitutes condition.
\QEDA

\medskip 
\noindent\emph{Proof of Lemma \ref{lemma:integer}.} For any route $r \in \Ropt = \{r|\kopt_r>0\}$, we denote $L_r$ as the set of slots that correspond using route $r$. We denote $L=\cup_{r \in \Ropt} L_r$. Given any  integer optimal solution $\xopt$ of \eqref{eq:LPy}, we denote $\{\ball_l\}_{l \in L_r}$ as the set of augmented  coalitions in $\Ball$ such that $\bar{x}_r^{*}(\ball)=1$. In particular, if the number of agent coalitions that take route $r$ is less than $k_r^{*}$, then $\ball_l=\emptyset$ for some of $l \in L_r$. 

We first show that $\yopt$ is an integer optimal solution of \eqref{eq:LPy} if and only if $((\ball_l)_{l \in L}, \uopt)$ is a Walrasian equilibrium of the equivalent economy with good set $M$ and buyer set $L$. Here, $\uopt$ is the optimal dual variable in the dual program of \eqref{eq:LPy} associated with constraint \eqref{subeq:LPy1}. If $((\ball_l)_{l \in L}, \uopt)$ is a Walrasian equilibrium of the equivalent economy, then the associated $\yopt$ must be a feasible solution of \eqref{eq:LPy}. We define $\lambda_r^{*} = \bar{V}_{l}(\ball_l) - \sum_{m \in \ball_l} u_m^{*} = \max_{\ball \in \Ball} \{\bar{V}_{l}(\ball) - \sum_{m \in \ball} u_m^{*}\}$ for $l \in L_r$, where the second equality follows from the definition of Walrasian equilibrium. We can check that $(\uopt, \tollopt)$ satisfies the dual constraints of \eqref{eq:LPy}, and $(\xopt, \uopt, \tollopt)$ satisfies the complementary slackness conditions associated with all the primal and dual constraints. Thus, $\xopt$ is an integer optimal solution of \eqref{eq:LPy}. On the other hand, we can analogously argue that if $\xopt$ is an integer optimal solution of \eqref{eq:LPy}, then the associated $(\ball_l)_{l \in L}$ and the dual optimal solution $\uopt$ is a Walrasian equilibrium in the equivalent economy. 

From Lemma \ref{lemma:existence_eq}, we know that when the augmented value function $\bar{V}_{l} =\bar{V}_{r}$ satisfies the monotonicity and gross substitutes conditions, Walrasian equilibrium exists in the equivalent economy. As a result, we know that the associated $\yopt$ is an integer optimal solution in \eqref{eq:LPy}. 

Finally, in \eqref{eq:construct}, we select one representative agent coalition $h_r(\ball)$ for each $\ball$ that is assigned to $r$ as the true feasible agent coalition that takes route $r$. Such $\xopt$ achieves the same social welfare as that in $\yopt$, and thus is an optimal solution of \eqref{eq:LP2k}. \QEDA

\section{Proof of Statements in Section \ref{sec:strategyproof}}

We define $\Uopt \deleq \left\{\u|\exists \toll \text{ such that } (\u, \toll) \text{ is optimal solution of \eqref{eq:D1bar}}\right\}$ as the equilibrium utility set. 
\begin{lemma}\label{lemma:utility}
If the network is series-parallel, a utility vector $\uopt \in \Uopt$ if and only if there exists a vector $\lambda^*$ such that $\left(\uopt, \lambda^*\right)$ is an optimal solution of \eqref{eq:D2k}.
\end{lemma}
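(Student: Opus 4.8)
The plan is to project out the price variables from both dual programs and compare the resulting reduced problems in the utility vector $\u$ alone. For $\u\ge 0$ let $\rho_r^z(\u)=\max_{b}\big(\Vrbbar-\sum_{\m\in b}\um\big)$ be the smallest route price for which the stability constraint on route $r$ with departure $z$ can be met. Projecting the edge dual \eqref{eq:D1bar} onto $\u$ gives $F(\u)=\sum_{\m}\um+\Psi(\u)$, where
\[
  \Psi(\u)=\min\Big\{\sum_{t,e}\qe\tollet:\ \toll\ge0,\ \sum_{e\in r}\toll_e^{z+\distre}\ge\rho_r^z(\u)\ \forall r,z\Big\},
\]
while projecting the route dual \eqref{eq:D2k} gives $G(\u)=\sum_{\m}\um+\Xi(\u)$ with $\Xi(\u)=\sum_{r,z}\koptrz\,(\rho_r^z(\u))_+$, since the route prices decouple and each is minimized at $(\rho_r^z(\u))_+$. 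Then $\Uopt=\arg\min_{\u}F$, and the utility part of the optimal set of \eqref{eq:D2k} is $\arg\min_{\u}G$, so the lemma is exactly the claim $\arg\min F=\arg\min G$. By Lemma \ref{lemma:FF} the optimal values of \eqref{eq:LP2k} and \eqref{eq:LP1bar} coincide, so by strong duality $\min F=\min G=:V^*$.

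Taking the LP dual of the inner edge-pricing program, $\Psi(\u)=\max_{y\in\mathcal Y}\sum_{r,z}(\rho_r^z(\u))_+\,y_r^z$, where $\mathcal Y$ is the set of feasible dynamic route flows, i.e.\ $y\ge0$ with $\sum_{(r,z):\,e\in r,\ z+\distre=t}y_r^z\le\qe$ for every edge $e$ and time $t$. Because the temporally repeated flow $\kopt$ lies in $\mathcal Y$, we get $\Psi(\u)\ge\sum_{r,z}\koptrz(\rho_r^z(\u))_+=\Xi(\u)$ for every $\u$, hence $F\ge G$ pointwise. Combined with $\min F=\min G$, this immediately yields $\arg\min F\subseteq\arg\min G$: if $\u^*$ minimizes $F$ then $G(\u^*)\le F(\u^*)=V^*\le G(\u^*)$, forcing equality. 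This is the forward inclusion, that every $\u\in\Uopt$ extends to an optimal solution of \eqref{eq:D2k}; concretely it corresponds to setting $\lambda_r^z=\sum_{e\in r}\toll_e^{z+\distre}$ from an optimal edge price.

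The reverse inclusion is the substantive part. For $\u^*\in\arg\min G$ I must upgrade the pointwise inequality to equality, i.e.\ show $\Psi(\u^*)=\Xi(\u^*)$, which amounts to proving that $\kopt$ is a maximum-weight dynamic flow in $\mathcal Y$ for the route-time weights $(\rho_r^z(\u^*))_+$. This is where the series-parallel hypothesis is essential: the weights are non-increasing in the departure time $z$ (later departure increases the delay penalty in $\Vrbbar$) and favor shorter routes, while $\kopt$ is the earliest arrival flow (Lemma \ref{lemma:earliest_arrival}), which maximizes the cumulative flow arriving by every deadline. I expect the main obstacle to be exactly this max-weight claim: on a single route it follows from an Abel-summation/exchange argument against the earliest-arrival property, but handling several routes that share edges requires an induction on the series-parallel decomposition mirroring the construction in the proof of Lemma \ref{lemma:FF} — routing higher-weight flow onto shorter routes and earlier departures, and arguing that neither a series nor a parallel composition can improve on the flow assembled from the two sub-networks. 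Once $\Psi(\u^*)\le\Xi(\u^*)$ is established, $F(\u^*)=G(\u^*)=V^*$ gives $\u^*\in\arg\min F=\Uopt$, completing the equivalence.
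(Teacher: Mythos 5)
Your projection of both duals onto the utility vector is sound, and the forward inclusion is complete as written: $\min F=\min G$ follows from Lemma \ref{lemma:FF} and strong duality, $\Psi\geq\Xi$ follows because $\kopt$ is feasible in the dual of the inner edge-pricing LP, and together these force every minimizer of $F$ to minimize $G$. (The paper handles this direction by setting $\lambda_r^{z*}=\sum_{e\in r}\toll_e^{z+\distre *}$ and checking complementary slackness; your argument is an equivalent repackaging.) The genuine gap is the reverse inclusion, which you correctly reduce to the claim that $\kopt$ attains $\max_{y\in\mathcal{Y}}\sum_{r,z}(\rho_r^z(\uopt))_+\,y_r^z$ but then leave unproved -- and the two structural properties you propose to drive it (weights non-increasing in the departure time $z$, and favoring shorter routes) are not by themselves sufficient. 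Consider the series--parallel network with parallel edges $e_1$ (length $1$) and $e_2$ (length $5$) from the origin to a node $v$, followed by $e_3$ (length $1$) to the destination, all of capacity $1$: then $w^*_{r_1}=1$ and $w^*_{r_2}=0$ for $r_1=e_1e_3$, $r_2=e_2e_3$. A weight $W_r^z=\max\{0,10-z\}$ depending only on $z$ satisfies both of your properties, yet sending one unit on $r_2$ at $z=1$ (which blocks the $r_1$ unit at $z=5$, since both enter $e_3$ at time $6$) changes the objective by $9-5=4>0$, so the temporally repeated flow is \emph{not} max-weight for such $W$.

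What rescues the claim is the finer structure of the true weights: $\rho_r^z(\u)=\max_{b}\bigl(w(b)-\sum_{m\in b}\um-g(b)\tr-\sum_{m\in b}\ell_m((z+\tr-\arrm)_{+})\bigr)$ with $g(b)\geq 0$, so at a \emph{fixed arrival time} $t=z+\tr$ the weight is non-increasing in $\tr$ (in the example above both competing units arrive at $t=7$ and the longer route is weakly worse). This is exactly the monotonicity exploited by the reassignment construction in the proof of Lemma \ref{lemma:FF}, and since that construction never uses the per-agent constraints \eqref{subeq:LP11}, it can be rerun on the Lagrangian relaxation in which each group's value is shifted by $-\sum_{m\in b}\umopt$ and \eqref{subeq:LP11} is dropped; that yields $\Psi(\u)\leq\Xi(\u)$ for every $\u\geq 0$, hence $F\equiv G$ pointwise and the lemma follows. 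Until you either carry out that series--parallel induction or explicitly invoke this relaxed form of Lemma \ref{lemma:FF}, the ``if'' direction is not established. Note also that your route, once completed, is genuinely different from the paper's: the paper instead inducts on the series--parallel decomposition at the level of the dual programs themselves, splitting each trip value into sub-trip values on the two components and introducing a coupling multiplier $\chi$ for the series composition, whereas your argument would localize all of the series--parallel work inside a single max-weight-flow lemma.
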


\noindent\emph{Proof of Lemma \ref{lemma:utility}.} We first show that for any optimal utility vector $\uopt \in \Uopt$, there exists a vector $\lambda^*$ such that $\(\uopt, \lambda^*\)$ is an optimal solution of \eqref{eq:D2k}. Since $\uopt \in \Uopt$, there must exist an edge price vector $\tollopt$ such that $\(\uopt, \tollopt\)$ is an optimal solution of \eqref{eq:D1bar}. Consider $\lambda^* = \(\lambda^*_r\)_{\r \in \R}$ as follows: 
\begin{align}\label{eq:lambdaopt}
\lambda^*_r = \sum_{\e \in \r} \tollopt_e, \quad \forall \r \in \R. 
\end{align}
Since $\(\uopt, \tollopt\)$ is feasible in \eqref{eq:D1bar}, we can check that $\(\uopt, \lambda^*\)$ is also a feasible solution of \eqref{eq:D2k}. Moreover, since $\(\xopt, \uopt, \tollopt\)$ satisfies complementary slackness conditions with respect to \eqref{eq:LP1bar} and \eqref{eq:D1bar}, $\(\xopt, \uopt, \lambda^*\)$ also satisfies complementary slackness conditions with respect to \eqref{eq:LP2k} and \eqref{eq:D2k}. Therefore, $\(\umopt, \lambda^*\)$ is an optimal solution of \eqref{eq:D2k}.

We next show that for any optimal solution $\(\uopt, \lambda^*\)$ of \eqref{eq:D2k}, we can find an edge price vector $\tollopt$ such that $\(\uopt, \tollopt\)$ is an optimal solution of \eqref{eq:D1bar} (i.e. $\uopt \in \Uopt$). We prove this argument by mathematical induction. To begin with, if the network only has a single edge $\E=\{\e\}$, then for any optimal solution $\(\uopt, \lambopt\)$, we can check that $\(\uopt, \tollopt\)$ where $\tollopt_e=\lambopt_e$ is an optimal solution of \eqref{eq:LP1bar}. We now prove that if this argument holds on two series-parallel networks $\Gp$ and $\Gpp$, then it also holds on the network constructed by connecting $\Gp$ and $\Gpp$ in parallel or in series. We prove the case of parallel connection and series connection separately as follows: 

\medskip
\noindent\emph{(Case 1).} The network $\G$ is constructed by connecting $\Gp$ and $\Gpp$ in parallel. In each network $\Gi$ ($i=1, 2$), we define $\Ei$ as the set of edges, $\Ri$ as the set of routes. We also define $\kopti$ as the optimal route capacity vector computed from Algorithm \ref{alg:flow} in $\Gi$, and $\Ropti = \{r\in \Ri|\kopti_r>0\}$ as the set of routes with positive capacity in $\kopti$. Since $\Gp$ and $\Gpp$ are connected in parallel, we have $\Ep\cup \Epp=\E$, $\Rp\cup \Rpp=\R$, $\kopt=\(k^{1*}, k^{2*}\)$, and $\Ropt=\Roptp\cup \Roptpp$. 

For each $i=1, 2$, we consider the sub-problem, where agents organize trips on the sub-network $\Gi$. For any $\(\xopt, \uopt, \lambopt\)$ on the original network $\G$, we define the trip vector $\xopti= \(\xopt_{\ri}(\b)\)_{\ri \in \Ri, \b \in \B}$ and the route price vector $\lambopti= \(\tollopt_{\ri}\)_{\ri \in \Ri}$ for the sub-network $\Gi$. We can check that the vector $\xopti$ is a feasible solution of \eqref{eq:LP2k} for the subproblem, where the route set $\Ropt$ in the original problem \eqref{eq:LP2k} is replaced by $\Ropti$, and $\kopt$ is replaced by $\kopti$, and the vector $\(\uopt, \lambopti\)$ is a feasible solution of \eqref{eq:LP2k}. Additionally, since the original optimal solutions $\xopt$ and $\(\uopt, \lambopt\)$ satisfy the complementary slackness conditions of constraints \eqref{subeq:LP11}-\eqref{subeq:LP12} and \eqref{subeq:D11} for all $\m \in \M$ and all $\r \in \Ropt= \Roptp\cup \Roptpp$, we know that $\xopti$ and $\(\uopt, \lambopti\)$ must also satisfy the complementary slackness conditions of these constraints in each subproblem. Therefore, $\xopti$ is an integer optimal solution of \eqref{eq:LP2k} and $\(\uopt, \lambopti\)$ is an optimal solution of \eqref{eq:D2k} in the subproblem $\probi$.

From our assumption of mathematical induction, there exists an edge price vector $\tollopti=\(\tollopt_e\)_{\e \in \Ei}$ such that $\(\uopt, \tollopti\)$ is an optimal solution of \eqref{eq:D1bar} in each subproblem $i$ with sub-network $\Gi$. Thus, $\(\uopt, \tollopti\)$ satisfies the feasibility constraints in \eqref{eq:D1bar} of each subproblem $i$, and $\xopti$ and $\(\uopt, \tollopti\)$ satisfy the complementary slackness conditions with respect to constraints \eqref{subeq:LP11} for each $\m \in \M$, \eqref{subeq:LP12} for each $\e \in \Ei$, \eqref{subeq:D11} for each $\ri \in \Ri$. Consider the edge price vector $\tollopt=\(\tolloptp, \tolloptpp\)$. Since $\R=\Rp \cup \Rpp$ and $\E=\Ep\cup \Epp$, $\(\uopt, \tollopt\)$ must be feasible in \eqref{eq:D1bar} on the original network, and $\xopt$, $\(\uopt, \tollopt\)$ must satisfy the complementary slackness conditions with respect to constraints \eqref{subeq:LP11} -- \eqref{subeq:LP12}, and \eqref{subeq:D11}. Therefore, we can conclude that for any optimal solution $\(\uopt, \tollopt\)$ of \eqref{eq:D2k}, there exists an edge price vector $\tollopt$ such that $\(\uopt, \tollopt\)$ is an optimal solution of \eqref{eq:D1bar} in network $\G$.

\medskip
\noindent\emph{(Case 2).} The network $\G$ is constructed by connecting $\Gp$ and $\Gpp$ in series. Same as that in case 1, we define $\Ei$ as the set of edges in the sub-network $\Gi$ ($i=1, 2$), and $\Ri$ as the set of routes. Since $\Gp$ and $\Gpp$ are connected in series, we have $\E = \Ep \cup \Epp$, and $\R=\Rp \times \Rpp$.

We define a sub-trip $
(\b, \ri)$ as the trip in the sub-network $\Gi$ where agent coalition $\b$ takes route $\ri \in \Ri$. Analogous to the value of trip defined in \eqref{eq:value_of_trip}, the value of each sub-trip $\(\b, \ri\) \in \B \times \Ri$ is defined as: 
\begin{align}\label{eq:Virb}
    V_{\ri}^i(\b) = \sum_{\m \in \bbar} \tripmi - |b|\changetrip^i(|b|) - \sum_{\m \in \b} \vm d_{\ri} - |b|\changevm(|b|) d_{\ri}, ~ \forall \bbar \in \B, ~\forall \ri \in \Ri, ~ \forall i=1, 2, 
\end{align}where $\tripmi$ can be any number in $[0, \tripm]$ as long as  $\tripmp+\tripmpp=\tripm$, and $\changetrip^i(|b|)$ can be any number in $[0, \changetrip(|b|)]$ as long as  $\changetrip^1(|b|)+\changetrip^2(|b|)=\changetrip(|b|)$. We can check that $V_{\rp}^1(\b)+ V_{\rpp}^2(\b)= V_{\rp\rpp}(\b)$ is the value of the entire trip $\(\b, \rp\rpp\)$ of the original network.

We denote the trip organization vector on $\Gi$ as $\xxi=\(\xxi_{\ri}(\b)\)_{\ri \in \Ri, \b \in \B}$, where $\xxi_{\ri}(\b)=1$ if the sub-trip $\(\b, \ri\)$ is organized in $\Gi$, and 0 otherwise. The optimal trip organization problem \eqref{eq:LP1bar} can be equivalently presented by $\(\xp, \xpp\)$ as follows:  
\begin{subequations}\label{eq:LP1-r}
    \begin{align}
         \max_{\xp, \xpp} \quad &S(\xp, \xpp) = \sum_{\bbar \in \Bbar}\sum_{\rp \in \Rp} V^1_{\rp}(\b) \xp_{\rp}(\b)+  \sum_{\bbar \in \Bbar}\sum_{\rpp \in \Rpp} V^2_{\rpp}(\b) \xpp_{\rpp}(\b) \notag\\
   s.t. \quad  &  \sum_{\ri \in \Ri}\sum_{\b\ni \m} \xxi_{\ri}(\b) \leq 1, \quad \forall \m \in \M, \quad \forall i=1, 2 \label{subeq:LP11-r}\\
 \quad &\sum_{\ri \ni \e} \sum_{\bbar \in \Bbar} \xxi_{\ri}(\b)  \leq \qe, \quad \forall \e \in \Ei, \quad \forall i=1, 2 \label{subeq:LP12-r}\\
 & \sum_{\rp \in \Rp}\xp_{\rp}(\b)= \sum_{\rpp \in \Rpp}\xp_{\rpp}(\b), \quad \forall \b \in \B, \label{subeq:LP1-extra}\\
    &\xxi_{\ri}(\b) \geq 0, \quad \forall \bbar \in \Bbar, \quad \forall \ri \in \Ri, \quad \forall i=1, 2,\label{subeq:LP13-extra}
    \end{align}
\end{subequations}
where \eqref{subeq:LP11-r} and \eqref{subeq:LP12-r} are the constraints of $\xxi$ in the trip organization sub-problem on $\Gi$. The constraint \eqref{subeq:LP1-extra} ensures that any agent coalition that takes a route in $\Gp$ (resp. $\Gpp$) must also takes a route in $\Gpp$ (resp. $\Gp$) to complete a trip in the original network $\G$.


We denote $\kopti$ as the optimal capacity vector of sub-network $\Gi$ computed from Algorithm \ref{alg:flow}. Since Algorithm \ref{alg:flow} allocates capacity on routes in increasing order of their travel time, and the total travel time of each route is $d_{\rp\rpp}=d_{\rp}+d_{\rpp}$, we know that $k^{1*}_{\rp}= \sum_{\rpp \in \Rpp} \kopt_{\rp\rpp}$ for all $\rp \in \Rp$ and $k^{2*}_{\rpp}=\sum_{\rp \in \Rp} \kopt_{\rp\rpp}$ for all $\rpp \in \Rpp$. Analogous to the proof of Lemma \ref{lemma:FF}, any integer optimal solution of the following linear program is an optimal solution of \eqref{eq:LP1-r}: 
\begin{subequations}\label{eq:LPk-r}
        \begin{align}
         \max_{\xp, \xpp} \quad &S(\xp, \xpp) = \sum_{\bbar \in \Bbar}\sum_{\rp \in \Rp} V^1_{\rp}(\b) \xp_{\rp}(\b)+  \sum_{\bbar \in \Bbar}\sum_{\rpp \in \Rpp} V^2_{\rpp}(\b) \xpp_{\rpp}(\b) \notag\\
   s.t. \quad  &  \sum_{\ri \in \Ri}\sum_{\b\ni \m} \xxi_{\ri}(\b) \leq 1, \quad \forall \m \in \M, \quad \forall i=1, 2, \label{subeq:LPk1-r}\\
 \quad & \sum_{\b \in \B} \xxi_{\ri}(\b)  \leq \kopti_{\ri}, \quad \forall \ri \in \Ri, \quad \forall i=1, 2,\label{subeq:LPk2r}\\
  & \sum_{\rp \in \Rp}\xp_{\rp}(\b)= \sum_{\rpp \in \Rpp}\xp_{\rpp}(\b), \quad \forall \b \in \B, \label{subeq:LPk-extra}\\
    &\xxi_{\ri}(\b) \geq 0,  \quad \forall \bbar \in \Bbar, \quad \forall \ri \in \Ri, \quad \forall i=1, 2.\label{subeq:LPk3-r}
    \end{align}
    \end{subequations}
    
    We note that a trip $\(\b, \rp\rpp\)$ is organized if and only if both $\xp_{\rp}(\b)=1$ and $\xpp_{\rpp}(\b)=1$. Thus, any $\(\xp, \xpp\)$ is feasible in \eqref{eq:LP1-r} (resp. \eqref{eq:LPk-r}) if and only if there exists a feasible $\x$ in \eqref{eq:LP1bar} (resp. \eqref{eq:LP2k}) such that $\xp_{\rp}(\b)=\sum_{\rpp \in \Rpp}\x_{\rp\rpp}(\b)$ and $\xpp_{\rpp}(\b)=\sum_{\rp \in \Rp}\x_{\rp\rpp}(\b)$. Moreover, the value of the objective function $S\(\xp, \xpp\)$ equals to $S(\x)$ with the corresponding $\x$: 
{\small \begin{align*}
    &S(\xp, \xpp) \\
    \stackrel{\eqref{eq:Virb}}
    {=} &\sum_{\b \in \B} \sum_{\rp \in \Rp}\xp_{\rp}(\b)\( \sum_{\m \in \b}\alpha^{1}_m - |b|\changetrip^1(|b|)\) -\sum_{\b \in \B} \sum_{\rp \in \Rp} \(\sum_{\m \in \b} \vm d_{\rp} + |\b|\changevm(|\b|)d_{\rp}\)\xp_{\rp}(\b) \\
    &+\sum_{\b \in \B} \sum_{\rpp \in \Rpp}\xpp_{\rpp}(\b)\( \sum_{\m \in \b}\alpha^{2}_m - |b|\changetrip^2(|b|)\) -\sum_{\b \in \B} \sum_{\rpp \in \Rpp} \(\sum_{\m \in \b} \vm d_{\rpp} + |\b|\gamma(|\b|)d_{\rpp}\)\xpp_{\rpp}(\b) \\
    \stackrel{\eqref{subeq:LP1-extra}}{=}& \sum_{\b \in \B} \sum_{\r \in \R} \xrb \(\sum_{\m \in \b} \tripm - |b|\changetrip(|b|)\) - \sum_{\b \in \B} \sum_{\rp \in \Rp} \(\sum_{\m \in \b} \vm d_{\rp} + |\b|\changevm(|\b|)d_{\rp}\)\sum_{\rpp \in \Rpp}\x_{\rp\rpp}(\b) \\
    &-  \sum_{\b \in \B} \sum_{\rpp \in \Rpp} \(\sum_{\m \in \b} \vm d_{\rpp} + |\b|\changevm(|\b|)d_{\rpp}\)\sum_{\rp \in \Rp}\x_{\rp\rpp}(\b) \\
    =& \sum_{\b \in \B} \sum_{\r \in \R}\xrb \(\sum_{\m \in \b} \tripm - \vm d_{\r} - |b| \changetrip(|b|)- |\b|\changevm(|\b|)d_{\r}\) = S(\x).
\end{align*}}
Therefore, given any optimal solution $\xopt$ of \eqref{eq:LP2k}, $\(\xoptp, \xoptpp\)$, where $\xoptp_{\rp}(\b)=\sum_{\rpp \in \Rpp}\xopt_{\rp\rpp}(\b)$ and $\xoptpp_{\rpp}(\b)=\sum_{\rp \in \Rp}\xopt_{\rp\rpp}(\b)$, is an integer optimal solution of \eqref{eq:LPk-r}. Additionally, $\(\xoptp, \xoptpp\)$ is also an optimal solution of 
\eqref{eq:LP1-r}. Hence, the optimal values of \eqref{eq:LP1bar}, \eqref{eq:LP1-r}, \eqref{eq:LP2k} and \eqref{eq:LPk-r} are the same.

    We introduce the dual variables $\ui=\(\ui_m\)_{\m \in \M}$ for constraints \eqref{subeq:LP11-r}, $\tolli=\(\tolli_e\)_{\e \in \Ei}$ for \eqref{subeq:LP12-r} of each $i=1, 2$, and $\chi=\(\chi(\b)\)_{\b \in \B}$ for \eqref{subeq:LP1-extra}. Then, the dual program of \eqref{eq:LP1-r} can be written as follows: 
    \begin{subequations}\label{eq:D-r}
\begin{align}
    \min_{\up, \upp, \tollp, \tollpp, \chi} \quad &U= \sum_{\m \in \M} \up_m+ \sum_{\m \in \M} \upp_m + \sum_{\e \in \Ep} \qe \tollp_e+ \sum_{\e \in \Epp} \qe \tollpp_e\notag \\
s.t. \quad & \sum_{\m \in \bbar} \up_m +\sum_{\e \in \rp} \tollp_e + \chi(\b)\geq V^1_{\rp}(\b), \quad \forall \bbar \in \Bbar, \quad \forall \rp \in \Rp,\label{subeq:D111}\\
& \sum_{\m \in \bbar} \upp_m +\sum_{\e \in \rpp} \tollpp_e - \chi(\b)\geq V^2_{\rpp}(\b), \quad \forall \bbar \in \Bbar, \quad \forall \rpp \in \Rpp,\label{subeq:D112}\\
        & \ui_m, ~ \tolli_e \geq 0, \quad \forall \m \in \M, \quad \forall \e \in \E, \quad i=1, 2.\label{subeq:D12-r}
\end{align}
\end{subequations}

    Similarly, we obtain the dual program of \eqref{eq:LPk-r} with the same dual variables except for the route price vector $\lambi=\(\lambi_{\ri}\)_{\ri \in \Ropti}$ for \eqref{subeq:LPk2r}: 
    \begin{subequations}\label{eq:Dk-r}
    \begin{align}
    \min_{\up, \upp, \lambp, \lambpp, \chi} \quad &U= \sum_{\m \in \M} \up_m+ \sum_{\m \in \M} \upp_m + \sum_{\rp \in \Roptp} k^{1*}_{\rp} \lambp_{\rp}+ \sum_{\rpp \in \Roptpp} k^{2*}_{\rpp} \lambpp_{\rpp}\notag \\
s.t. \quad & \sum_{\m \in \bbar} \up_m +\lambp_{\rp} + \chi(\b)\geq V^1_{\rp}(\b), \quad \forall \bbar \in \Bbar, \quad \forall \rp \in \Roptp,\label{subeq:Dk11}\\
& \sum_{\m \in \bbar} \upp_m +\lambpp_{\rpp} - \chi(\b)\geq V^2_{\rpp}(\b), \quad \forall \bbar \in \Bbar, \quad \forall \rpp \in \Roptpp,\label{subeq:Dk12}\\
        & \ui_m, ~ \lambi_{\ri} \geq 0, \quad \forall \m \in \M, \quad \forall \ri \in \Ropti, \quad i=1, 2. \label{subeq:Dk-extra}
\end{align}
\end{subequations}
From strong duality, we know that the optimal value of \eqref{eq:Dk-r} (resp. \eqref{eq:D2k}) is the same as that of \eqref{eq:LPk-r} (resp. \eqref{eq:LP2k}). Since the optimal values of \eqref{eq:LP2k} and \eqref{eq:LPk-r} are identical, we know that the optimal values of \eqref{eq:Dk-r} must be equal to that of \eqref{eq:D2k}. Additionally, we can check that for any feasible solution $\(\up, \upp, \lambp, \lambpp, \chi\)$ of \eqref{eq:Dk-r} must correspond to a feasible solution $\(\u, \lambda\)$ of \eqref{eq:D2k} such that $\um=\up_m+\upp_m$ and $\lambda_{\rp\rpp}=\lambp_{\rp}+\lambpp_{\rpp}$. Then, for each $\(\uopt, \lambopt\)$, we consider the optimal solution $\(\uoptp, \uoptpp, \lamboptp, \lamboptpp, \chi^{*}\)$ of \eqref{eq:Dk-r}, and define $\tilde{V}^1_{\rp}(\b)= V^1_{\rp}(\b)-\chi^*(\b)$, $\tilde{V}^2_{\rpp}(\b)= V^2_{\rpp}(\b)+\chi^*(\b)$ for each $\rp \in \Rp$, $\rpp \in \Rpp$ and $\b \in \B$. Then, for each $i=1, 2$, $\(\uopti, \lambopti\)$ is an optimal solution of the following linear program: 
\begin{subequations}\label{eq:Dki}
    \begin{align}
    \min_{\ui, \lambi} \quad &U^i= \sum_{\m \in \M} \ui_m+ \sum_{\ri \in \Ropti} \kopti_{\ri} \lambi_{\ri}\notag \\
s.t. \quad & \sum_{\m \in \bbar} \ui_m +\lambi_{\ri} \geq \tilde{V}^i_{\ri}(\b), \quad \forall \bbar \in \Bbar, \quad \forall \ri \in \Ropti, \label{subeq:Dki1}\\
         & \ui_m, ~ \lambi_{\ri} \geq 0, \quad \forall \m \in \M, \quad \forall \ri \in \Ropti. \label{subeq:Dki2}
\end{align}
\end{subequations}
From the assumption of the mathematical induction, there exists an edge price vector $\tollopti$ such that $\(\uopti, \tollopti\)$ is an optimal dual solution of the trip organization problem on the sub-network given $\tilde{V}^i$ value function for each $i=1, 2$: 
\begin{equation}\label{eq:Dp}
    \begin{split}
    \min_{\ui, \lambi} \quad &U= \sum_{\m \in \M} \ui_m+ \sum_{\e \in \Ei} \qe \tolli_{\e} \\
s.t. \quad & \sum_{\m \in \bbar} \ui_m +\sum_{\e \in \ri} \tolli_e \geq \tilde{V}^i_{\ri}(\b), \quad \forall \bbar \in \Bbar, \quad \forall \ri \in \Ri,\\
         & \ui_m, ~ \tolli_{\e} \geq 0, \quad \forall \m \in \M, \quad \forall \e \in \Ei.
\end{split}
\end{equation}
Since the objective function \eqref{eq:D-r} is the sum of the objective functions in \eqref{eq:Dp} for $i=1, 2$, and the constraints are the combination of the constraints in the two linear programs, we know that $\(\uoptp, \uoptpp, \tolloptp, \tolloptpp, \chi^*\)$ must be an optimal solution of \eqref{eq:D-r}. We consider the edge price vector $\tollopt = \(\tolloptp, \tolloptpp\)$. Since $\(\uoptp, \uoptpp, \tolloptp, \tolloptpp, \chi^*\)$ satisfies constraints \eqref{subeq:D111} and \eqref{subeq:D112} and $\umopt=\uoptp_m+\uoptpp_m$ for all $\m \in \M$, $\(\uopt, \tollopt\)$ is a feasible solution of \eqref{eq:D1bar} on the original network $\G$. Furthermore, since $\(\uopt, \tollopt\)$ achieves the same objective value as the optimal solution $\(\uoptp, \uoptpp, \tolloptp, \tolloptpp, \chi^*\)$ in \eqref{eq:Dk-r}, $\(\uopt, \tollopt\)$ must be an optimal solution of \eqref{eq:D1bar} on the network $\G$.

Finally, we conclude from cases 1 and 2 that in any series-parallel network, for any optimal solution $\(\uopt, \lambopt\)$ of \eqref{eq:D2k}, there must exist an edge price vector $\tollopt$ such that $\(\uopt, \tollopt\)$ is an optimal solution of \eqref{eq:D1bar}.  \QEDA


Lemma \ref{lemma:utility} enables us to characterize the agents' equilibrium utility set $\Uopt$ using the dual program \eqref{eq:D2k} associated with the optimal trip organization problem under the capacity constraint with $\kopt$. The following lemma further shows that $\Uopt$ is a lattice with $\udag$ being the maximum element. 
\begin{lemma}\label{lemma:utility_lattice}
If the network is series-parallel, and agents have homogeneous carpool disutilities, then the set $\Uopt$ is a complete lattice with $\udag \in \Uopt$, and $\udag_m \geq \uopt_m$ for any $\uopt \in \Uopt$ and any $m \in M$. 
\end{lemma}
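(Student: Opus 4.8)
The plan is to identify the equilibrium utility set $\Uopt$ with the set of Walrasian equilibrium price vectors $U^*$ of the auxiliary economy from Lemma \ref{lemma:integer} (agents as indivisible goods, route--time slots $\L$ as buyers with valuation $\barVrz$, and $\um$ the price of good $m$), and then invoke the lattice structure of $U^*$ supplied by Lemma \ref{lemma:lattice}. Since the network is series-parallel, Lemma \ref{lemma:FF} guarantees that $\kopt$ is well-defined and that \eqref{eq:LP2k} shares the optimal value of \eqref{eq:LP1bar}, while Lemma \ref{lemma:utility} tells us that $\uopt \in \Uopt$ if and only if $\uopt$ is the utility component of some optimal solution $(\uopt, \lambda^*)$ of \eqref{eq:D2k}. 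Hence it suffices to characterize the optimal utility components of \eqref{eq:D2k}.

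The key step is to show that \eqref{eq:D2k} and the dual of \eqref{eq:LPy} (same variables $u,\lambda$, same objective $\sum_{m} \um + \sum_{r}\sum_{z} \koptrz \lambda_r^z$, but with constraints $\sum_{m \in \ball} \um + \lambda_r^z \geq \barVrz(\ball)$ ranging over $\ball \in \Ball$) have identical feasible regions, and therefore identical sets of optimizers. One inclusion is immediate because $\barVrz(\b) \geq V_r^z(\b)$ for every $\b \in \B$. For the reverse, I would take any feasible $(u,\lambda)$ of \eqref{eq:D2k} and any $\ball \in \Ball$, apply the \eqref{eq:D2k} constraint to the representative group $\rep(\ball) \subseteq \ball$, and use $u \geq 0$ to obtain $\sum_{m \in \ball} \um + \lambda_r^z \geq \sum_{m \in \rep(\ball)} \um + \lambda_r^z \geq V_r^z(\rep(\ball)) = \barVrz(\ball)$. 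Since the objectives agree, the optimizer sets coincide, so $\Uopt$ equals the set of optimal utility components of the dual of \eqref{eq:LPy}.

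Next I would invoke the primal--dual correspondence already established in the proof of Lemma \ref{lemma:integer}: complementary slackness between \eqref{eq:LPy} and its dual is exactly the pair of Walrasian conditions in Definition \ref{def:we} (demand maximization by each slot and zero utility for any unassigned agent), with $\lambda_r^z = \max_{\ball \in \Ball}\{\barVrz(\ball) - \sum_{m \in \ball} \um\}$ the common indirect utility of the identical slots in $L_r^z$. This identifies the set of optimal utility components of the dual of \eqref{eq:LPy} with the Walrasian price set $U^*$, so that $\Uopt = U^*$ as subsets of $\mathbb{R}_{\geq 0}^{|\M|}$; in particular the componentwise meet/join lattice operations on $U^*$ transfer verbatim to $\Uopt$. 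Finally, Lemma \ref{lemma:condition_gross} shows $\barVrz$ is monotone and gross substitutes under the homogeneous disutility assumption, so Lemma \ref{lemma:lattice} yields that $U^*$ is a complete lattice with maximum element $\udag$ as in \eqref{eq:umaxm}. Combined with $\Uopt = U^*$, this gives all three claims: $\Uopt$ is a complete lattice, $\udag \in \Uopt$, and $\udag_m \geq \uopt_m$ for all $\uopt \in \Uopt$.

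The main obstacle is the reduction in the second paragraph: verifying that replacing the original trip values $V$ by the augmented values $\barV$ neither enlarges nor shrinks the set of optimal utility vectors, i.e. that the two dual programs have the same optimizers. The nonnegativity $u \geq 0$ together with the representative-group identity $\barVrz(\ball) = V_r^z(\rep(\ball))$ is precisely what closes this argument; without $u \geq 0$ the passage to the representative group would fail. A secondary point worth recording is that the maximum Walrasian price in this agents-as-goods economy is the marginal product $S(\xopt) - S_{-m}(\xoptmm)$, matching the VCG utility $\udag_m$ of \eqref{eq:umaxm}; this is the content of Lemma \ref{lemma:lattice} specialized to our economy and is what links $\udag$ here to the VCG outcome analyzed in the previous subsection.
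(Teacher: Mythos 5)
Your proof is correct and follows essentially the same route as the paper's: reduce to the optimal utility components of \eqref{eq:D2k} via Lemma \ref{lemma:utility}, identify these with the optimizers of the augmented dual \eqref{eq:Dbark} using the representative-group inequality $\sum_{\m \in \ball} \um + \lambda_r^z \geq \sum_{\m \in \rep(\ball)} \um + \lambda_r^z \geq V_r^z(\rep(\ball)) = \barVrz(\ball)$ together with $u \geq 0$, and then transfer the lattice structure and maximum element $\udag$ of the Walrasian price set from Lemmas \ref{lemma:condition_gross} and \ref{lemma:lattice} through the equivalence established in the proof of Lemma \ref{lemma:integer}. Your observation that the two dual programs have literally identical feasible regions (hence identical optimizer sets) is a slightly more streamlined way to obtain the equivalence than the paper's explicit complementary-slackness verification, but the substance of the argument is the same.
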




\medskip 
\noindent\emph{Proof of Lemma \ref{lemma:utility_lattice}.}
We first prove that a utility vector $\uopt \in \Uopt$ if and only if $\uopt$ is an optimal utility vector of the following linear program: 
\begin{subequations}
        \makeatletter
        \def\@currentlabel{$\bar{\mathrm{D}}k^*$}
        \makeatother
        \label{eq:Dbark}
        \renewcommand{\theequation}{$\bar{\mathrm{D}}k^*$.\alph{equation}}
    \begin{align}
         \min_{\u, \lambda}  \quad & \sum_{\m \in \M} \um + \sum_{\r \in \R}  \kopt_r \lambda_r, \notag\\
   s.t. \quad  &  \sum_{\m \in \ball} \um + \lambda_r \geq \bar{V}_r(\ball), \quad \forall (\ball, \r) \in \Ball \times \R, \label{subeq:Dbark1}\\
    &\um \geq 0, ~\lambda_r \geq 0, \quad \forall \m \in \M, \quad \forall \r \in \R.\label{subeq:Dbark2}
    \end{align}
\end{subequations} We note that $\(\uopt, \lambdaopt\)$ is a feasible solution of \eqref{eq:Dbark} since for any $(\ball, \r) \in \Ball \times \R$, 
\begin{align*}
    \sum_{\m \in \ball} \um + \lambda_r \geq \sum_{\m \in \rep(\ball)} \um + \lambda_r \geq V_r(\rep(\ball)) =  \bar{V}_r(\ball),
\end{align*}
where $\rep(\ball)$ is a representative agent coalition of $\ball$ given $\r$. 

Note that any $\(\uopt, \lambopt\)$ is an optimal solution of \eqref{eq:D2k} if and only if there exists an optimal solution $\xopt$ of \eqref{eq:LP2k} such that $(\xopt, \uopt, \lambopt)$ satisfies the primal feasibility, dual feasibility, and complementary slackness conditions corresponding to \eqref{eq:LP2k} and \eqref{eq:D2k}. Given such $\xopt$, we can construct $\xbarr^*$ such that for any $r \in \R$, $\xbarr_r^{*}(\ball)=x_r^{*}(\ball)$ for all $\ball \in \B$, and $\xbarr_r^{*}(\ball)=0$ for all $\ball \in \Ball \setminus \B$. Such $\xbarr^*$ is an optimal solution of \eqref{eq:LPy} since it achieves the same total social value as $\xopt$. It remains to show that $(\xbarr^*, \uopt, \lambopt)$ satisfies the complementary slackness conditions associated with \eqref{subeq:LPy1}, \eqref{subeq:LPy2}, and \eqref{subeq:Dbark1}: 
\begin{itemize}
    \item[(1)] We note that $\sum_{r \in R}\sum_{\ball \ni m}\xbarr_r^{*}(\ball) = \sum_{r \in R} \sum_{\ball \ni m}\x_r^{*}(\b)$ for any $\m \in \M$. From the complementary slackness condition associated with \eqref{subeq:LPy1}, we know that $$\uopt_m \cdot \(1- \sum_{r \in R}\sum_{\ball \ni m}\xbarr_r^{*}(\ball)\) = \uopt_m \cdot \(1- \sum_{r \in R}\sum_{\ball \ni m}\x_r^{*}(\b)\)=0, \quad \forall m \in M.$$ Thus, the complementary slackness condition associated with \eqref{subeq:LP2k1} is satisfied. 
    \item[(2)] We note that $\sum_{\ball \in \Ball}\xbarr_r^{*}(\ball) = \sum_{\b \in \B}\x_r^{*}(\ball)$. From the complementary slackness condition associated with \eqref{subeq:LPy2}, $\lambopt_r \cdot \(\kopt_r - \sum_{\ball \in \Ball}\xbarr_r^{*}(\ball)\) = \lambopt_r \cdot \(\kopt_r - \sum_{\b \in \B}x_r^{*}(\ball)\)=0$. Therefore, the complementary slackness condition associated with \eqref{subeq:LP2k2} is satisfied.   
    \item[(3)] Since $\xbarr^{*}$ is 0 for any $\ball \in \Ball \setminus \B$, we only need to check that $(\xbar^*, \uopt, \lambopt)$ satisfies the complementary slackness conditions associated with \eqref{subeq:Dbark1} for $(\ball, \r) \in \Ball \times \R$. Since $\xbarr^{*}_r(\ball) = x^{*}_r(\ball)$ for all $\ball \in \B$, such complementary slackness conditions directly follow from that with respect to $\(\xopt, \uopt, \lambopt\)$.
\end{itemize}

We next show that any $\(\uopt, \lambopt\)$ that is an optimal solution of \eqref{eq:Dbark} is also an optimal solution of \eqref{eq:D2k}. We note that $\xbarr^*$ constructed from the optimal solution $\xopt$ in \eqref{eq:LP1bar} is also an optimal solution of \eqref{eq:LPy}, and thus $\(\xbarr^*, \uopt, \lambopt\)$ satisfies the complementary slackness conditions with respect to \eqref{subeq:LPy1}, \eqref{subeq:LPy2}, and \eqref{subeq:Dbark1}. From the complementary slackness condition associated with \eqref{subeq:LPy1}, we know that for any agent that is not assigned to a trip in $\xbarr^*$, the utility is zero. Since the agent coalitions assigned in $\xopt$ are the same as those in $\xbarr^*$, we know that any agent that is not assigned to trips in $\xopt$ also has zero utility. We have $\sum_{\m \in \ball} \uopt_m = \sum_{\m \in \rep(\ball)} \uopt_m$ for all $\ball \in \Ball$, where $\rep(\ball)$ is the representative agent coalition that is organized given $\ball$. Thus, $(\uopt, \lambopt)$ is a feasible solution of \eqref{eq:D2k}.

Following the analogous argument as in (1) -- (3), we can prove that $(\xopt, \uopt, \lambopt)$ also satisfies the complementary slackness conditions associated with \eqref{subeq:LPy1}, \eqref{subeq:LPy2}, and \eqref{subeq:Dbark1}, where $\xopt$ is the optimal solution in \eqref{eq:LPy} that is used to construct $\xbarr^*$.

Finally, following the proof of Lemma \ref{lemma:integer}, we know that $\Uopt$ is the set of equilibrium prices of goods in the equivalent economy. From Lemma \ref{lemma:lattice}, we know that the set of Walrasian equilibrium price is a lattice, and the maximum element is $\udag$. Consequently, we can conclude that the set $\Uopt$ is a lattice, and $\udag$ is the maximum element. Since the optimal value of the objective function for all $(\uopt, \tollopt)$ equals to $S(\xopt)$, we can conclude that the total edge price given by $\tolldag$ is no higher than that of any other equilibrium. \QEDA

Combining Lemmas \ref{lemma:utility} and \ref{lemma:utility_lattice}, we know that $(\xopt, \pdag, \tolldag)$ is a market equilibrium. We conclude Theorem \ref{theorem:strategyproof} by noticing that $(\xopt, \pdag)$, where $\pdag$ is the VCG payment as in \eqref{eq:pmdag}, implements the same outcome as a VCG mechanism, and thus $(\xopt, \pdag, \tolldag)$ must be strategyproof.

\section{Supplementary material for Section \ref{sec:multipop}}\label{apx:multi}
\begin{lemma}[\cite{ruzika2011earliest}]\label{lemma:earliest_arrival}
On series-parallel networks, the temporally repeated flow $w^*$ maximizes the total flow that arrives on or before $t$ for every $t=1, 2, \dots, T$. That is, for any feasible $x$, we have: 
\begin{align*}
\sum_{r \in \R} \sum_{\start=1}^{t-\tr} \sum_{b \in B} x_r^{\start}(b) \leq \sum_{r \in \R} \kopt_r \max\{0, t-\tr\}, \quad \forall t=1, \dots, T.
\end{align*}
\end{lemma}

\noindent\emph{Proof of Lemma \ref{lemma:FF_dynamic}.}
Consider any (fractional) optimal solution of \eqref{eq:LP1bar}, denoted as $\xhat$. For any time step $t$, we denote $\fhat^t(\b) = \sum_{\r \in \R} \xhat^{t-\tr}_r(\b)$ as the flow of coalition $\b$ that arrives at the destination at time $t$. We denote $\widehat{F}^t= \sum_{j=1}^{t}\sum_{\b \in \B} \fhat^j(\b)$ as the total flows that arrive at the destination on or before time step $t$. Since $\xhat$ is feasible and the network is series-parallel, we know from Lemma \ref{lemma:earliest_arrival} that 
\begin{align}\label{eq:max_capacity}
    \widehat{F}^t \leq \sum_{r \in R} k_r \cdot \max\{0, t-\tr\}, \quad \forall t\in [T]. 
\end{align}

We denote the set of all coalitions with positive flow in $\xhat$ as $\widehat{\B} \deleq \{\bhat \in \B|\sum_{t=1}^{T}\fhat^t(\bhat)>0\}$. For each $\bhat \in \Bhat$, we re-write the trip value function in \eqref{eq:value_of_trip} as follows: 
\begin{align*}
V_r^{\start}(\bhat)= w(\bhat)- g(\bhat) \tr -\sum_{\m \in \b} \delaym((\start+\tr-\arrm)_{+}), \quad \forall \(\bhat, \r\) \in \Bhat \times \R, \quad \forall \start \in [T-\tr], \end{align*}
where $w(\bhat)= \sum_{\m \in \bhat} \(\tripm - \Delta\alpha_m(|\hat{b}|)\)$, and $g(\bhat) = \sum_{\m \in \bhat} \(\vm  + \Delta \beta_m(|\hat{\b}|) \)$ is the sensitivity with respect to travel time cost. We denote the number of agent coalitions in $\Bhat$ as $n$, and re-number these agent coalitions in decreasing order of $g(\bhat)$, i.e.
\begin{align*}
g(\bhat_1) \geq g(\bhat_2) \geq \cdots \geq g(\bhat_n).
\end{align*}


We now construct another trip vector $\xopt$ by the following procedure: \\
Initial zero assignment vector $\xrzbopt \leftarrow 0$ for all $\start=1, 2, \dots, T$, $\r \in \R$ and all $\b \in \B$.\\
Initial residual capacity of arriving on or before each time $t$: $\Delta^t= \sum_{r \in R} \kopt_r \cdot \max\{0, t-\tr\}-\widehat{F}^t$. \\
Initial residual capacity of taking route $r$ to arrive at time $t$: $\Lambda_r^t = \kopt_r$.\\
\emph{For $i=1, \dots, n$:}\\
\emph{For $t=1, \dots, T$:} Re-assign the flow $\hat{f}^t(\hat{b}_{i}) = \sum_{r \in R} \hat{x}_r^{t-\tr}(\hat{b}_i)$ of coalition $\hat{b}_i$ that arrive at time $t$.
\begin{itemize}
    \item[(a)] Determine the assignable arrival time step set: $\hat{T}= \{t' < t| \Delta^{t'} >0\} \cup \{t\}$. 
    \item[(b)] Determine assignable route set $\hat{R} = \{r \in R|\sum_{t \in \hat{T}}\Lambda_r^t >0\}$
    \item[(c)] Assign agent coalition $\bhat_i$ to a trip that takes route $\rhat$ and starts at $\hat{z} = \hat{t} - d_{\rhat}$, where $(\rhat, \hat{t})$ satisfies: 
    \begin{align*}
        \rhat = \arg\min_{\r \in \hat{R}}\{\tr\}, \quad  \hat{t}= \max_{t} \{t \in \hat{T}|\Lambda_{\rhat}^t >0\}.
    \end{align*}
     If $\hat{t}= t$ and $\Lambda_{\rhat}^{t} \geq \hat{f}^t(\bhat_i)$, then $x^{\hat{\start}*}_{\rhat}(\bhat_i)= \hat{f}^t(\bhat_i)$. Re-calculate $\Lambda_{\hat{r}}^{t} \leftarrow \Lambda_{\hat{r}}^{t}- x^{\hat{z}*}_{\rhat}(\bhat_i)$.\\
     If $\hat{t}= t$ and $\Lambda_{\rhat}^{t} < \hat{f}^t(\bhat_i)$, then $x^{\hat{\start}*}_{\rhat}(\bhat_i) =\Lambda_{\rhat}^{t}$. Re-calculate $\Lambda_{\hat{r}}^{t} \leftarrow 0$. Repeat (a) - (c). \\
     If $\hat{t}< t$ and $\min\{\Lambda_{\rhat}^{\hat{t}}, \Delta^{\hat{t}}\} \geq \hat{f}^t(\bhat_i)$, then $x^{\hat{\start}*}_{\rhat}(\bhat_i)= \hat{f}^t(\bhat_i)$. Re-calculate $\Lambda_{\hat{r}}^{\hat{t}} \leftarrow \Lambda_{\hat{r}}^{\hat{t}}- x^{\hat{z}*}_{\rhat}(\bhat_i)$, $\Delta^{j} \leftarrow \Delta^{j} - x^{\hat{z}*}_{\rhat}(\bhat_i)$ for $j=\hat{t}, \dots, t-1$.\\
     If $\hat{t}< t$ and $\min\{\Lambda_{\rhat}^{\hat{t}}, \Delta^{\hat{t}}\} < \hat{f}^t(\bhat_i)$, then $x^{\hat{\start}*}_{\rhat}(\bhat_i) =\min\{\Lambda_{\rhat}^{\hat{t}}, \Delta^{\hat{t}}\}$. Re-calculate $\Lambda_{\hat{r}}^{\hat{t}} \leftarrow \Lambda_{\hat{r}}^{\hat{t}}- x^{\hat{z}*}_{\hat{r}}(\bhat_i)$, $\Delta^{j} \leftarrow \Delta^{j} - x^{\hat{z}*}_{\rhat}(\bhat_i)$ for $j = \hat{t}, \dots, t-1$. Repeat (a) - (c).
\end{itemize}

The re-assignment proceeds to re-assign the flow of $\hat{b} \in \Bhat$ in decreasing order of their sensitivity with respect to the travel time cost. For each $\hat{b}_i$, the procedure re-assigns the flow of $\hat{b}_i$ that arrives at time $t$ in $\hat{x}$ to a time step before $t$ or at $t$. In particular, the flow $\hat{f}^t(\hat{b}_i)$ can be re-assigned to arrive at a time step $\hat{t}<t$ only if there is positive residual capacity $\Delta^{\hat{t}}$. Additionally, the re-assignment prioritizes to assign $\hat{f}^t(\hat{b}_i)$ to the route with the minimum travel time cost among all routes that have residual capacity. After assigning $\hat{f}^t(\hat{b}_i)$, the residual arrival capacity $\Delta$ and the residual route capacity $\Lambda$ are re-calculated. 

We now check that the constructed trip assignment vector is a feasible solution of \eqref{eq:LP2k}. Since we only re-assigned trips with positive weight in $\hat{x}$, we know that $\sum_{\r \in \R}\sum_{z=1}^{T} \sum_{\bbar \ni \m} x_r^{\start*}(b)\leq 1$, and thus $x^*$ satisfies \eqref{subeq:LP2k1}. Additionally, we note that in all steps of assignment, the total flow of trips that use each $r$ and starts at time $\start$ is less than the capacity in the temporal repeated flow $k_r^*$. Thus, we have $\sum_{\bbar \in \Bbar} x_r^{\start*}(b)  \leq \koptrz$ for all $\r \in \R$ and for all $\start \in T$. Therefore, \eqref{subeq:LP2k2} is satisfied. Thus, the constructed $x^*$ is a feasible solution of \eqref{eq:LP2k}. 


It remains to prove that $\xopt$ is optimal for \eqref{eq:LP2k}. We prove this by showing that $S(\xopt) \geq S(\xhat)$. The objective function $S(\xopt)$ can be written as follows:
\begin{align}\label{eq:decompose_V_new}
\sum_{\bbar \in \Bbar}\sum_{\start=1}^{T}\sum_{\r \in \R}  V_r^z(b) \xrzbopt &= \sum_{\bbar \in \Bbar}\sum_{\start=1}^{T}\sum_{\r \in \R} w(\b) \xrzbopt - \sum_{\bbar \in \Bbar}\sum_{\start=1}^{T}\sum_{\r \in \R} g(\b)\tr \xrzbopt \notag \\
& \quad - \sum_{\bbar \in \Bbar}\sum_{\start=1}^{T}\sum_{\r \in \R}\left(\sum_{\m \in \b} \delaym((\start+\tr-\arrm)_{+})\right) \xrzbopt. \end{align}
We note that the flow $\hat{f}^t(\hat{b})$ for each $t$ and $\hat{b}$ is re-assigned with the same weight to arrive on or before $t$ since the total flow that arrive on or before $t$ (i.e. $\hat{F}^t$) is no higher than $\sum_{r \in R} \kopt_r\max\{0, t-\tr\}$ (i.e. $\kopt$ is the earliest arrival flow), and a flow that arrive later than $t$ is re-assigned to arrive before $t$ only if the residual arrival capacity $\Delta^t>0$. Such re-assignment of arrival time does not occupy capacity for the flow that previously arrive on or before $t$ in $\hat{x}$. As a result, the re-assignment process must terminate with all agent coalitions being assigned with the same weight as in $\xhat$, i.e. 
$\sum_{\r \in \R}\sum_{\start=1}^{T}\xrzbopt=  \sum_{\r \in \R} \sum_{\start=1}^{T}\sum_{\r \in \R}\xhat^z_r(\b)$ for all $\b \in \B$. Therefore, 
\begin{align}\label{eq:equal_part_new}
&\sum_{\bbar \in \Bbar}\sum_{\start=1}^{T}\sum_{\r \in \R} w(\b) \xrzbopt  =  \sum_{\bbar \in \Bbar}\sum_{\start=1}^{T}\sum_{\r \in \R} w(\b) \xhat^z_r(b). 
\end{align}Additionally, in the reassignment process (a), we note that all agent coalitions $\hat{b}_1, \dots, \hat{b}_n$ with positive weight in $\hat{x}$ are assigned to arrive at a time in $\xopt$ that is no later than the arrival time in $\hat{x}$. Therefore, we know that 
\begin{align*}
     \sum_{\bbar \in \Bbar}\sum_{\start=1}^{T}\sum_{\r \in \R}\left(\sum_{\m \in \b} \delaym((\start+\tr-\arrm)_{+})\right) \xrzbopt  \leq  \sum_{\bbar \in \Bbar}\sum_{\start=1}^{T}\sum_{\r \in \R}\left(\sum_{\m \in \b} \delaym((\start+\tr-\arrm)_{+})\right) \hat{x}_r^z(b). 
\end{align*}


To prove $S(\xopt) \geq S(\xhat^*)$, we only need to show that $$\sum_{r \in \R} \sum_{\start=1}^{T} \sum_{\b \in \B} g(\b)\tr \xrzbopt \leq \sum_{r \in \R} \sum_{\start=1}^{T} \sum_{\b \in \B} g(\b)\tr \xhat_r^{z}(b)$$ 
To do this, we next show that $\xopt$ is an optimal solution of the following problem: 
\begin{equation}\label{eq:induction_new}
\begin{split}
&\xopt \in \arg\min_{x \in \X(\fhat)}\sum_{r \in \R} \sum_{\start=1}^{T} \sum_{\b \in \B} g(\b)\tr x_r^z(b), \\
s.t. \quad 
&\X(\fhat) \deleq \left\{x\left\vert
\begin{array}{l}
\sum_{\r \in \R}\sum_{j=1}^{t} x_r^{j-d_r}(b) \geq  \sum_{j=1}^{t}\fhat^j(\b), \quad \forall \b \in \B, \quad \forall t=1, \dots, T-1, \\
\sum_{\r \in \R}\sum_{\start=1}^{T} x_r^z(b)= \sum_{j=1}^{T}\fhat^j(\b), \quad \forall \b \in \B, \\
\sum_{\b \in \B}\sum_{\r \ni \e} x_r^{t-\distre}(b)\leq \qe, \quad \forall \e \in \E, \quad \forall t=1, \dots, T,\\
x_r^z(b) \geq 0, \quad \forall \r \in \R, \quad \forall \b \in \B, \quad \forall \start=1, \dots, T. 
\end{array}
\right.
\right\}.
\end{split}
\end{equation}
That is, we will show that $\xopt$ derived from the re-assignment of flow $\hat{f}$ induced by $\hat{x}$ minimizes the value of 
\[\sum_{r \in \R} \sum_{\start=1}^{T} \sum_{\b \in \B} g(\b)\tr x_r^z(b)\]across all trip organization vectors that allocate the same weight of each $\b$ as in $\hat{x}$ and do not increase the arrival time of any flow of $\b$. In particular, $X(\hat{f})$ characterizes the set of all such $x$: the first constraint ensures that the flow of each $b$ given $x$ arrives at the destination at time on or before $t$ is no less than that in $\hat{x}$; The second constraint ensures that the total flow of each $b$ is assigned with the same weight in $x$ as that in $\hat{x}$; The third constraint ensures that $x$ satisfies the edge capacity constraint in all time steps; and the last constraint ensures the non-negativity of $x$. The intuition of \eqref{eq:induction_new} is that in the re-assignment procedure, agent coalition $\hat{b}$ with higher sensitivity with respect to travel time cost is assigned first, and prioritized to take shorter routes.

We prove \eqref{eq:induction_new} by mathematical induction. To begin with, \eqref{eq:induction_new} holds trivially on any single-link network since no-reassignment is needed with a single route. We next prove that if \eqref{eq:induction_new} holds on two series-parallel sub-networks $\Gp$ and $\Gpp$, then \eqref{eq:induction_new} holds on the network $\G$ that connects $\Gp$ and $\Gpp$ in series or in parallel. In particular, we analyze the cases of series connection and parallel connection separately: 

\vspace{0.2cm}
\noindent\emph{(Case 1)} Series-parallel network $G$ is formed by connecting two series-parallel sub-networks $\Gp$ and $\Gpp$ in series. We denote the set of routes in sub-network $\Gp$ and $\Gpp$ as $\Rp$ and $\Rpp$, respectively. Since $\Gp$ and $\Gpp$ are connected in series, the set of routes in network $\G$ is $\R \deleq \Rp \times \Rpp$. 

We define the set of feasible trip vectors in the sub-network $\Gpp$ that can induce a flow satisfying the arrival time constraint as follows: 
{\small \begin{align*}
    X^2(\fhat)  
    \deleq \left\{x\left\vert
\begin{array}{l}
\sum_{\rpp \in \Rpp} \sum_{\b \in \B}x_{\rpp}^{\zpp}(b) 
 \leq \sum_{\rp \in \Rp} \max\{0, \zpp-d_{\rp}\} \left(\sum_{\rpp \in \Rpp} k_{\rp\rpp}^*\right), ~\forall \zpp\in [T], \\ \sum_{\rpp \in \Rpp}\sum_{\zpp=1}^{T} x_{\rpp}^{\zpp}(\b) = \sum_{j=1}^{T}\fhat^j(\b), \quad \forall \b \in \B, \\
\sum_{\rpp \in \Rpp}\sum_{j=1}^{t} x_{\rpp}^{j-d_{r^2}}(b) \geq  \sum_{j=1}^{t}\fhat^j(\b), \quad \forall \b \in \B, \quad \forall t \in [T-1], \\
\sum_{\b \in \B}\sum_{\rpp \ni \e} x_{\rpp}^{t-d_{\rpp,e}}(b) \leq \qe, \quad \forall \e \in \E^2, \quad \forall t\in [T],\\
x_{\rpp}^{\zpp}(b) \geq 0, \quad \forall \rpp \in \Rpp, \quad \forall \b \in \B, \quad \forall \zpp\in [T]. 
\end{array}
\right.
\right\},
\end{align*}}
where the first constraint ensures that the flow departing from the origin of $\Gpp$ at any $\zpp$ does not exceed the maximum flow that can arrive at the destination of $\Gp$ before $\zpp$. 

Given $\hat{x}$ (and the induced flow vector $\hat{f}$), the trip vector obtained from the re-assignment procedure restricted to the sub-network $\Gpp$ is $\x^{2*}=(\x^{\zpp*}_{\rpp}(b))_{\rpp \in \Rpp, b \in B, \zpp\in [T]}$, where 
\begin{align*}
 \x^{\zpp*}_{\rpp}(b) = \sum_{\rp \in \Rp} x^{\zpp- d_{\rp}*}_{\rp\rpp}(b), \quad \forall \rpp \in \Rpp, \quad \forall \zpp\in [T], \quad \forall b \in B.
\end{align*}
We can check that $\x^{2*} \in X^2(\fhat)$. Therefore, according to the induction assumption, we have 
\begin{align}\label{eq:comparison_1}
\sum_{\rpp \in \Rpp} \sum_{\zpp=1}^{T} \sum_{\b \in \B} g(\b)d_{\rp} x_{\rpp}^{\zpp*}(b) \leq \sum_{\rpp \in \Rpp} \sum_{\zpp=1}^{T} \sum_{\b \in \B} g(\b)d_{\rpp} x_{\rpp}^{\zpp}(b), \quad \forall \xpp \in X^2(\fhat). 
\end{align}
Additionally, given any $x^{2*}$, the set of feasible trip vector restricted to the sub-network $\Gp$ is given by 
\begin{align*}
    X^1(\fhat)  \deleq \left\{x^1\left\vert
\begin{array}{l}
\sum_{\rp \in \Rp}\sum_{\zp=1}^{T} x_{\rp}^{\zp}(\b) = \sum_{j=1}^{T}\sum_{\rpp \in \Rpp}x^{j*}_{\rpp}(\b), \quad \forall \b \in \B, \\
\sum_{\rp \in \Rp}\sum_{j=1}^{t} x_{\rp}^{j-d_{\rp}}(b) \geq  \sum_{j=1}^{t}\sum_{\rpp \in \Rpp}x^{j*}_{\rpp}(\b), \quad \forall \b \in \B, \quad \forall t \in [T-1], \\
\sum_{\b \in \B}\sum_{\rp \ni \e} x_{\rp}^{t-d_{\rp,e}}(b) \leq \qe, \quad \forall \e \in \E^1, \quad \forall t\in [T],\\
x_{\rp}^{\zp}(b) \geq 0, \quad \forall \rp \in \Rp, \quad \forall \b \in \B, \quad \forall \zp\in [T]. 
\end{array}
\right.
\right\}
\end{align*}
We consider $x^{1*}= (\x^{\zp*}_{\rp}(b))_{\rp \in \Rp, b \in B, \zp \in [T]}$, where 
\begin{align*}
    \x^{\zp*}_{\rp}(b)= \sum_{\rpp \in \Rpp} x^{\zp*}_{\rp\rpp}(b), \quad \forall \rp \in \Rp, \quad \forall \zp \in [T], \quad \forall b \in B.
\end{align*}
Analogous to our argument on $\Gpp$, we can check that $\x^{1*} \in X^1(\fhat)$. Again from the induction assumption, $\x^{1*}$ satisfies 
\begin{align}\label{eq:comparison_2}
\sum_{\rp \in \Rp} \sum_{\zp=1}^{T} \sum_{\b \in \B} g(\b)d_{\rp} x_{\rp}^{\zp*}(b) \leq \sum_{\rp \in \Rp} \sum_{\zp=1}^{T} \sum_{\b \in \B} g(\b)d_{\rp} x_{\rp}^{\zp}(b), \quad \forall \xp \in X^1(\fhat). 
\end{align}
From \eqref{eq:comparison_1} and \eqref{eq:comparison_2}, we obtain that 
\begin{align*}
    &\sum_{r \in \R} \sum_{z=1}^{T} \sum_{\b \in \B} g(\b)\tr x_r^z(b) \\
    =& \sum_{\rp \in \Rp}  \sum_{z=1}^{T}  \sum_{\b \in \B} g(\b)d_{\rp} \(\sum_{\rpp \in \Rpp} \x^z_{\rp\rpp}(\b)\)+ \sum_{\rpp \in \Rpp}\sum_{\start=1}^{T} \sum_{\b \in \B} g(\b)d_{\rpp} \(\sum_{\rp \in \Rp} \x^z_{\rp\rpp}(\b)\)\notag \\
        =&\sum_{\rp \in \Rp} \sum_{\zp=1}^{T}\sum_{\b \in \B} g(\b)d_{\rp} x^{\zp}_{\rp}(\b)+ \sum_{\rpp \in \Rpp} \sum_{\zpp=1}^{T} \sum_{\b \in \B} g(\b)d_{\rpp} x^{\zpp}_{\rpp}(\b) \\
        \geq & \sum_{\rp \in \Rp} \sum_{\zp=1}^{T}\sum_{\b \in \B} g(\b)d_{\rp} x^{\zp*}_{\rp}(\b)+ \sum_{\rpp \in \Rpp} \sum_{\zpp=1}^{T} \sum_{\b \in \B} g(\b)d_{\rpp} x^{\zpp*}_{\rpp}(\b) \\
        =& \sum_{r \in \R} \sum_{z=1}^{T} \sum_{\b \in \B} g(\b)\tr x_r^{\start*}(b).
\end{align*}
Thus, we have proved that \eqref{eq:induction_new} holds on $G$ when $\Gp$ and $\Gpp$ are connected in series.

\noindent\emph{(Case 2)} {Series-parallel network} $G$ is formed by connecting two series-parallel networks $G_1$ and $G_2$ in parallel. Same as case 1, we denote $\Rp$ (resp. $\Rpp$) as the set of routes in $\Gp$ (resp. $\Gpp$). Then, the set of all routes in $\G$ is $\R = \Rp \cup \Rpp$. 

Given any $\fhat$, we compute $\xopt$ from the re-assignment procedure in network $\G$. We denote $\foptp(b) = \sum_{\rp \in \Rp}  x_{\rp}^{t-\trp*}(b)$ (resp. $\foptpp(b) = \sum_{\rpp \in \Rpp}  x_{\rpp}^{t-\trpp*}(b)$) as the total flow of agent coalition $b$ that arrives at the destination at time $t$ using routes in the sub-network $\Gp$ (resp. $\Gpp$) given the organization vector $\xopt$. We now denote $\xoptp$ (resp. $\xoptpp$) as the trip vector $\xopt$ restricted on sub-network $\Gp$ (resp. $\Gpp$), i.e. $\xoptp = \(\x^{z*}_{\rp}(\b)\)_{\rp \in \Rp, \b \in \B, z=1, \dots, T}$ (resp. $\xoptpp = \(\x^{z*}_{\rpp}(\b)\)_{\rpp \in \Rpp, \b \in \B, z=1, \dots, T}$). We can check that $\xoptp$ (resp. $\xoptpp$) is the trip  vector obtained by the re-assignment procedure given the total flow $\foptp$ (resp. $\foptpp$) on network $\Gp$ (resp. $\Gpp$). 

Consider any arbitrary split of the total flow $\fhat$ to the two sub-networks, denoted as $\(\fhatp, \fhatpp\)$, such that $\fhat^{t,1}(\b)+\fhat^{t,2}(\b)=\fhat^t(\b)$ for all $\b \in \B$ and all $t=1, 2, \dots, T$. Given $\fhatp$ (resp. $\fhatpp$), we denote the trip  vector obtained by the re-assignment procedure on sub-network $\Gp$ (resp. $\Gpp$) as $\xopthatp$ (resp. $\xopthatpp$). We also define the set of feasible trip  vectors on sub-network $\Gp$ (resp. $\Gpp$) that induce the total flow $\fhatp$ (resp. $\fhatpp$) given by \eqref{eq:induction_new} as $\Xp(\fhatp)$ (resp. $\Xpp(\fhatpp)$). Then, the set of all trip  vectors that induce $\fhat$ on network $G$ is $\X(\fhat) = \cup_{\(\fhatp, \fhatpp\)} (\Xp(\fhatp), \Xpp(\fhatpp))$. 

Under our assumption that \eqref{eq:induction_new} holds on sub-network $\Gp$ and $\Gpp$ with any total flow, we know that given any flow split $\(\fhatp, \fhatpp\)$, 
\begin{align*}
&\sum_{r \in \R} \sum_{\start=1}^{T} \sum_{\b \in \B} g(\b)\trp \hat{x}_{\rp}^{z*}(b)+ \sum_{r \in \R} \sum_{\start=1}^{T} \sum_{\b \in \B} g(\b)\trpp \hat{x}_{\rpp}^{z*}(b) \\
\leq &\sum_{r \in \R} \sum_{\start=1}^{T} \sum_{\b \in \B} g(\b)\trp \hat{x}_{\rp}^{z}(b)+ \sum_{r \in \R} \sum_{\start=1}^{T} \sum_{\b \in \B} g(\b)\trpp \hat{x}_{\rpp}^{z}(b), 
    \quad \quad \forall \xhatp \in \X(\fhatp), \quad \forall \xhatpp \in \X(\fhatpp).
\end{align*}
Therefore, the optimal solution of \eqref{eq:induction_new} must be a trip  vector $\(\xopthatp, \xopthatpp\)$ associated with a flow split $\(\fhatp, \fhatpp\)$. It thus remains to prove that any $\(\xopthatp, \xopthatpp\)$ associated with flow split $\(\fhatp, \fhatpp\) \neq \(f^{1*}, f^{2*}\)$ cannot be an optimal solution (i.e. can be improved by re-arranging flows). 

For any $\(\fhatp, \fhatpp\) \neq\(f^{1*}, f^{2*}\)$, we can find a coalition $\bj$ and a time $t$ such that $\fhat^{t,1}(\bj) \neq f^{t,1*}(\bj)$ (henceforth $\fhat^{t,2}(\bj) \neq f^{t,2*}(\bj)$). We denote $\bjhat$ as one such coalition with the maximum $g(\b)$, and $\that$ as the minimum of such time step, i.e. $\fhat^{t,i}(\bj)= f^{t,i*}(\bj)$ for any $i=1, 2$, any $j= 1, \dots, \jhat-1$ and any $t=1, \dots, T$. Additionally, $\fhat^{t,i}(b_{\hat{j}})= f^{t,i*}(b_{\hat{j}})$ for $i=1, 2$, and $t=1, \dots \that-1$. Since coalitions $\b_1, \dots, \b_{\jhat-1}$ are assigned before coalition $\bjhat$, we know that $\xhat^{z*}_{\rp}(\bj)=x^{z*}_{\rp}(\bj)$ and $\xhat^{z*}_{\rpp}(\bj)=x^{z*}_{\rpp}(\bj)$ for all $\rp \in \Rp$, all $\rpp \in \Rpp$ and all $j=1, \dots, \jhat-1$. 

Without loss of generality, we assume that $\fhat^{\that,1}(\bjhat)> f^{\that,1*}(\bjhat)$ and $\fhat^{\that,2}(\bjhat)< f^{\that,2*}(\bjhat)$. Then, there must exist routes $\rphat \in \Rp$ and $\rpphat \in \Rpp$, and departure time $\zp, \zpp$ such that $d_{\rphat} + \zp \leq \that$, $d_{\rpphat} + \zpp \leq \that$, $\xhat^{\zp*}_{\rphat}(\bjhat) > \x^{\zp*}_{\rphat}(\bjhat)$ and $\xhat^{\zpp*}_{\rpphat}(\bjhat) < \x^{\zpp*}_{\rpphat}(\bjhat)$. Moreover, since $\xopt$ assigns coalition $\bjhat$ to routes with the minimum travel time cost that are unsaturated after assigning coalitions $b_1, \dots, \b_{\jhat-1}$ with all arrival time $t$ and coalition $\bjhat$ with arrival time earlier than $\that$, we have $d_{\rpphat} < d_{\rphat}$. If the total flow on route $\rpphat$ with departure time $\zpp$ is less than $k_{\rpp}^{*}$ (unsaturated) given $\xopthatpp$, then we decrease $\xhat^{\zp*}_{\rphat}(\bjhat)$ and increase $\xhat^{\zpp*}_{\rpphat}(\bjhat)$ by a small positive number $\epsilon>0$. We can check that the objective function of \eqref{eq:induction_new} is reduced by $\epsilon (d_{\rphat}- d_{\rpphat})\epsilon g(\bjhat) > 0$. On the other hand, if route $\rpphat$ with departure time $\zpp$ is saturated, then another coalition $\b_{\jhat'}$ with $\jhat' > \jhat$ must be assigned to $\rpphat$ with departure time $\zpp$. Then, we decrease $\x^{\zp*}_{\rphat}(\bjhat)$ and $\x^{\zpp*}_{\rpphat}(\b_{\jhat'})$ by $\epsilon>0$, increases $\x^{\zp*}_{\rphat}(\b_{\jhat'})$ and $\x^{\zpp*}_{\rpphat}(\b_{\jhat})$ by $\epsilon$ (i.e. exchange a small fraction of coalition $\bjhat$ with coalition $\b_{\jhat'}$). Note that $g(\bjhat)> g(\b_{\jhat'})$ and $d_{\rphat}> d_{\rpphat}$. We can thus check that the objective function of \eqref{eq:induction_new} is reduced by $\epsilon (d_{\rphat} g(\bjhat)- d_{\rpphat}g(\b_{\jhat+1}))\epsilon > 0$. Therefore, we have found an adjustment of trip vector $\(\xopthatp, \xopthatpp\)$ that reduces the objective function of \eqref{eq:induction_new}. Hence, for any flow split $\(\fhatp, \fhatpp\) \neq \(\foptp, \foptpp\)$, the associated trip  vector $\(\xopthatp, \xopthatpp\)$ is not the optimal solution of \eqref{eq:induction_new}. The optimal solution of \eqref{eq:induction_new} must be constructed by the re-assignment procedure with flow split $\(\foptp, \foptpp\)$, i.e. must be $\xopt$. 

We have shown from cases 1 and 2 that if the solutions derived from the re-assignment procedure minimizes \eqref{eq:induction_new} on the two series-parallel sub-networks, then $\xopt$ derived from the re-assignment procedure must also minimize \eqref{eq:induction_new} on the connected series-parallel network. Moreover, since \eqref{eq:induction_new} is minimized trivially when the network is a single edge, and any series-parallel network is formed by connecting series-parallel sub-networks in series or parallel, we can conclude that $\xopt$ obtained from the re-assignment procedure minimizes the objective function in \eqref{eq:induction_new} for any flow vector $\fhat$ on any series-parallel network.  

From \eqref{eq:decompose_V_new}, \eqref{eq:equal_part_new} and \eqref{eq:induction_new}, we can conclude that $S(\xopt) \geq S(\xhat)$. Since $\xopt$ is a feasible solution of \eqref{eq:LP2k}, the optimal value of \eqref{eq:LP2k} must be no less than that of \eqref{eq:LP1bar}. On the other hand, since the constraints in \eqref{eq:LP1bar} are less restrictive than that in \eqref{eq:LP2k}, the optimal value of \eqref{eq:LP2k} is no higher than that of \eqref{eq:LP1bar}. Therefore, the optimal value of \eqref{eq:LP2k} equals to that of \eqref{eq:LP1bar}, and any optimal solution of \eqref{eq:LP2k} must also be an optimal solution of \eqref{eq:LP1bar}. \QEDA

\medskip

\noindent\emph{Proof of Proposition \ref{prop:extension}.} We prove (i) -- (iii) in sequence. 
\begin{itemize}
    \item[(i)] For each $e \in E$, we allocate edge capacity $q_{e,i}$ to each population $i \in I$ such that $\sum_{i \in I} q_{e, i}=q_e$. We consider the separate market for population $i$ with edge capacity vector $q_i = (q_{e, i})_{e \in E}$. From Theorem \ref{theorem:sp}, equilibrium exists in this sub-market since the network is series-parallel and agents have homogeneous capacity sharing disutilities. {In equilibrium, agents only form coalitions} within each sub-market, and edge prices are population-specific. 
    \item[(ii)] Consider any feasible flow vector $f$ in the network, we create ``slot" set $L$ that includes $|f_r|$ number of slots for each path $r \in \hat{R}:= \{R|f_r>0\}$ that is the support set of flow vector $f$. Since $|I|=1$, all agents have homogeneous capacity sharing disutilities. Following from Lemmas \ref{lemma:condition_gross}-- \ref{lemma:integer}, the following linear program associated with the selected flow vector $f$ has an integer optimal solution
    \begin{subequations}
        \begin{align*}
    \max_{\xbarr} \quad &S(\xbarr)=\sum_{\ball \in \Ball}\sum_{\r \in \R} \barV_{\r}(\ball) \xbarr_{\r}(\ball), \\
  s.t. \quad  & \sum_{\ball \ni \m} \sum_{\r \in \R}\xbarr_{\r}(\ball) \leq 1, \quad \forall \m \in \M,  \\
&\sum_{\ball \in \Ball}\xbarr_{\r}(\ball)  \leq f_r, \quad \forall \r \in \R,\\
 &\xbarr_{r}(\ball) \geq 0, \quad \forall \ball \in \Ball, \quad \forall \r \in \R.
    \end{align*}
    \end{subequations}
    and $\bar{x}^*$ can be converted to the original trip allocation vector $x^*$ using \eqref{eq:construct}. The dual optimal solution of the linear program $\lambda^*= (\lambda^*_r)_{r \in \hat{R}}$ provides equilibrium prices for routes in the support set $\hat{R}$, and $u^*= (u^*_m)_{m \in M}$ is the utility of all agents. Following Proposition \ref{prop:primal_dual}, $(x^*, p^*, \lambda^*)$ is a market equilibrium when restricted to the sub-network determined by flow $f$. By setting $\lambda^*_{r} = \max_{(b, r) \in \B \times \R} V_{r}(b)$ (a sufficiently large number), we can further verify that no subset of agents will deviate to take routes not in the support set. 
    \item[(iii)] The argument holds from combining (i) -- (ii). 
\end{itemize}\hfill $\square$

We now provide a formal description of the Branch-and-Price algorithm described in Section \ref{sec:multipop}.

\begin{algorithm}[ht]
\caption{Branch and price algorithm for solving \eqref{IP-mult}}\label{alg:branchAndPrice}
    Compute $x^*, q^*$ as an optimal solution to the LP-relaxation of \eqref{IP-mult}. \label{line:LPrelax} \\
    \eIf{$q^*$ is integral}{
    \For{$i \in I$}{
        Compute the optimal trip organization of submarket $i$ with respect to capacity $q^*$ using Algorithm \ref{alg:allocation}. 
        }
        \Return{the optimal trip organization}
    }{
    Choose an arbitrary $i \in I, r \in R_i$ such that $q^{*i}_r$ is fractional\\ 
    Recursively call Algorithm \ref{alg:branchAndPrice} to compute      $x^{(1)}$ as the integer optimal solution when constraint $q^{i}_r \leq \left\lfloor q^{*i}_r \right\rfloor$ is added, and 
        $x^{(2)}$ as the integer optimal solution when constraint $q^{i}_r \geq \left\lceil q^{*i}_r \right\rceil$ is added. \label{line:recCall2}\\    
        \Return{$\arg\max\{ S(x^{(1)}), S(x^{(2)})\}$} 
         }
    
\end{algorithm}

\FloatBarrier

The branch and price algorithm start by computing an optimal solution of the linear relaxation of \eqref{IP-mult} (Line 1). If the optimal solution has an integral capacity allocation vector $q^*$, then we know that by using Algorithm \ref{alg:allocation}, we can compute the integral equilibrium trip organization vector and the associated edge prices and payments for each submarket (Line 3-6). If there exists at least one $(i, r)$ such that  $q^{*i}_r$ (the capacity allocated to population $i$ on route $r$) is fractional, we branch on the variable $q^{i}_r$ to create two sub-problems, where either $q^{i}_r \leq \lfloor q^{*i}_r \rfloor $ or $q^{i}_r \geq \lceil q^{*i}_r \rceil$ is added as an additional constraint. We resolve the linear relaxation associated with each subproblem, and continue to add additional constraints until we obtain an integer solution (Line 8-10). In our implementation, we also incorporate a pruning step -- if the optimal value of the linear relaxation of a subproblem is smaller than the best integer solution that has been found, then we stop branching on that subproblem.


The key step of Algorithm \ref{alg:branchAndPrice} is to repeatedly compute the linear relaxation of the integer trip organization problem with additional constraints on $q$ that has been added in the branching process. Any such linear program has exponential number of variables -- the trip vector $x$. We compute the optimal (fractional) solution using column generation method. That is, we start by solving a restricted linear program that only includes a (small) subset of trips. Whether or not such solution is an optimal solution of the original LP can be verified by the dual feasibility. Recall that a violated dual constraint can be found in polynomial time using the greedy algorithm as in (Line 5-14 in Algorithm \ref{alg:allocation}) due to the property that the augmented trip valuation function in each sub-market satisfies the gross substitutes condition. If we detect a violated constraint, we add the corresponding trip variable and resolve the primal problem; otherwise, we terminate with an optimal solution of the original LP.

\section{Application for carpooling and toll pricing in San Francisco Bay Area}\label{subsec:numerical}

We apply our algorithm to the problem of designing the optimal carpooling and toll pricing for the highway network in the San Francisco Bay Area. In this problem, the edge price is the toll price of using that highway segment in a time period, and a shared trip is a carpool trip, where a group of travelers $b$ decides the departure time $z$ and route $r$ that connects their origin and destination.

\begin{figure}[ht]
  \begin{center}
\includegraphics[width=0.45\textwidth]{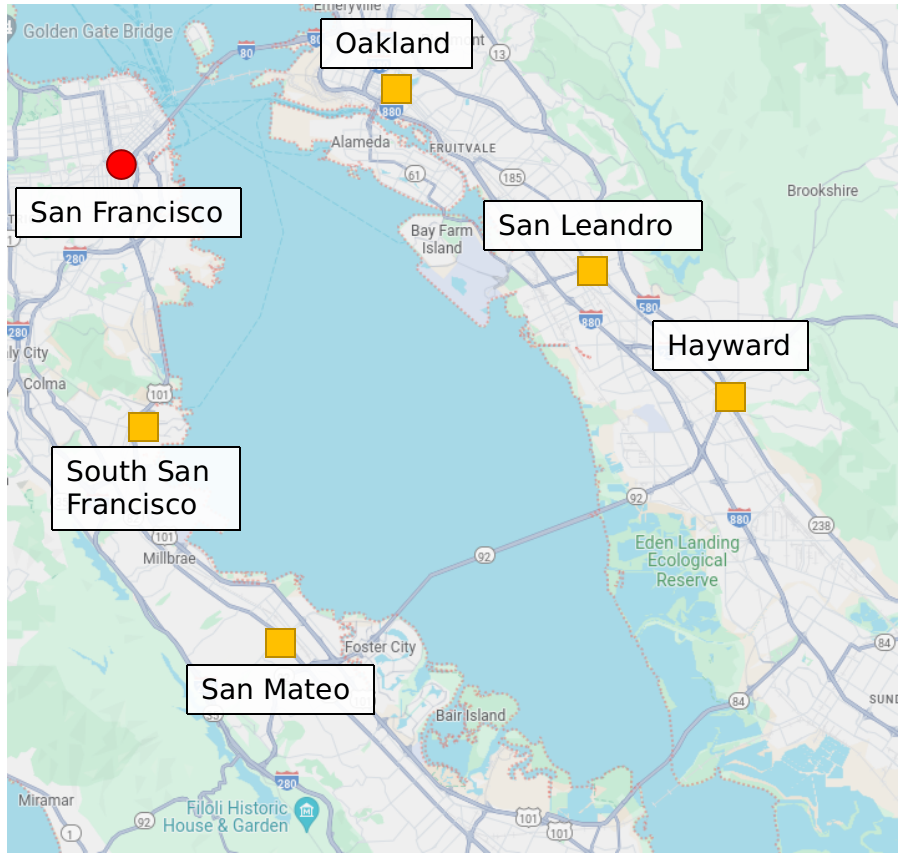}
\end{center}
    \caption{Bay area network.}
    \label{fig:SFNetwork}
\end{figure}

\noindent\textbf{Network}. We consider a network with six cities San Francisco, Oakland, San Leandro, Hayward, San Mateo, and South San Francisco, and major highways that connect them (Figure \ref{fig:SFNetwork}). San Francisco city (SF) is the common destination, and the remaining five nodes are the origins. 
We calibrate the capacity of each edge based on the traffic flow data collected from the highway sensors provided by the California Department of Transportation (https://pems.dot.ca.gov/). We consider each time interval to be 5 minutes, and the entire time period is $T = 60$ minutes between 8am and 9am on workdays. 



\medskip 
\noindent\textbf{Populations}. For each origin-destination pair $(o_i, \mathrm{SF})$, where $o_i$ is a city in Oakland, San Leandro, Hayward, San Mateo, and South San Francisco, agents traveling from $o_i$ to $\mathrm{SF}$ are divided into three populations, each with high (H), medium (M) and low (L) value of time, and disutilities of trip sharing. We estimate the number of agents in each population and their preference parameters based on the driving commuter population size, and their income distributions using the data collected from Safegraph (https://www.safegraph.com/), and US Census of Bureau (https://www.census.gov/).

The parameters of populations L, M, H are presented in the table below. 
\begin{table}[ht]
\begin{tabular}{@{}llll@{}}
\toprule
Parameter & Low    & Medium     &  \\ \midrule
$\alpha_m$   & Uniformly distributed on $[30, 70]$  & Uniformly distributed on $[80, 120]$  &  \\
$\beta_m$   & $\frac{10}{60}$   & $\frac{30}{60}$   &  \\
$\theta_m$   &  Uniformly distributed on $[40, 60]$  &  Uniformly distributed on $[40, 60]$ & \\
$\ell_m$      & $\delaym((\start+\tr-\arrm)_{+}) = (\start+\tr-\arrm)_{+}$            & $\delaym((\start+\tr-\arrm)_{+}) = (\start+\tr-\arrm)_{+}$ & \\
$\pi_m(|b|) + \gamma_m(|b|) d_r$           & \begin{tabular}[c]{@{}l@{}}$\pi_L(|b|) = \begin{cases} 0.25(|b|-1), & |b| \leq 5 \\ 0.5(|b|-1), & 5 < |b| \leq 10 \\ \infty, & |b| > 10 \end{cases}$\\ $\gamma_m(|b|) = 0$\end{tabular} & \begin{tabular}[c]{@{}l@{}}$\pi_M(|b|) = \begin{cases} 2(|b|-1), & |b| \leq 3 \\ 4(|b|-1), & |b| =4 \\ \infty, &  |b| > 4\end{cases}$\\ $\gamma_m(|b|) = 0$\end{tabular} &  \\ \bottomrule
\end{tabular}
\end{table}

\begin{table}[ht]
\centering 
\begin{tabular}{@{}lll@{}}
\toprule
Parameter          & High    &  \\ \midrule
$\alpha_m$          & Uniformly distributed on $[180, 220]$     &  \\
$\beta_m$                & $\frac{90}{60}$     &  \\
$\theta_m$        &  Uniformly distributed on $[40, 60]$    &  \\
$\ell_m$                 &  $\delaym((\start+\tr-\arrm)_{+}) = (\start+\tr-\arrm)_{+}$    &  \\
$\pi_m(|b|) + \gamma_m(|b|) d_r$      & \begin{tabular}[c]{@{}l@{}}$\pi_H(|b|) = \begin{cases} 4(|b|-1), & |b| \leq 2 \\ 8(|b|-1), & |b| = 3 \\ \infty, & |b| > 3\end{cases}$\\ $\gamma_m(|b|) = 0$\end{tabular} &  \\ \bottomrule
\end{tabular}
\end{table}

\FloatBarrier
Five origin-destination pairs are considered in this instance, namely (Oakland, San Francisco), (South San Francisco, San Francisco), (Hayward, San Francisco), (San Mateo, San Francisco), and (San Leandro, San Francisco). We summarize the demand distribution, carpool sizes, and payments in the following figures:


\begin{table}[h]
\centering 
\begin{tabular}{@{}llccc@{}}
\toprule
 ~ &Source              & Population L & Population M & Population H ~ \\ \midrule
 &Oakland             & 55           & 50           & 57           \\
&San Leandro         & 43           & 39           & 30           \\
&Hayward             & 20           & 22           & 12           \\
&South San Francisco & 18           & 19           & 23           \\
&San Mateo           & 13           & 13           & 31           \\
 \bottomrule
\end{tabular}
\caption{Distribution of Demand}
\end{table}

\medskip

\noindent\textbf{Results}. We compute the optimal capacity allocation, the equilibrium carpool sizes, payments and toll prices using Algorithm \ref{alg:branchAndPrice}. We summarize our observations below:

\noindent\emph{1. Carpooling sizes}: In the optimal solution, the L populations from all origins form carpools of size 3 or 4, the M populations form carpools of size 1 or 2, and 95\% of population H does not carpool. \begin{table}[ht]
\centering
\begin{tabular}{@{}cccc@{}}
\toprule
Carpool Size & \% of Population L & \% of Population M & \% of Population H \\ \midrule
1            & 29.5\%                    & 30.3\%                    & 94.5\%                    \\
2            & 8.2\%                     & 51.3\%                    & 5.5\%                     \\
3            & 50.8\%                    & 18.4\%                    & 0\%                       \\
4            & 11.5\%                    & 0\%                       & 0\%                       \\ \bottomrule
\end{tabular}
\caption{Distribution of Carpool Sizes}\label{table:carpoolDistr}
\end{table}
\FloatBarrier

\noindent\emph{2. Payments}: We compute the equilibrium payment vector $p^*$ as in \eqref{eq:p}. The average payment-per-agent for the L populations is \$2, for the M populations is \$4, and for the H populations is around \$5. 
The following table presents some summary statistics of the payment distribution. 
\begin{table}[ht]
\centering
\begin{tabular}{@{}lccc@{}}
\toprule
Payment (\$)     & \multicolumn{1}{l}{Population L} & \multicolumn{1}{l}{Population M} & \multicolumn{1}{l}{Population H} \\ \midrule
Minimum          & 1                                & 1                                & 1                                \\
25th percentile  & 1.75                             & 2                                & 5                                \\
Median           & 2                                & 3                                & 9                                \\
75th percentile  & 2                                & 3                                & 9                                \\
Maximum          & 2                                & 4                                & 9                                \\
\textbf{Average} & \textbf{1.81}                    & \textbf{2.64}                    & \textbf{7.38}                    \\ \bottomrule
\end{tabular}
\caption{Distribution of Payments}\label{table:payments}
\end{table}

\noindent\emph{3. Tolls}: The median toll per route is \$3 for the L populations, \$4 for M populations, and \$8 for the H populations. We plot the dynamic toll prices of all routes in Figure \ref{fig:tollDistr}.

    \begin{figure}[ht]
    \centering
    \includegraphics[width=\textwidth]{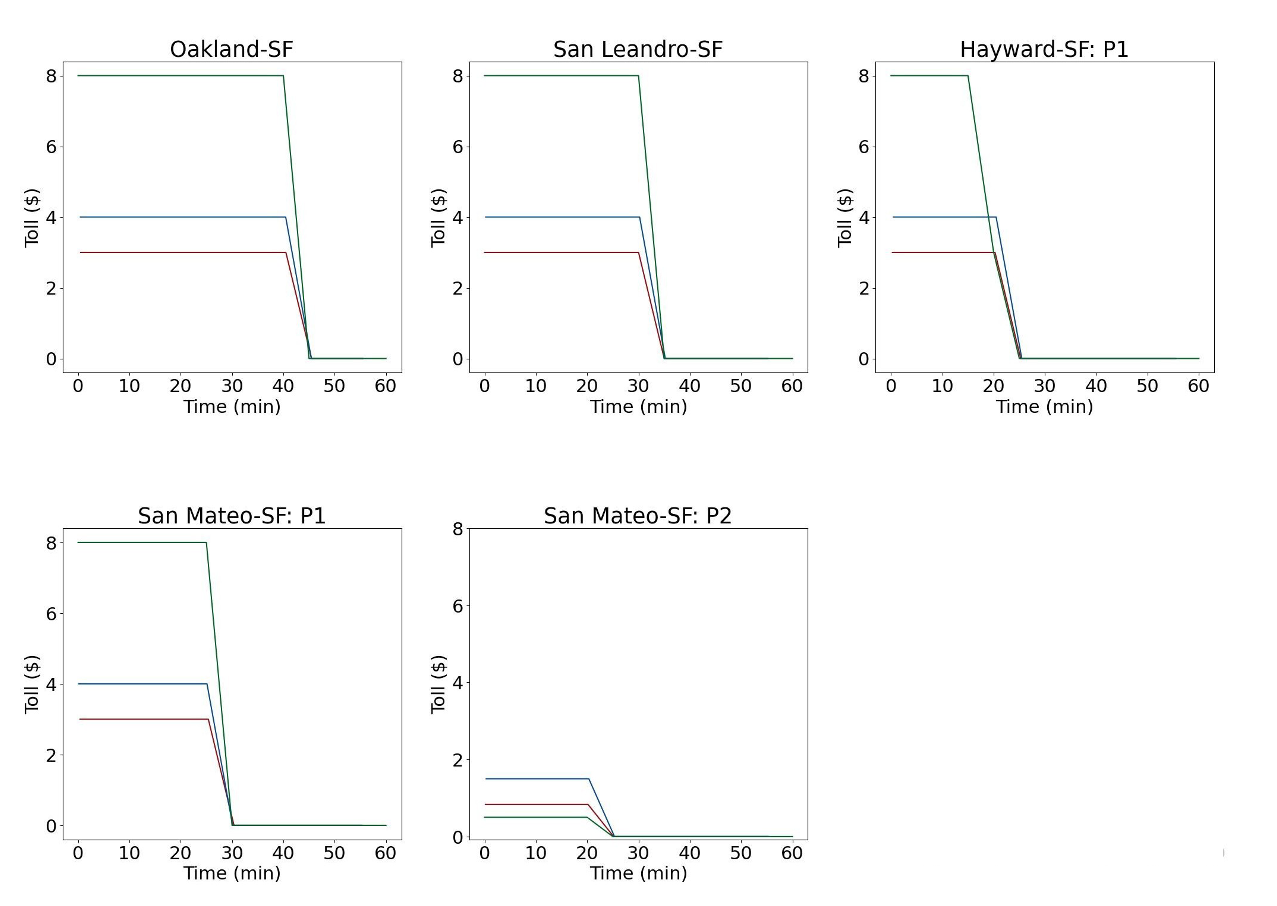}
    \caption{Dynamic toll prices of all routes for H, M, L and populations illustrated in green, blue, and red lines respectively.}
    \label{fig:tollDistr}
    \end{figure}
\FloatBarrier

\end{document}